\documentclass[10pt,pdfa]{article}
\usepackage[in]{fullpage}

\usepackage[style=alphabetic,minalphanames=3,maxalphanames=4,maxnames=99,backref=true]{biblatex}

  \addbibresource{crypto.bib}
 
\setcounter{biburlnumpenalty}{100}
\setcounter{biburlucpenalty}{100}
\setcounter{biburllcpenalty}{100}
\usepackage{url}
\usepackage{bbm}
\Urlmuskip=0mu plus 1mu
\usepackage{hyperref} 
\hypersetup{breaklinks=true}

\usepackage[normalem]{ulem}

\def\O{\mathcal{O}}

\usepackage{mdframed}
\usepackage{dsfont}
\usepackage{tabularx}
\usepackage[normalem]{ulem}
\usepackage{comment}
\usepackage[shortlabels]{enumitem}
\usepackage{titlesec}
\usepackage{bm}
\usepackage{amsfonts}
\usepackage{xspace}
\usepackage{amsmath}
\usepackage{amssymb}
\usepackage{amsthm}
\usepackage{xcolor}
\usepackage{graphicx}
\usepackage{mathrsfs}
\usepackage{qtree}
\usepackage{tree-dvips}
\usepackage{float}
\usepackage{physics}
\usepackage{authblk}
\usepackage{braket}
\usepackage[T1]{fontenc}
\usepackage{mathtools}
\usepackage{mathrsfs}
\usepackage{tikz}
\usepackage[linesnumbered,ruled,vlined]{algorithm2e}
\usepackage{qcircuit}
\usepackage{algorithmic}
\usepackage[nameinlink,capitalize]{cleveref}

\newcommand{\subversion}[1]{}
\usepackage{bm} 

  \hypersetup{colorlinks={true},linkcolor={blue},citecolor=magenta}

  \theoremstyle{plain}
  \newtheorem{theorem}{Theorem}[section]
  \newtheorem{lemma}[theorem]{Lemma}
  \newtheorem{corollary}[theorem]{Corollary}
  \newtheorem{definition}[theorem]{Definition}
  \newtheorem{remark}[theorem]{Remark}
  \newtheorem{claim}[theorem]{Claim}
  
 \newtheorem{construction}{Construction}

\usepackage[style=alphabetic,minalphanames=3,maxalphanames=4,maxnames=99,backref=true]{biblatex}

\newtheorem*{rep@theorem}{\rep@title}
\newcommand{\newreptheorem}[2]{%
\newenvironment{rep#1}[1]{%
 \def\rep@title{#2 \ref{##1}}%
 \begin{rep@theorem}}%
 {\end{rep@theorem}}}
\makeatother

\newreptheorem{theorem}{Theorem}
\newreptheorem{lemma}{Lemma}
\newreptheorem{corollary}{Corollary}

\newcommand{\bit}{\{0,1\}}

\newcommand{\ignore}[1]{}

\newcommand{\proj}[1]{\ensuremath{|#1\rangle \langle #1|}}

\newcommand{\E}{\mathop{\mathbb{E}}}

\newcommand{\N}{\mathbb{N}}

\newcommand{\negl}{\mathsf{negl}}

\renewcommand{\cal}[1]{\mathcal{#1}}

\newcommand\id{\mathbb{I}}

\newcommand{\TD}{\mathsf{TD}}

\newcommand{\poly}{\mathrm{poly}}
\newcommand{\eps}{\epsilon}

\DeclarePairedDelimiterX{\hsip}[2]{\langle}{\rangle_{\tiny{\mathtt{HS}}}\xspace}{#1, #2}

\newcommand{\KeyGen}{\mathsf{KeyGen}}
\newcommand{\Enc}{\mathsf{Enc}}
\newcommand{\Dec}{\mathsf{Dec}}

\newcommand\algo{\mathcal}

\newif\ifsubmission
\submissionfalse
\ifsubmission

\else

\newcommand{\alex}[1]{{\noindent \textcolor{orange}{\emph{(Alex:  #1)}}}{}}

\fi


\newcommand{\setup}{\mathsf{Setup}}
\newcommand{\enc}{\mathsf{Enc}}
\newcommand{\dec}{\mathsf{Dec}}
\newcommand{\ct}{\mathsf{ct}}
\newcommand{\vk}{\mathsf{vk}}
 
\newcommand{\adversary}{{\cal A}}
\newcommand{\sk}{\mathsf{sk}}

\newcommand{\prob}{\mathsf{Pr}}
\newcommand{\secparam}{\lambda}
\newcommand{\revokeexperiment}{\mathsf{RevokeExpt}}
\newcommand{\oracle}{\mathcal{O}}
\newcommand{\perm}{{\bm \sigma}}
\newcommand{\hybrid}{\mathbf{H}}
\newcommand{\Compile}{\mathsf{Compile}}
\newcommand{\Eval}{\mathsf{Eval}}
\newcommand{\Revoke}{\mathsf{Revoke}}

\title{Revocable Encryption, Programs, and More:\\ The Case of Multi-Copy Security}
\author[1]{Prabhanjan Ananth}
\author[2]{Saachi Mutreja}
\author[3]{Alexander Poremba}
\affil[1]{University of California, Santa Barbara}
\affil[2]{Columbia University}
\affil[3]{Massachusetts Institute of Technology}
\date{} 

\begin{document}

\maketitle

\begin{abstract}
\noindent Fundamental principles of quantum mechanics have inspired many new research directions, particularly in quantum cryptography. One such principle is \emph{quantum no-cloning} which has led to the emerging field of revocable cryptography. Roughly speaking, in a revocable cryptographic primitive, a cryptographic object (such as a ciphertext or program) is represented as a quantum state in such a way that surrendering it effectively translates into losing the capability to use this cryptographic object. All of the revocable cryptographic systems studied so far have a major drawback: the recipient only receives one copy of the quantum state. Worse yet, the schemes become completely insecure if the recipient receives many identical copies of the {\em same} quantum state---a property that is clearly much more desirable in practice. 
\par While multi-copy security has been extensively studied for a number of other quantum cryptographic primitives, it has so far received only little treatment in context of unclonable primitives. Our work, for the first time, shows the feasibility of revocable primitives, such as revocable encryption and revocable programs, which satisfy multi-copy security in oracle models. This suggest that the stronger notion of multi-copy security is within reach in unclonable cryptography more generally, and therefore could lead to a new research direction in the field.

\end{abstract}

\newpage
\section{Introduction}
Designing mechanisms to provably revoke cryptographic capabilities is an age-old problem~\cite{stubblebine1995recent,rivest1998can}. In the public-key infrastructure, certificate authorities have the ability to invalidate public-key certificates~\cite{certrevoke24}, especially when the certificates have been compromised. Key rotation policies~\cite{GC24} guarantee that outdated decryption keys become ineffective for future use. The existing approaches to tackle with this problem have their limitations owing to the fact that cryptographic secrets are represented as binary strings and hence, it is infeasible to provably ensure that the malicious attackers have erased information from their devices. In the context of key rotation, a compromised key can still be used to decrypt old ciphertexts. Another issue with using classical information to represent cryptographic keys is that it is difficult to detect compromise: a hacker could steal the classical key from a device without leaving a trace. 
\par Recently, a line of works~\cite{cryptoeprint:2013/606,ALP21,AKNYY23,APV23,CGJL23,MPY23,AHH24} have leveraged quantum information and proposed new approaches for provable revocation of cryptographic objects, such as ciphertexts, programs and keys. These works studied revocation in the context of many Crypto 101 primitives, including pseudorandom functions, private-key and public-key encryption and digital signatures. In a revocable cryptographic primitive, an object (such as a program or a decryption key, etc.) is associated with a quantum state in such a way that \emph{only with access to the state} the functionality of the original cryptographic object is retained. The common template for defining revocable security is in the form of a cryptographic game: The adversary receives \emph{one copy} of the quantum state that it can use for a limited period of time after which it is supposed to return back the state to the owner. The security guarantee stipulates that after the state is returned, the adversary effectively loses the capability to use the cryptographic object. At this point, it should be clear to the reader the necessity that such cryptographic objects are represented as quantum states: indeed, if they were classical, the adversary could always maintain a secret copy, while pretending to have erased everything from its device.  On the other hand, the no-cloning theorem~\cite{WZ82,Dieks82} of quantum mechanics suggests that the above security experiment could very well be achieved. 

\paragraph{Multi-Copy Security.} Let us now zoom in on the part of the security experiment, where the adversary receives {\em only one copy} of the quantum state. In all of the prior works in the literature so far~\cite{AKNYY23,APV23,CGJL23,MPY23,AHH24}, this limitation persists. One could consider a more general definition, where the adversary receives $k$ {\em identical} copies of the quantum state and is later asked to return back all of the copies of the state.
The security guarantee is similar to before: after returning all of the $k$ copies, the adversary should effectively lose all access to the underlying crytptographic object. We term this general security experiment to be {\em multi-copy security}. 
\par There are a couple of reasons to study multi-copy security for revocable primitives. 
\begin{itemize}
    \item \underline{\textsc{Historical Context}}:  Multi-copy security is not new and has been extensively studied in quantum cryptography, especially in the context of foundational primitives such as pseudorandom states~\cite{JLS18,BS19,AQY22,AGQY22,bostanci2023unitarycomplexityuhlmanntransformation,bostanci2024efficientquantumpseudorandomnesshamiltonian} and one-way state generators~\cite{MY22,MY23}. Indeed, multi-copy security has been crucial in the design of many cryptographic constructions. The works of~\cite{AGQY22,ALY24,KT24} used tomography, which inherently requires multiple copies in order to dequantize the communication in some of the quantum cryptographic primitives. Specifically for revocable primitives, a conceptual reason to study multi-copy security is to understand whether having more copies necessarily makes it easier for the adversary to clone quantum states.  Investigating multi-copy security for revocable primitives is a starting step towards understanding multi-copy security for more advanced primitives such as public-key quantum money~\cite{AC12,Zha21}. The question of whether multi-copy security is possible in unclonable cryptography was also recently raised in~\cite{metger2024simpleconstructionslineardepthtdesigns}.
    
    \item \underline{\textsc{Nested Leasing}}: Having access to many more copies of the quantum state would also give more power to the user; for example, using a permutation test~\cite{KNY08} (a generalized version of SWAP test) where one is given $\ket{\phi}$ and polynomially many copies of $\ket{\psi}$, one can approximately test the overlap between $\ket{\phi}$ and $\ket{\psi}$. This ability allows for nested leasing of cryptographic objects, such as programs or keys. Suppose a user \textbf{A} is leased a large number of, say $k$, copies of a quantum program $\ket{\psi}$. User \textbf{A} could further lease a number, say $k' \ll k$, of copies of $\ket{\psi}$ to user \textbf{B}. At a later point in time, when user \textbf{B} is asked to return back its copies to user \textbf{A}. User \textbf{A} can then use its $k-k'$ copies to approximately test whether the returned copies are correct. If the test succeeds, user \textbf{A} then is in a position to return back all of its $k$ copies to the true owner\footnote{A drawback of this approach is that there is some room for user \textbf{B}  to cheat with noticeable probability in this approach without user \textbf{A} noticing, which means that the owner would not always be to pinpoint whether user \textbf{A} or user \textbf{B}  cheated. Still, this approach offers a non-trivial solution to this challenging problem.}. Such a nested leasing approach could be especially useful in organizations with hierarchical structure. 
\end{itemize}
\noindent The notion of multi-copy security we consider in this work is closely related to {\em collusion-resistant security}, considered in the works of~\cite{LLLZ22,CG23}. The crucial difference is that in the prior works, the adversary receives {\em i.i.d} copies of the quantum key whereas in our case, the adversary receives {\em identical} copies of the quantum key. In the nested leasing application discussed above, it was crucial that the user received many {\em identical} copies of the same state.

\paragraph{Multi-copy security using commonly studied unclonable states: Challenges.} The first step towards addressing multi-copy security is to identify quantum states that are unclonable. We discuss the commonly studied unclonable quantum states below: 
\begin{itemize}
    \item \underline{\textsc{BB84 states}}: these states are of the form $H^{\theta}\ket{x}$, where $\theta \in \{0,1\}^n$ and $x \in \{0,1\}^n$. These states have been influential in the design of private-key quantum money~\cite{Wiesner83} and in the design of encryption schemes with unclonable ciphertexts~\cite{BL19,BI20}. Given many copies of $H^{\theta}\ket{x}$, one can learn $x$ and $\theta$ and hence, recover a complete description of $H^{\theta}\ket{x}$. 
    \item \underline{\textsc{Subspace and Coset states}}: these states are of the form $(\sqrt{|A|})^{-1} \cdot \sum_{{\bf x} \in A} \ket{{\bf x}}$, where $A \subseteq \mathbb{F}_2^n$ is a sparse subspace of $\mathbb{F}_2^n$, for some $n \in \mathbb{N}$. These states have been crucial in the design of public-key quantum money~\cite{AC12,Zha21}, among other primitives. Again given many copies, one can learn the basis of the subspace and hence a complete description of the subspace state. Another related class of unclonable states are coset states which are superpositions over a coset (rather than a subspace) and moreover, each term in the superposition has a phase that depends on a dual coset. Coset states have been influential in constructions of quantum copy-protection~\cite{Coladangelo_hidden}. Similar to subspace states, coset states are also learnable. 
    \item \underline{\textsc{SIS-based states}}: these states are of the form $\sum_{{\bf x} \in \mathbb{Z}^m_q,{\bf A}{\bf x}={\bf y}} \alpha_{{\bf x}} \ket{{\bf x}}$, where $q,m \in \mathbb{N}$, $|\alpha_{{\bf x}}|^2$ is a discrete Gaussian distribution such that most of the weight is on low norm vectors ${\bf x}$ and finally, ${\bf A} \in \mathbb{Z}_q^{n \times m},{\bf y} \in \mathbb{Z}_q^n$. They were useful in designing traditional and advanced encryption systems with unclonable quantum keys~\cite{Por23,APV23,MPY23,AHH24}. Given many copies of this state, one can recover a short basis of the kernel of ${\bf A}$ which can then be used to recover the above state. 
\end{itemize}
\noindent In other words, all the above types of states are learnable and hence, they cannot be the basis of any unclonable cryptographic scheme satisfying multi-copy security. This suggests that we need to look for new unclonable quantum states that are unlearnable even given many copies. In the past, discovering new unclonable quantum states has led to pushing the frontier of unclonable quantum cryptographic primitives and we believe our endeavour could reap similar results.

\section{Our Results}

We now give an overview of our results.

\paragraph{Our Approach: Quantum Pseudorandomness Meets Unclonable Cryptography.} We use {\em subset states} to tackle multi-copy security. Subset states have been recently studied in the context of quantum pseudorandomness~\cite{GB23,JMW23,aaronson_et_al:LIPIcs.ITCS.2024.2}. A subset state is associated with an unstructured and random subset $S \subset \{0,1\}^n$ of the form $\ket{S} = (\sqrt{|S|})^{-1} \sum_{x \in S} \ket{x}$. Several recent works~\cite{GB23,JMW23} showed that random subset states (of non-trivial size) are approximate state $k$-designs for any polynomial $k$ as a function of $n \in \mathbb{N}$, which make random subset states\footnote{Strictly speaking, we use \emph{pseudorandom} subset states which can be generated efficiently via pseudorandom permutations.} a natural candidate for multi-copy security---particularly since Haar states are unlearnable given polynomially many identical copies. At first sight, it would seem as though that the fact that random subset states are close to Haar states should be discourage us from using them for unclonable cryptography, especially since Haar states have virtually no structure. Indeed, the structure of BB84 states, subspace states and others has been crucially exploited in various applications. Our work shows that subset states, in some sense, have the minimal amount of structure to enable 
 a number of interesting cryptographic primitives in the context of multi-copy revocable cryptography. To the best of our knowledge, these applications also mark the first use case of subset states in the context of cryptography. Our main technical contribution is of \emph{information-theoretic} nature: we prove a query lower bound for \emph{forging} subset elements; concretely, we show that any quantum algorithm
that receives $k$ copies of a random subset state $\ket{S}$ cannot produce $k+1$ many subset elements in $S$ unless it makes a large amount of queries to a membership oracle for $S$. We believe that this result could be of independent interest.

\paragraph{Multi-Copy Revocable Encryption.} We first study revocable encryption~\cite{cryptoeprint:2013/606,AKNYY23,APV23} with multi-copy security. A revocable encryption scheme is a regular encryption scheme but where the ciphertexts are associated with quantum states. Additionally, a revocable encryption scheme comes with the following security notion called {\em multi-copy revocable security}: informally, it states that any adversary that successfully returns $k$ valid copies of a quantum ciphertext which it was given by a challenger, where $k$ is an arbitrary polynomial, necessarily loses the ability to decrypt the ciphertexts in the future---even if the secret key is revealed. In more detail, the security game is formulated as follows: 
\begin{itemize}
    \item The adversary selects a pair of messages $(m_0,m_1)$ and sends them to the challenger.

    \item The challenger randomly selects one of the two messages, say $m_b$, and encrypts it $k$ times using a secret ket $\mathsf{sk}$, and sends the ciphertext copies $\ket{\psi_b}^{\otimes k}$ to the adversary $\adversary$. 
    \item At a later point in time, $\adversary$ returns back all the copies of $\ket{\psi_b}$, which are then verified by the challenger.
    \item After successfully returning back the states, $\adversary$ receives the secret key $\mathsf{sk}$ in the clear. Finally, $\adversary$ outputs a guess $b'$. 
\end{itemize}
The scheme is said to be secure if the probability that $b'=b$ is close to $1/2$. 
\par Prior works~\cite{cryptoeprint:2013/606,AKNYY23,APV23,CGJL23,AHH24} only studied variants of the above security game in the setting where the adversary receives only one copy of the quantum ciphertext. In fact, their schemes are easily seen to not satisfy multi-copy security. We show the following. 

\begin{theorem}
\label{thm:mcs:rp}
If post-quantum one-way functions exist, then there exists an encryption scheme with (an oracular notion of) multi-copy revocable security.
\end{theorem}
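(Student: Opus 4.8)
The plan is to realize the quantum ciphertext as a (pseudorandom) subset state and to mask the message by a value that can only be unmasked by exhibiting a genuine subset element, so that returning all $k$ copies and nonetheless decrypting would amount to forging a $(k{+}1)$-st subset element, contradicting the paper's query lower bound.

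\textbf{Construction.} $\KeyGen$ samples a sparse random subset $S \subseteq \{0,1\}^n$ with $k \ll |S| \ll 2^n$ (say $|S| = 2^{n/2}$), together with a random function $G \colon \{0,1\}^n \to \{0,1\}^\lambda$ that equals a fixed random value $v$ on all of $S$ and is independent elsewhere; the secret key is \emph{oracle} access to $G$ (this idealized oracle is the source of the ``oracular'' caveat in the theorem). To encrypt $m \in \{0,1\}^\lambda$ the challenger prepares $\ket{S} = |S|^{-1/2}\sum_{x\in S}\ket{x}$ and outputs $(\ket{S}, c)$ with $c = m \oplus v$; for $k$-copy security it sends $\ket{S}^{\otimes k}$ with the same $c$. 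Decryption measures $\ket{S}$ to obtain some $x \in S$, queries $v = G(x)$, and returns $c \oplus v$. Revocation verification measures each returned register in the computational basis and accepts iff the $k$ outcomes are distinct and all lie in $S$; honest copies pass with overwhelming probability since $|S| \gg k^2$. Correctness is immediate, and the whole design is arranged so that once the state is surrendered the mask $v$ becomes inaccessible.

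\textbf{Security via the forging lemma.} Suppose $\adversary$ wins the $k$-copy game with advantage $\eps$. I build $\B$ that, given $k$ copies of $\ket{S}$ and a membership oracle $\calO_S$, outputs $k+1$ distinct elements of $S$, contradicting the forging lemma. $\B$ forwards the $k$ copies to $\adversary$, and for the pair $(m_0,m_1)$ output by $\adversary$ and a random bit $b$ it sets $c = m_b \oplus v$ for a freshly sampled $v$; it simulates $G$ coherently by setting $G(x) = v$ when $\calO_S(x)=1$ and a lazily-sampled random value otherwise, costing one $\calO_S$ query per $G$ query. When $\adversary$ returns $k$ registers, $\B$ runs verification, measuring them to obtain $x_1,\dots,x_k$ which (conditioned on acceptance) are distinct elements of $S$. $\B$ then runs $\adversary$'s post-revocation phase with the simulated $G$ as $\sk$. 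In this simulation $v$ is information-theoretically independent of $\adversary$'s view until $\adversary$ queries $G$ at a point of $S$, so $c$ is a one-time pad and advantage $\eps$ forces such a query; by a measure-a-random-query (One-Way-to-Hiding style) argument, measuring the input register of a uniformly random one of $\adversary$'s $G$-queries yields a point $x_{k+1}\in S$ with probability $\gtrsim \eps^2/\poly$. Outputting $(x_1,\dots,x_k,x_{k+1})$ then forges $k+1$ subset elements.

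\textbf{Main obstacle and efficiency.} The crux is the final extraction. I must argue that $x_{k+1}$ is distinct from $x_1,\dots,x_k$: these were obtained by measuring the returned registers \emph{after} $\adversary$ released them, so $\adversary$ holds no information about their specific values, and for $|S|\gg k^2$ a fresh query landing in $S$ collides with them only negligibly often; making this rigorous despite possible entanglement between the returned registers and $\adversary$'s residual state is exactly the subtlety the forging lemma is meant to absorb, and the measure-a-random-query bound must be threaded coherently around the destructive verification measurement. Finally, to obtain an \emph{efficient} scheme from post-quantum one-way functions, I replace the truly random $S$ (and $G$) by pseudorandom objects: a post-quantum one-way function yields a pseudorandom permutation, making $\ket{S}$ efficiently preparable and membership efficiently decidable for the honest parties, and a standard hybrid argument shows that swapping random for pseudorandom subset states preserves multi-copy security, while the information-theoretic forging lemma supplies the underlying oracle-model hardness.
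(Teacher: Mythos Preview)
Your high-level strategy—encode the ciphertext as a subset state, mask the message by a value recoverable only from a subset element, and reduce security to $k\mapsto k{+}1$ unforgeability via an O2H extraction—matches the paper. But the scheme you actually wrote down is \emph{insecure}, and the bug is exactly in the step you flag as the ``crux''. Your revocation test (``measure each returned register in the computational basis and accept iff the outcomes are $k$ distinct elements of $S$'') admits the following attack: the adversary measures all $k$ copies of $\ket{S}$ itself, obtaining distinct $x_1,\dots,x_k\in S$ (with probability $1-O(k^2/|S|)$), returns $\ket{x_1}\otimes\cdots\otimes\ket{x_k}$, and keeps $x_1$ in its memory. Your verification accepts with certainty, and in the post-revocation phase a single query $G(x_1)=v$ unmasks $m_b$. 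In your reduction, the extracted $x_{k+1}$ is then equal to $x_1$, so you never forge a fresh element; your informal claim that the adversary ``holds no information about their specific values'' is simply false here.

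The paper's fix is to take revocation to be the \emph{projection} onto $\ketbra{S}{S}^{\otimes k}$; the attack above then passes only with probability $|S|^{-k}$. The cost is that the reduction can no longer directly read off $k$ classical elements from the returned registers, and this is exactly where the paper's Simultaneous Distinct Extraction Lemma enters: it shows that first projecting onto $\ketbra{S}{S}^{\otimes k}$ and \emph{then} measuring in the computational basis yields $k$ distinct elements of $S$ (up to an $O(k^2/|S|)$ loss) while leaving the adversary's residual register in the correct post-projection state, so that the O2H extractor run on that residual state produces an element of $S$ that is automatically distinct from the first $k$. That lemma is the missing technical ingredient in your argument and is not subsumed by the forging lemma. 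As a secondary point, your instantiation from one-way functions is underspecified: the paper does not carry an auxiliary oracle $G$, but instead uses a QPRP $\varphi_\kappa$ to set $S_y=\{\varphi_\kappa(x\|y):x\in\{0,1\}^n\}$ and masks the message with $y$ itself, so decryption is just $\varphi_\kappa^{-1}$ and the ``oracular'' relaxation amounts to granting oracle access to $\varphi^{-1}$ (rather than the key $\kappa$) after revocation.
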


\noindent Some remarks are in order. Firstly, our security proof does not fully achieve the standard notion of revocable security guarantee we stated above; rather, we consider a slightly different variant of the experiment, where in the second part of the game (instead of revealing the secret key in the clear) we allow the adversary to query an oracle that is powerful enough to enable decryption during the first phase of the game. 
While this constitutes a weaker notion of security, it nevertheless results in a meaningful notion of revocable security: once the adversary has successfully returned all of the copies of the ciphertext, it can no longer decrypt the ciphertext in the future---even if it gets access to an oracle that would have previously allowed it to do so.
Second, our construction of multi-copy revocable encryption makes use of quantum-secure pseudorandom permutations (QPRPs), which can be constructed from post-quantum one-way functions~\cite{zhandry2016notequantumsecureprps}. In the security experiment which underlies our construction, the aforementioned oracle for decryption (i.e., that which is handed to the adversary after revocation has taken place) is in the form of an ideal oracle for the permutation itself. Once again, the rationale behind our notion of oracular security is that an attacker who receives a QPRP key in the clear would most certainly use it to evaluate the QPRP, and hence it is reasonable to consider a model in which the attacker receives an oracle for the permutation instead.\footnote{Note, however, that this does not capture all possible attacks; for example, the adversary could use its knowledge of the QPRP key to break the scheme in other meaningful ways. 
} A key advantage of oracular security is that we can directly invoke the security of the QPRP and use a perfectly random permutation instead. \footnote{This switch is generally not possible in the standard notion of revocable security in which the QPRP key is required to revealed in the clear. Here, QPRP security does not apply.}
We remark that this model loosely resembles the random permutation model behind the international hash function standard SHA-3~\cite{KeccakSponge3,KeccakSub3}, except that the adversary only receives oracle access to the permutation during the second part of the revocable security experiment.
While our construction only achieves an oracular notion of revocable security, it is nevertheless the very first construction of revocable encryption which satisfies multi-copy security {\em in any model}.

\paragraph{Multi-Copy Revocable Programs.} Our previous discussion on revocable encryption illustrates that encryption and decryption functionalities can be protected even if many copies of the quantum ciphertext are made available to the recipient. We generalize this result further and study whether arbitrary functionalities can be protected. We define and study revocable programs with multi-copy security. In this notion, there is a functionality preserving compiler that takes a program and converts it into a quantum state. The security guarantee is defined similar to revocable encryption:  
\begin{itemize}
    \item The challenger compiles a program $P$, sampled from a distribution ${\cal D}$ on a set of programs ${\cal P}$, into a state $\ket{\psi_P}$. It then  sends $k$ copies of the state $\ket{\psi_P}$ to the adversary $\adversary$. 
    \item At a later point in time, $\adversary$ returns back all of the copies of $\ket{\psi_P}$. 
    \item After returning back the state, $\adversary$ is given $x$, where $x$ is sampled from the input distribution of $P$. It then outputs a guess $y$. 
\end{itemize}
\par The scheme is said to be secure if the probability that $y=P(x)$ is roughly close to the trivial success probability. Here, the trivial success probability is defined as the optimal probability of guessing $P(x)$ given just $x$ (and the knowledge of ${\cal D}$ and the input distribution).
\par Prior works propose revocable programs for specific functionalities in the plain model~\cite{Coladangelo_hidden} or for general functionalities in oracle models~\cite{ALLZZ21}. However, these works guarantee security only if the adversary receives one copy of the state. And as before, these constructions provably do not satisfy multi-copy security. We show the following. 

\begin{theorem}
There exist revocable programs which satisfy (an oracular notion of) multi-copy security in a classical oracle model. 
\end{theorem}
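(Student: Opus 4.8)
The plan is to gate an arbitrary functionality behind a subset-membership oracle, in the same spirit as the revocable encryption scheme of \Cref{thm:mcs:rp} but with the single decryption bit replaced by a full program. To compile a program $P$ drawn from $\cD$, the challenger samples a uniformly random subset $S \subseteq \bit^n$ of exponential size, sets the program state to the subset state $\ket{S} = |S|^{-1/2}\sum_{x \in S}\ket{x}$, and issues the $k$-copy challenge $\ket{S}^{\otimes k}$. Alongside it we publish two classical oracles: an evaluation oracle with $\mathsf{Eval}(s,x) = P(x)$ whenever $s \in S$ and $\mathsf{Eval}(s,x) = \bot$ otherwise, and a membership oracle for $S$ used only for revocation verification. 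A legitimate holder evaluates $P$ on input $x$ by querying $\mathsf{Eval}$ coherently across the registers of $\ket{S}$, copying out the answer, and uncomputing; since $\mathsf{Eval}(s,x)$ returns the same value $P(x)$ for every $s \in S$, this leaves $\ket{S}$ intact and may be repeated, so correctness and functionality preservation are immediate and all $k$ copies survive for later return. Because the security argument is information-theoretic in this classical oracle model, a truly random subset suffices and no pseudorandomness is needed.

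For security I would reduce an adversary $\adversary$ winning the multi-copy program game to the forging lower bound stated above. The reduction $\cB$ receives $\ket{S}^{\otimes k}$ together with oracle access to membership in $S$; it samples $P \sim \cD$ itself, and answers each $\mathsf{Eval}$-query $(s,x)$ by testing $s \in S$ with one membership query and returning $P(x)$ if so and $\bot$ otherwise. It forwards the $k$ challenge copies to $\adversary$, runs the revocation phase, and retains the $k$ returned registers; since these pass the projective verification onto $\ket{S}^{\otimes k}$, measuring them in the computational basis yields $k$ elements of $S$. In the post-revocation phase $\cB$ feeds $\adversary$ a fresh input $x$ and, via a one-way-to-hiding / query-extraction argument, measures a uniformly chosen $\mathsf{Eval}$-query of $\adversary$ to obtain a candidate $s^{\ast}$. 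The reduction outputs the $k$ measured elements together with $s^{\ast}$ as a purported list of $k+1$ subset elements.

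It remains to argue that this list consists of $k+1$ \emph{distinct} elements of $S$ obtained with few membership queries, contradicting the forging bound. Since $\adversary$ is polynomial it issues only $\poly(n)$ many $\mathsf{Eval}$-queries, so $\cB$ makes only $\poly(n)$ membership queries, far below the large query threshold of the forging lemma; and the $k$ measured elements are distinct except with probability $O(k^2/|S|) = \negl(n)$ by a birthday bound, using that $|S|$ is exponential while $k$ is polynomial. The crux of the proof, and its main obstacle, is the post-revocation extraction step, which has two parts. First is a \emph{hiding} argument: conditioned on revocation succeeding, any $\adversary$ beating the trivial guessing probability for $P(x)$ must actually query $\mathsf{Eval}$ on a valid pair $(s^{\ast},x)$ with $s^{\ast} \in S$, since without such a query the value $P(x)$ is information-theoretically masked and $\adversary$ cannot improve on the prior from $\cD$. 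Second, I must show the extracted $s^{\ast}$ is distinct from the $k$ returned elements; this is delicate because $s^{\ast}$ and the returned registers are jointly produced by $\adversary$ and hence correlated, and I would control the correlation using the exponential size of $S$ (so any fixed witness collides with the $k$ uniform returned elements only negligibly often) together with the standard quantum query-recording machinery to keep the extraction lossless enough to preserve $\adversary$'s advantage. Combining the hiding step with the forging lower bound then yields the theorem.
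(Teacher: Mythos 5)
Your overall strategy---gating an arbitrary program behind a subset-membership check and reducing a successful post-revocation adversary to the $k \mapsto k+1$ unforgeability of subset states by measuring the $k$ returned registers and extracting a $(k+1)$-st element from the adversary's oracle queries via one-way-to-hiding---is essentially the paper's proof of \Cref{thm:revoc-program} for \Cref{const:revprog-scheme}. However, there is one genuine gap in your construction. You assert that ``a truly random subset suffices and no pseudorandomness is needed,'' but the syntax of a revocable program compiler requires $\Compile$ to be a QPT algorithm, and there is no efficient procedure that samples a uniformly random exponential-size $S \subseteq \bit^n$ and then outputs $k$ \emph{identical} copies of $\ket{S}$ (this is exactly the obstacle the paper highlights with its QROM warm-up, which can only produce one copy). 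The permutation structure $S = \{\varphi_\kappa(x || y) : x \in \bit^n\}$ is not an optional pseudorandomness layer: it is what makes $\ket{S_y}^{\otimes k}$ efficiently preparable and what makes $\Revoke$ implementable. The replacement of the (pseudo)random permutation by a truly random subset is legitimate only inside the security proof, not in the construction itself.

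The step you correctly identify as the crux---that revocation success, extraction of some $s^{\ast} \in S$, and distinctness of all $k+1$ outputs must hold \emph{simultaneously} with inverse-polynomial probability---is also not closed by your sketch. The paper resolves it with a dedicated operator inequality, the Simultaneous Distinct Extraction Lemma (\Cref{lem:distinct_extract}), which bounds the joint event by
$\Tr[\Lambda \rho] \geq (1 - O(k^2/|S|)) \cdot \Tr[\ketbra{S}{S}^{\otimes k}\rho_X] \cdot \Tr[\ketbra{S}{S}\,\cE(\sigma)]$,
where $\sigma$ is the residual state conditioned on revocation succeeding. Your birthday-bound intuition gives the right quantitative loss $O(k^2/|S|)$, but the appeal to ``standard quantum query-recording machinery'' points at the wrong tool: no compressed-oracle argument is involved, and the real work is showing that projecting the returned registers onto $\ketbra{S}{S}^{\otimes k}$ leaves a conditional state on which the O2H extractor still succeeds, so that the two probabilities multiply rather than interfere. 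As written, the proposal names this obstacle but does not supply the argument that overcomes it.
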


\noindent Unlike~\Cref{thm:mcs:rp}, the above theorem relies upon structured and ideal classical oracles. 
\par Finally, for the special case of point functions, we show that we can again only rely upon pseudorandom permutations together with the (standard) quantum random oracle model. We show the following.  

\begin{theorem}
There exist revocable multi-bit point functions which satisfy (an oracular notion of) multi-copy security in quantum random oracle model. 
\end{theorem}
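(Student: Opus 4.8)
The plan is to instantiate the revocable multi-bit point function using a pseudorandom subset state as the unclonable token and the quantum random oracle to bind the output value to the subset. To compile $P_{a,y}$ (which outputs $y$ on input $a$ and $\bot$ elsewhere), I would sample a \emph{sparse} pseudorandom subset $S \subseteq \{0,1\}^n$ via a QPRP $\pi$, e.g. $S = \{\pi(i) : i \in [T]\}$ with $T \ll 2^n$, so that the challenger prepares $\ket{S} = |S|^{-1/2}\sum_{s \in S}\ket{s}$ efficiently by applying $\pi$ to a uniform superposition over $[T]$. The compiled program is $\ket{\psi_P} = \ket{S}$, and the value $y$ is hidden in the random oracle $H$ by programming $H(a \| s) = y \,\|\, 1^\lambda$ for every $s \in S$, leaving $H$ uniform elsewhere. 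Honest evaluation on input $x$ measures a copy of $\ket{S}$ to obtain some $s \in S$ and outputs the first part of $H(x \| s)$ whenever its tag equals $1^\lambda$ (and $\bot$ otherwise): when $x = a$ this returns $y$, and when $x \neq a$ the tag matches only with probability $2^{-\lambda}$, so the scheme is functionality preserving. The QPRP key is never revealed; the oracular flavour is that in the second phase the adversary learns the challenge input $x = a$, which is exactly what turns $H(a \| \cdot)$ into a (tag-recognizable) membership oracle for $S$.

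The core of the argument is a reduction from breaking multi-copy security to \emph{forging} subset elements, invoking the query lower bound stated earlier. First I would switch the QPRP to a truly random permutation---this is where oracular security is used, since the key is not handed over in the clear---so that $S$ is a uniformly random sparse subset and the forging lower bound applies verbatim. Given a point-function adversary $\cA$ that returns $k$ valid copies and then, on input $x = a$, outputs $y$ with non-trivial probability, I build a forger $\cB$: it receives $k$ copies of $\ket{S}$ and a membership oracle $O_S$, hands the copies to $\cA$ as the program, and simulates $H$ by sampling $y$ itself and answering each query $H(a \| s)$ with $y\,\|\,1^\lambda$ if $O_S(s)=1$ and with fresh randomness otherwise (queries $H(x\|\cdot)$ with $x \neq a$ need no oracle call). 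When $\cA$ returns its $k$ registers, $\cB$ measures them to extract $k$ distinct elements of $S$, which the passing verification guarantees up to negligible error provided $|S| \gg k$. It then runs $\cA$'s second phase on input $a$ and, from a successful $\cA$, extracts a $(k+1)$-th element.

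The crux---and the step I expect to be the main obstacle---is extracting this $(k+1)$-th element $s^\ast \in S$ from $\cA$'s second phase. In the random oracle model the value $y$ is information-theoretically independent of $\cA$'s view except through the programmed points $\{a \| s : s \in S\}$, so any $\cA$ that outputs $y$ must query $H$ at some $a \| s^\ast$ with $s^\ast \in S$ with non-negligible weight; a one-way-to-hiding / measure-a-random-query argument then lets $\cB$ recover such an $s^\ast$, distinct from the $k$ extracted elements except with negligible probability. This produces $k+1$ distinct elements of $S$ using only polynomially many membership queries, contradicting the forging lower bound. Making this quantitatively tight is the delicate part: one must (i) bound, by a guessing/O2H argument, the weight $\cA$ places on the programmed points during the \emph{first} phase---before it knows $a$---so that essentially all $S$-hitting queries occur only after $a$ is revealed; (ii) account for the fact that each $H(a\|\cdot)$ query is a quantum query, so measure-a-random-query extraction costs only a polynomial factor in $\cA$'s query count; and (iii) verify that the simulated oracle is statistically indistinguishable from the true programmed $H$, so that $\cA$'s success probability is preserved in $\cB$'s simulation.

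The remaining ingredients are routine. Correctness of honest evaluation and the $2^{-\lambda}$ false-tag error follow by inspection; the QPRP-to-random-permutation switch is a direct reduction in which the distinguisher prepares $\ket{S}$ with a single (superposition) call to its permutation oracle and simulates $O_S$ via $\pi^{-1}$; and the phase-one hiding of $a$---that $\cA$ cannot locate $a$ even using equality tests of the form $H(x\|s_1)\stackrel{?}{=}H(x\|s_2)$ across its copies---reduces to the hardness of searching the exponential-size input domain. Finally, matching $\cA$'s success to the trivial guessing probability $\approx 2^{-|y|}$ is immediate once the forging bound rules out recovering any $s^\ast \in S$ after revocation.
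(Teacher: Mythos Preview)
Your proposal has a genuine gap at the level of the construction itself: you ``hide $y$ in the random oracle $H$ by programming $H(a\|s)=y\|1^\lambda$ for every $s\in S$.'' In the quantum random oracle model, $H$ is a uniformly random function that is fixed \emph{before} the compiler runs and to which the adversary has query access from the outset; the compiler cannot program it to take prescribed values on inputs that depend on $a$, $y$, and $S$. As written, your scheme is therefore not correct: on the marked input $a$, honest evaluation measures some $s\in S$ and reads $H(a\|s)$, which is a uniformly random string and almost never equals $y\|1^\lambda$. What you are describing is a \emph{structured classical oracle} $O_{P,S}$ that is part of the compiled object---this is precisely the model of \Cref{const:revprog-scheme} in Section~\ref{sec:revocavle programs-def}, which the paper explicitly distinguishes from the QROM result you are asked to prove. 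Relatedly, your remark that ``the oracular flavour is that in the second phase the adversary learns the challenge input $x=a$'' misidentifies what is oracular here: receiving the challenge input is part of the standard revocable-program security definition, whereas the ``oracular notion'' in the theorem refers to how the underlying QPRP key is handled.

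The paper's route is quite different and genuinely lives in the QROM. It is modular: (i) build multi-copy revocable \emph{encryption} from a QPRP via subset states (\Cref{const:permutation-scheme}, \Cref{thm:revoc-encryption}), proved in the oracular sense that after revocation the adversary gets oracle access to the permutation rather than the key; (ii) give a generic QROM transformation that equips any such scheme with \emph{wrong-key detection} by appending $H(\sk\|x)$ to the ciphertext (\Cref{cons:key-detection}, \Cref{lem:generic_transf})---here $H$ is evaluated in the ordinary way, never programmed; and (iii) compile $P_{y,m}$ by encrypting $m$ under key $y$ using this WKD scheme (\Cref{cons:cp_UQE}), so that $\Eval$ on input $x$ attempts decryption with $x$ and outputs $0^\lambda$ on $\bot$. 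Security then splits on whether the challenge equals $y$: if not, the adversary never learns the key and can only achieve $p_{\mathrm{triv}}$; if so, guessing $m$ after revocation is exactly the search variant of revocable encryption. Your high-level reduction idea (measure the returned copies for $k$ elements, O2H to extract a $(k{+}1)$-st) is the right intuition and indeed drives the encryption proof, but it is applied one layer down rather than directly at the point-function level, and the QROM is used only for wrong-key detection, not to encode the program's output.
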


\noindent Our results and techniques opens the door for building more advanced unclonable primitives that preserve their security even if the adversary receives many copies of the unclonable quantum state. 

\paragraph{Are Results in Oracle Models Interesting?} It is natural for a reader to be skeptical of our results given that they are based in the oracle models. However, we would like to emphasize that achieving results in the oracle models still requires non-trivial amount of effort. As history suggests, constructions in the oracle models have eventually been adopted to constructions in the plain model. A classic example is the construction of public-key quantum money, which was first proposed in the oracle models by Aaronson and Christiano~\cite{AC12} and later, being instantiated in the plain model by Zhandry~\cite{Zha21}. In a similar vein, our techniques could be useful for future works on achieving multi-copy security in the plain model.


\paragraph{Applications to Sponge Hashing.}

As a complementary contribution, we show that the techniques we developed in this paper are more broadly applicable and extend to other cryptographic settings as well. Here, we single out the so-called \emph{sponge construction} used in SHA-3~\cite{KeccakSponge3,KeccakSub3}.

In \Cref{sec:sponge}, we study a simple query problem: Suppose that an adversary receives
as input a hash table for a set of random input keys, where each hash is computed using a \emph{salted} (one-round) sponge hash function. How many quantum queries are necessary to find a new \emph{valid} element in the range of the hash function?
Our contribution is a space-time trade-off which precisely characterizes the hardness of finding hash table elements in the presence of oracles that depend non-trivially on the sponge hash function.

\paragraph{Acknowledgements.}

The authors would like to thank Henry Yuen and Tal Malkin for many useful discussions. PA
is supported by the National Science Foundation under Grant No. 2329938 and Grant No.
2341004. SM is supported by
AFOSR award FA9550-21-1-0040, NSF CAREER award CCF-2144219, and the Sloan Foundation.
AP is supported by the U.S. Department of Energy, Office of Science, National Quantum Information Science Research Centers, Co-design Center for Quantum Advantage (C2QA) under contract number DE-SC0012704.

\subsection{Related work}

We now discuss related notions which are relevant to this work.

\paragraph{Copy-Protection.} This notion was first introduced by Aaronson~\cite{Aar09}. Informally speaking, a copy-protection scheme is a
compiler that transforms programs into quantum states in such a way that using the resulting states, one can run the original program. Yet, the security guarantee stipulates that any adversary given one copy of the state cannot produce a bipartite state wherein both parts compute the original program. Copy-protection schemes have since been constructed for various classes of programs and under various different models, for example as in~\cite{Aar09,ALLZZ21,Coladangelo2024quantumcopy,AK22,AKL23,LLLZ22,CG23}. We remark, however, that all of the aforementioned works are completely insecure if multiple identical copies of the program are made available.
The notion of multi-copy security we consider in this work is closely related to {\em collusion-resistant security}, considered in the works of~\cite{LLLZ22,CG23}. The crucial difference is that in the prior works, the adversary receives {\em i.i.d} copies of the quantum key whereas in our case, the adversary receives {\em identical} copies of the quantum key.

\paragraph{Secure Software Leasing.} Another primitive relevant to revocable cryptography is secure software leasing~\cite{ALP21}. The notion of secure software leasing states that any program can be compiled into a functionally equivalent program, represented as a quantum state, in such a way that once the compiled program is returned, the (honest) evaluation algorithm on the residual state cannot compute the original functionality. Secure leasing has been constructed for various functionalities~\cite{ALP21,Coladangelo2024quantumcopy,BJLPS21,KNY21}.
Similar to copy-protection, none of the aforementioned works consider multi-copy security.

\paragraph{Encryption Schemes with Revocable Ciphertexts.} 
Unruh \cite{cryptoeprint:2013/606} proposed a (private-key) quantum timed-release
encryption scheme that is \emph{revocable}, i.e. it allows a user to \emph{return} the ciphertext of a quantum timed-release encryption scheme, thereby losing all access to the data. 
Broadbent and Islam~\cite{BI20} introduced the notion of \emph{certified deletion}, which is incomparable with the related notion of unclonable encryption. This has led to the development of other certified deletion protocols, for example as in Ref.~\cite{Por23,10.1007/978-3-031-38554-4_7,10.1007/978-3-031-38554-4_4,cryptoeprint:2023/559,BBSS23}.
However, the notion of multi-copy security, such as in our work, has not been studied.

\section{Technical Overview}
\noindent We first discuss the main technical lemma that underlies the construction of all the revocable primitives. 

\paragraph{$k \rightarrow k+1$ Unforgeability Of Subset States.} Given $k$ copies of a random subset state $\ket{S}$, where $|S| = \omega(\log(n))$, it is easy to produce $k$ distinct elements in the set $S \subseteq \{0,1\}^n$. This can be seen by just measuring all the copies of $\ket{S}$ in the computational basis. Since $|S|$ is sufficiently large, the probability that we get all the $k$ elements to be distinct is high.  Our main lemma states that it is computationally infeasible to produce $k+1$ distinct elements in $S$ given $k$ copies of $\ket{S}$ and access to an oracle that tests membership in $S$. Roughly speaking, this can be seen as a multi-copy subset state analogue of the complexity-theoretic no-cloning result for \emph{subspace states}, which was shown by Aaronson and Christiano~\cite{AC12}.

To prove this, we appeal to the fact that random subset states are indistinguishable from Haar states, as shown in a couple of recent works~\cite{JMW23,GB23}. That is, for any $k$ polynomial in $n$ $$\underset{\substack{S \subseteq \{0,1\}^n\\ |S|=\omega(\log(n))}}{\mathbb{E}}\left[ \ketbra{S}{S}^{\otimes k} \right] \approx \underset{\ket{\psi} \leftarrow {\cal H}_n}{\mathbb{E}}\left[ \ketbra{\psi}{\psi}^{\otimes k} \right]$$
Here, $\approx$ refers to (negligible in $n$) closeness in trace distance and ${\cal H}_n$ refers to the Haar distribution on $n$-qubit quantum states. 
\par Furthermore, using the characterization of the symmetric subspace~\cite{Harrow13}, the following holds:  
$$\underset{\substack{S \subseteq \{0,1\}^n\\ |S|=\omega(\log(n))}}{\mathbb{E}}\left[ \ketbra{S}{S}^{\otimes k} \right] \approx \underset{\substack{{\bf x}=(x_1,\ldots,x_k) \leftarrow \{0,1\}^{nk}\\ x_1,\ldots,x_k\text{ are distinct}}}{\mathbb{E}}\left[ \ketbra{\sigma_{\bf x}}{\sigma_{{\bf x}}} \right],$$
Here, $\ket{\sigma_{{\bf x}}} = \frac{1}{k!}\sum_{\sigma \in S_k} \ket{x_{\sigma(1)},\ldots,x_{\sigma(k)}}$, where $S_k$ is a symmetric group on $k$ elements. 
\par Why does this characterization help? note that if one were given $k$ copies of $\ket{\sigma_{{\bf x}}}$, it is impossible to produce $k+1$ distinct elements in $\{x_1,\ldots,x_k\}$ because there are only $k$ of them! This observation was first made in~\cite{BBSS23} although they considered subset states {\em with phase} for applications that are unrelated to our work. 
\par However this observation is alone not sufficient to complete the proof for two main reasons. Firstly, the adversary receives access to the subset membership oracle. Secondly, the winning conditions are different in both the cases; in one case, given $k$ copies of $\ket{S}$, $k+1$ elements in $S$ had to be produced while in the other case, given $\ket{\sigma_{{\bf x}}}$, $k+1$ elements in ${\bf x}$ need to be produced. \\

\noindent \textsc{Intermediate Lemma.} To make progress on this lemma, we first study the following intermediate lemma. Consider the following two distributions:
\begin{itemize}
    \item ${\cal D}_0$: Sample $t$ copies of a random subset state $\ket{S}$, i.e. $S$ is sampled uniformly at random from all sets of size $\omega(\log(n))$. 
    \item ${\cal D}_1$: Sample $\ket{\sigma_{{\bf x}}}$, where ${\bf x}=(x_1,\ldots,x_k) \leftarrow \{0,1\}^{nk}$. 
\end{itemize}
Similarly, we can define two oracles. The first oracle is ${\cal O}_0$ and the second oracle is ${\cal O}_1$. We define ${\cal O}_0$ to test membership in $S$ and ${\cal O}_1$ to test membership in $\{x_1,\ldots,x_k\}$. The intermediate lemma states that a query-bounded adversary cannot distinguish whether it receives a sample from ${\cal D}_0$ and oracle access to ${\cal O}_0$ or it receives a sample from ${\cal D}_1$ and oracle access to ${\cal O}_1$. 
\par We prove this lemma in a series of steps. First, the adversary receives a sample from ${\cal D}_0$ and has oracle access to ${\cal O}_0$. 
\begin{itemize}
    \item We then modify the oracle that the adversary has access to. Instead of having oracle access to $S$, it instead has oracle access to $T$, where $S \subseteq T$ and moreover, $S$ has negligible size compared to $T$ and $T$ has negligible size compared to $2^n$. The intuition is that the adversary cannot find an element in $T \backslash S$ and then, thanks to the O2H lemma~\cite{unruh2007random}, the indistinguishability of the membership oracles (for $S$ and $T$) follows. 
    \item We change the process of sampling $S$. Instead of first sampling $S$ and then sampling $T$ subject to $S \subseteq T$. We first sample $T$ and then sample $S$ to be a random subset of $T$ of size $\omega(\log(n))$. This process is identical to the previous step. 
    \item Then, we switch $k$ copies of $\ket{S}$ with $\ket{\sigma_{{\bf x}}}$, for a random ${{\bf x}}=(x_1,\ldots,x_k)$, where $(x_1,\ldots,x_k)$ are drawn from $T$. Here, we prove a stronger statement that {\em even if the adversary has a description of $T$}, this indistinguishability holds. In other words, the adversary receives a sample from ${\cal D}_1$ although it has access to an oracle that tests membership in $T$. This follows from some propositions in~\cite{JMW23}. 
    \item Finally, we modify the previous step as follows. The adversary receives a sample from ${\cal D}_1$ with oracle access to ${\cal O}_1$. To prove that the adversary cannot notice the modification, we need to show that the adversary cannot distinguish whether it has oracle access to membership in $T$ or membership in $\{x_1,\ldots,x_k\}$, where $x_1,\ldots,x_k \leftarrow T$. In fact, we need to show that the indistinguishability should hold {\em even if the adversary receives $x_1,\ldots,x_k$ in the clear}. This again follows from another invocation of the O2H lemma. 
\end{itemize}

\noindent The above intermediate lemma immediately implies the $k \rightarrow k+1$ unforgeability of subset states. Let us see why. Suppose there is an adversary $\adversary$ that violates the $k \rightarrow k+1$ unforgeability property with non-negligible probability $p$. Then we can come up with a reduction ${\cal R}$ that contradicts the above intermediate lemma as follows. ${\cal R}$ receives as input a state $\ket{\phi}$ and oracle access to ${\cal O}$. First, it runs $\adversary$ on input $\ket{\phi}$ while giving it oracle access to ${\cal O}$. $\adversary$ then outputs $k+1$ elements. ${\cal R}$ inputs each of the $k+1$ elements to the oracle ${\cal O}$. If the output of ${\cal O}$ on each of the $k+1$ distinct elements is 1 (i.e. the membership test passed) then output 1, else output 0. There are two cases to consider here: 
\begin{itemize}
    \item In the first case, $\ket{\phi}$ is $k$ copies of a random subset state and moreover, ${\cal O}$ tests membership in $S$. In this case, the probability that ${\cal R}$ outputs 1 is exactly $p$. 
    \item In the second case, $\ket{\phi}$ is a state of the form $\ket{\sigma_{{\bf x}}}$ and moreover, ${\cal O}$ tests membership in $\{x_1,\ldots,x_k\}$, where ${\bf x}=(x_1,\ldots,x_k)$. In this case, the probability that ${\cal R}$ outputs 1 is 0. This is because, $\{x_1,\ldots,x_k\}$ only has $k$ distinct elements. 
\end{itemize}
Thus, the success probability of ${\cal R}$ in violating the above intermediate lemma is $p$, which is non-negligible. This in turn is a contradiction.

\paragraph{Revocable Encryption.} Armed with the $k \rightarrow k+1$ unforgeability property, we will now tackle the first application of revocable encryption. Let us first formally state the syntax of such a scheme:
\begin{itemize}
\item $\KeyGen(1^\lambda)$: on input the security parameter $1^\lambda$, output a secret key $\sk$. 

\item $\Enc(\sk,m)$: on input the secret key $\sk$ and a message $m$, output a (pure) ciphertext state $\ket{\psi}$ and a (private) verification key $\vk$. 

\item $\Dec(\sk,\rho)$: on input the secret key $\sk$ and a quantum state $\rho$, output a message $m'$.

\item $\Revoke(\sk,\vk,\sigma)$: on input the secret key $\sk$, a verification key $\vk$ and a state $\sigma$, output $\top$ or $\bot$.
\end{itemize}
For multi-copy revocable security, we will consider the following experiment between a QPT adversary and a challenger.
\begin{enumerate}
    \item $\adversary$ submits two messages $m_0,m_1$ and a polynomial $k=k(\lambda)$ to the challenger.
    \item The challenger samples 
a key $\sk \leftarrow \KeyGen(1^{\secparam})$ and produces $\ket{\psi_b} \leftarrow \enc(\sk,m_b)$. Afterwards, the challenger sends the quantum state $\ket{\psi_b}^{\otimes k}$ to $\adversary$. 
    \item $\adversary$ returns a quantum state $\rho$. 
    \item The challenger performs the measurement $\left\{ \ketbra{\psi_b}{\psi_b}^{\otimes k},\ \mathbb{I} - \ketbra{\psi_b}{\psi_b}^{\otimes k} \right\}$ on the returned state $\rho$. If the measurement succeeds, the game continues; otherwise, the challenger aborts.
    \item The challenger sends the secret key $\sk$ to $\adversary$. 
    \item $\adversary$ outputs a bit $b'$. 
\end{enumerate}
We say that the revocable encryption scheme  has multi-copy revocable security, if no QPT adversary can guess $b'$ with probability that's more than negligibly better than $\frac{1}{2}.$

Now, we will describe how we get revocable encryption by making use of permutations. But before we explain our construction, we will begin by highlighting some challenges in achieving revocable encryption from subset states in the quantum random oracle model.\\

\noindent \textsc{Warm-up: Quantum Random Oracle Model.} We will design revocable encryption schemes in which part of the quantum ciphertext will consist of a random subset state. Since generating a random subset state involves first sampling a random exponential-sized set, such states cannot be efficiently generated. However, let us first see if random oracles can be exploited to generate random subset states more efficiently.

Let  $\mathcal{H}_{2n+m}= \{H:\{0,1\}^{2n+m}\rightarrow \{0,1\}^{m}\}$ be a hash function family, where $H$ is modeled as the random oracle. 
To encrypt a message $\mu \in \bit^m$, first sample a secret string $\mathrm{salt} \leftarrow \{0,1\}^{n+m}$ and coherently compute the following quantum state:
\[\ket{\psi}= 2^{-n/2}
\sum_{x \in \{0,1\}^n}\ket{x}\ket{H(\text{salt},x)}
\]
Now, we can measure the second register and obtain a random string $y$, in which case the state collapses to
\[
\ket{\psi_y} \propto
\sum_{x \in \{0,1\}^n \, : \, H(\text{salt}, x)=y}\ket{x}
\]
The ciphertext now consists of $\ket{\psi_y}$ together with $y \oplus \mu$. While this approach clearly solves the problem of efficiently generating the state, the construction has a major drawback: it only allows us to generate {\em one} copy of the state and, in particular, \emph{identical} copies of the subset state $\ket{\psi_y}$ are not efficiently preparable! Therefore, an entirely new approach is necessary in order to make multi-copy revocable encryption possible.\ \\

\noindent \textsc{Revocable Encryption Using Permutations.} In order to resolve the issue we encountered before, we instantiate our revocable encryption scheme using permutations instead. Concretely, we make use of quantum-secure pseudorandom permutations (QPRPs) which allow us to efficiently prepare many identical copies of subset states \emph{pseudorandomly}. Fortunately, by the security of the QPRP, the resulting states are effectively indisinguishable from random subset states. We then prove an oracular notion of revocable security: this means that, rather than revealing the QPRP key in the final part of the revocable security experiment, we instead allow the adversary to query an oracle for the permutation instead.\footnote{Note that the switch from a QPRP to a random permutation is not possible in the standard notion of revocable security in which the QPRP key is required to be revealed in the clear.}

Let $\Phi = \{\Phi_\lambda\}_{\lambda \in \N}$ be an ensemble of QPRPs $\Phi_\lambda = \{\varphi_\kappa : \bit^{n+m} \rightarrow \bit^{n+m}\}_{\kappa \in \algo K_\lambda}$, for some set $\algo K_\lambda$. Consider the scheme $\Sigma^\Phi=(\KeyGen,\Enc,\Dec,\Revoke)$ which consists of the following QPT algorithms:
 \begin{itemize}
    \item $\KeyGen(1^{\secparam})$: sample a uniformly random key $\kappa \in \algo K_\lambda$ and let $\sk = \kappa$.
    \item $\enc(\sk,\mu)$: on input the secret key $\sk=\kappa$ and message $\mu \in \bit^m$, sample  $y \sim \bit^m$ and prepare the subset state given by
    $$
\ket{S_y} = 
\frac{1}{\sqrt{2^n}}\sum_{x \in \bit^n} \ket{\varphi_\kappa(x || y)}. 
$$
    Output the ciphertext state $(\ket{S_y}^{\otimes k},y \oplus \mu)$ and (private) verification key $\vk=y$. 

    \item $\dec(\sk,\ct)$: on input the decryption key $\kappa$ and ciphertext state $(\ket{S_y},z) \leftarrow \ct$, coherently apply the in-place permutation $\phi^{-1}_{\kappa}$, measure to obtain a string $(x'||y')$, and output $y'\oplus z$. 
   \item $\Revoke(\sk,\vk,\rho)$: on input $\sk$, a state $\rho$ and verification key $\vk$, it parses $\kappa \leftarrow \sk$, $y \leftarrow \vk$ and applies the measurement $\left\{ \ketbra{S_y}{S_y},\ \mathbb{I} - \ketbra{S_y}{S_y} \right\}$ to $\rho$; it outputs $\top$ if it succeeds, and $\bot$ otherwise.
\end{itemize}


\noindent Observe that the state $\ket{S_y}$ can be efficiently generated using a unitary (that is part of the quantum ciphertext). Formally, we perform the following steps: 
\begin{itemize}
    \item Prepare a uniform superposition $\frac{1}{\sqrt{2^n}} \underset{x \in \{0,1\}^n}{\sum} \ket{x}$.
    \item Then, append $\ket{y}$ to the above state, where $y$ is sampled uniformly at random. 
    \item Finally, compute a unitary that applies the permutation $\varphi_\kappa$ in-place on each term in the state. 
\end{itemize}
\noindent This in turn means that many copies of $\ket{S_y}$ can be prepared efficiently. Moreover, using this unitary, the algorithm $\Revoke$ can also be efficiently implemented. Note that, by the security of the QPRP, the resulting state $\ket{S_y}$ is computationally indistinguishable from a variant of the state which is generated using perfectly random permutation.
\\
\\
\noindent \textsc{Proof Outline}:
Now, we will give a proof outline for proving multi-copy secure revocable encryption. 
First, we replace the QPRP by a perfectly random permutation. Second, we complete the proof via a reduction from the $k \rightarrow k+1$ unforgeability of random subset states. 
\par Let us assume for contradiction that there exists an adversary $\cal{A}$ such that the revocable encryption scheme presented above is not secure. The main idea behind converting $\cal{A}$ into an adversary $\cal{A'}$ against the $k \rightarrow k+1$ unforgeability of $\ket{S}$ is to construct hybrids that can be distinguished with non negligible probability, and then use the one-way to hiding lemma~\cite{cryptoeprint:2018/904} in order to construct an \emph{extractor} $\cal{E}$ which queries the membership oracle for $S$ on a $(k+1)^{st}$ distinct element \emph{after} $\cal{A}$ passes the revocation phase. Then, $\cal{A'}$ can extract $k+1$ elements given only $k$ copies of $\ket{S}$ as follows:
\begin{enumerate}
    \item $\cal{A}'$ runs $\cal{A}$ on input $\ket{S}^{\otimes k}$ and measures the registers returned by $\cal{A}$ that are supposed to contain $k$ copies of the subset state. With high probability (provided that the registers pass revocation), these measurements will result in precisely $k$ distinct elements in $S$.
    \item Then, $\mathcal{A'}$ can run the extractor $\cal{E}$ to extract the $(k+1)^{st}$ distinct element in $S$. 
\end{enumerate}
Note that extra care is needed to argue that the two steps above succeed simultaneously with sufficiently high probability whenever $\cal{A}$ is a successful adversary for the revocable security experiment. For simplicity, we omit these details in this overview and refer the reader to \Cref{sec:distinct-ext}.

Now, we will briefly describe the sequence of hybrids which helps us construct the extractor.\ \\

\noindent \textsc{Putting it All Together.}
 We are now ready to describe the sequence of hybrids we consider in order to prove the security of our construction in an oracular security model. The experiment is as follows:\\

\noindent \underline{$\revokeexperiment^{\adversary}(1^{\secparam},b)$:}
\begin{enumerate}
 \item $\adversary$ submits two $m$-bit messages $(\mu_0,\mu_1)$ and a polynomial $k=k(\secparam)$ to the challenger.
    \item The challenger samples $y \sim \bit^m$ and produces a quantum state
       $$
\ket{S_y} = 
\frac{1}{\sqrt{2^n}}\sum_{x \in \bit^n} \ket{\varphi(x || y)}. 
$$
The challenger then sends $\ket{S_y}^{\otimes k}$ and $y \oplus \mu_b$ to $\adversary$. 
    \item $\adversary$ prepares a bipartite state on registers $\mathsf{R}$ and $\mathsf{AUX}$, and sends $\mathsf{R}$ to the challenger and $\mathsf{AUX}$ to $\adversary$.
    \item The challenger performs the projective measurement $\left\{ \ketbra{S_y}{S_y}^{\otimes k},\ \mathbb{I} - \ketbra{S_y}{S_y}^{\otimes k} \right\}$ on $\mathsf{R}$. If the measurement succeeds, the challenger outputs $\bot$. Otherwise, the challenger continues.
    \item The challenger grants $\adversary$ quantum oracle access to $\varphi^{-1}$. 
    \item $\adversary$ outputs a bit $b'$. 
\end{enumerate}
We show that the following via hybrid experiments:
\begin{enumerate}
    \item  $\revokeexperiment^{\adversary}(1^{\secparam},b)$ is indistinguishable from the experiment $H_2$ which is the same as $\revokeexperiment^{\adversary}(1^{\secparam},b)$ except, \begin{itemize}
        \item The challenger samples a random subset $S \subseteq \bit^{n+m}$ of size $|S|=2^n$.
    \item The challenger samples a random $y \sim \bit^m$.
    \end{itemize} 
    after passing revocation, $\cal{A}$ receives access to the following oracle:
     $f: \bit^{n+m} \rightarrow \bit^{n+m}$ subject to the constraint that $f(s) = \ast||y$, for all $s \in S$. 
     \item $H_2$ is indistinguishable from the experiment $H_3$ which is the same as $H_2$ except,  after passing revocation, $\cal{A}$ receives access to the following oracle:
     $f: \bit^{n+m} \rightarrow \bit^{n+m}$ subject to the constraint that $f(s) = \ast||u$, for all $s \in S$. Here, $u$ is a randomly sampled string. 
\end{enumerate}

With some simple hybrid arguments, we show that if $\revokeexperiment^{\adversary}(1^{\secparam},b)$ is not secure, then $\cal{A}$ can distinguish $H_2$ and $H_3$. However, since $H_2$ and $H_3$ are the same except the the function $f$ differs \emph{only} on inputs in the subset $S$, we can now use one-way to hiding to come up with an extractor $\cal{E}$ extracts an element in the subset $S$ using the  queries $\cal{A}$ makes to  $f$ \emph{after revocation succeeds}. 
To complete the proof, we construct an adversary that breaks the $k \rightarrow  k+1$ unforgeability of subset states, in violation of our main lemma.
The high level idea is that the $k$ copies of the subset state which are meant to be returned easily allow the adversary to obtain $k$ distinct subset elements; whereas the additional $(k+1)$-st element can be extracted from the adversary's side information via the aforementioned extractor.\footnote{Note that there are a few subtleties that we gloss over in this overview, such as the fact that we need to condition on the event that revocation succeeds. We refer the reader to \Cref{sec:revenc} for the full proof.} 

\paragraph{Revocable Programs.}
In addition to revocable encryption, we also achieve multi-copy secure revocable programs in the \emph{classical} oracle model from the $k\rightarrow k+1$ unforgeability of subset states. 
The proof of security in this setting is similar, and the main idea again is to construct an extractor extracting $k+1$ elements given only $k$ copies of $S$, though extra care is required when defining the hybrids in order to prove security. For the formal proof, we refer the reader to  \Cref{sec:revocavle programs-def}.

\paragraph{Revocable Point Functions.}
As a final application of $k \rightarrow k+1$ unforgeability, we give a revocable multi-copy secure scheme for a particular program; namely, multi-bit point functions $P_{y,m}$ of the form
$$ P_{y,m} (x) = \begin{cases} m   & \text{if } x = y\,,\\
    0^\lambda &\text{if } x \neq y \,,   \end{cases}  $$ 
    in the quantum random oracle model.
The main idea is to show that revocable encryption schemes can be generically converted to schemes which satisfy a property called \emph{wrong-key detection} (WKD); broadly speaking, this is a mechanism that allows a user with access to ciphertext to check whether a particular of choice decryption key is incorrect, in which case decryption algorithm simply outputs $\bot$.
Inspired by~\cite{Coladangelo2024quantumcopy}, we give a transformation that converts any revocable encryption scheme to one which satisfies the WKD property. Intuitively, this achieves our goal: to evaluate $P_{y,m}$ at point $x$, attempt to decrypt using $x$; if decryption succeeds output
the decrypted message, if decryption fails, output $0^{\lambda}$ (see \Cref{sec:pointfunction}).

\section{Preliminaries} 
\noindent Let $\secparam \in \N$ denote the security parameter throughout this work. We assume that the reader is familiar with the fundamental cryptographic concepts. 

For $N\in \N$, we use $[N] = \{1,2,\dots,N\}$ to denote the set of integers up to $N$. The symmetric group on $[N]$ is denoted by $S_N$. 
In slight abuse of notation, we oftentimes identify elements $x \in [N]$ with bit strings $x \in \bit^n$ via their binary representation whenever $N=2^n$ and $n \in \N$. Similarly, we identify permutations $\pi \in S_N$ with permutations $\pi: \bit^{n} \rightarrow 
\bit^n$ over bit strings of length $n$. For a bit string $x \in \bit^n$, we frequently use the notation $(x||*)$, where $*$ serves as a placeholder to denote the set $\{(x||y) \, : \, y \in \bit^m \}$, where $m \in \N$ is another integer which is typically clear in context.

We write $\negl(\cdot)$ to denote any \emph{negligible} function, which is a function $f$ such that, for every constant $c \in \mathbb{N}$, there exists an integer $N$ such that for all $n > N$, $f(n) < n^{-c}$.

\paragraph{Quantum Computing} For a comprehensive background on quantum computation, we refer to \cite{NielsenChuang11}. We denote a finite-dimensional complex Hilbert space by $\mathcal{H}$, and we use subscripts to distinguish between different systems (or registers). For example, we let $\mathcal{H}_{A}$ be the Hilbert space corresponding to a system $A$. 
The tensor product of two Hilbert spaces $\algo H_A$ and $\algo H_B$ is another Hilbert space denoted by $\algo H_{AB} = \algo H_A \otimes \algo H_B$.
The Euclidean norm of a vector $\ket{\psi} \in \algo H$ over the finite-dimensional complex Hilbert space $\mathcal{H}$ is denoted as $\| \psi \| = \sqrt{\braket{\psi|\psi}}$. 
Let $\algo L(\algo H)$
denote the set of linear operators over $\algo H$. A quantum system over the $2$-dimensional Hilbert space $\mathcal{H} = \mathbb{C}^2$ is called a \emph{qubit}. For $n \in \mathbb{N}$, we refer to quantum registers over the Hilbert space $\mathcal{H} = \big(\mathbb{C}^2\big)^{\otimes n}$ as $n$-qubit states. We use the word \emph{quantum state} to refer to both pure states (unit vectors $\ket{\psi} \in \mathcal{H}$) and density matrices $\rho \in \mathcal{D}(\mathcal{H)}$, where we use the notation $\mathcal{D}(\mathcal{H)}$ to refer to the space of positive semidefinite matrices of unit trace acting on $\algo H$. 
The \emph{trace distance} of two density matrices $\rho,\sigma \in \mathcal{D}(\mathcal{H)}$ is given by
$$
\TD(\rho,\sigma) = \frac{1}{2} \Tr\left[ \sqrt{ (\rho - \sigma)^\dag (\rho - \sigma)}\right].
$$
A quantum channel $\Phi: \algo L(\algo H_A) \rightarrow \algo L(\algo H_B)$ is a linear map between linear operators over the Hilbert spaces $\algo H_A$ and $\algo H_B$. Oftentimes, we use the compact notation $\Phi_{A \rightarrow B}$ to denote a quantum channel between $\algo L(\algo H_A)$ and $\algo L(\algo H_B)$. We say that a channel $\Phi$ is \emph{completely positive} if, for a reference system $R$ of arbitrary size, the induced map $I_R \otimes \Phi$ is positive, and we call it \emph{trace-preserving} if $\Tr[\Phi(X)] = \Tr[X]$, for all $X \in \algo L(\algo H)$. A quantum channel that is both completely positive and trace-preserving is called a quantum $\mathsf{CPTP}$ channel.
A \emph{unitary} $U: L(\mathcal{H}_A) \rightarrow L(\mathcal{H}_A)$ is a special case of a quantum channel that satisfies $U^\dagger U = U U^\dagger = I_A$. An isometry is a linear map $V: L(\mathcal{H}_A) \rightarrow L(\mathcal{H}_B)$ with $\dim(\mathcal{H}_B) \geq \dim(\mathcal{H}_A)$ and $V^\dag V = I_A$.
A \emph{projector} ${\Pi}$ is a Hermitian operator such that ${\Pi}^2 = {\Pi}$, and a \emph{projective measurement} is a collection of projectors $\{{\Pi}_i\}_i$ such that $\sum_i {\Pi}_i = I$.
A positive-operator valued measure ($\mathsf{POVM}$) is a set of Hermitian positive semidefinite operators $\{M_i\}$ acting on a Hilbert space $\mathcal{H}$ such that $\sum_{i} M_i = I$. 


\paragraph{Quantum algorithms.}
A polynomial-time \emph{uniform} quantum algorithm (or $\mathsf{QPT}$ algorithm) is a polynomial-time family of quantum circuits given by $\mathcal{C} = \{C_\lambda\}_{\lambda \in \N}$, where each circuit $C \in \algo C$ is described by a sequence of unitary gates and measurements; moreover, for each $\lambda \in \N$, there exists a deterministic polynomial-time Turing machine that, on input $1^\lambda$, outputs a circuit description of $C_\lambda$. Similarly, we also define (classical) probabilistic polynomial-time $(\mathsf{PPT})$ algorithms. A quantum algorithm may, in general, receive (mixed) quantum states as inputs and produce (mixed) quantum states as outputs. We frequently restrict $\mathsf{QPT}$ algorithms implicitly; for example, if we write $\Pr[\mathcal{A}(1^{\lambda}) = 1]$ for a $\mathsf{QPT}$ algorithm $\mathcal{A}$, it is implicit that $\mathcal{A}$ is a $\mathsf{QPT}$ algorithm that outputs a single classical bit.
We say that a quantum algorithm $\mathcal{A}$ has oracle access to a classical function $f: \{0,1 \}^{n} \rightarrow \{0,1 \}^m$, denoted by $\mathcal{A}^f$, if $\mathcal{A}$ is allowed to use a unitary gate $\oracle_f$ at unit cost in time. The unitary $\oracle_f$ acts as follows on the computational basis states of a Hilbert space $\mathcal{H}_X \otimes \mathcal{H}_Y$ of $n+m$ qubits:
$$
\oracle_f: \quad
\ket{x}_X \otimes \ket{y}_Y \longrightarrow \ket{x}_X \otimes \ket{y \oplus f(x)}_Y,
$$
where the operation $\oplus$ denotes bit-wise addition modulo $2$. Oracles with quantum query-access have been studied extensively, for example in the context of quantum complexity theory~\cite{Bennett_1997}, as well as in cryptography~\cite{10.1007/978-3-642-25385-0_3,cryptoeprint:2018/904,cryptography4010010}.

\paragraph{One-Way-to-Hiding Lemma.}

We use the following lemma which combines~\cite[Theorem 3]{cryptoeprint:2018/904} and ~\cite[Lemma 8]{cryptoeprint:2018/904}, where the latter is rooted in Vazirani's \emph{Swapping Lemma}~\cite{Vaz98}. 

\begin{lemma}[One-Way-to-Hiding Lemma,~\cite{cryptoeprint:2018/904}]\label{lem:O2H}
Let $\algo X,\algo Y$ be arbitrary sets and let $\algo S \subseteq \algo X$ be a (possibly random) subset. Let $G,H: \algo X \rightarrow \algo Y$ be arbitrary (possibly random) functions such that $H(x)=G(x)$, for all $x \notin \algo S$. Let $z$ be a classical bit string or a (possibly mixed) quantum state (Note that $G,H,S,z$ may have arbitrary joint distribution). Let $\algo A$ be an
oracle-aided quantum algorithm that makes at most $q$ quantum queries. Let $\algo B$ be an algorithm that on input $z$ chooses a random query index $i \leftarrow [q]$, runs $\algo A^H(z)$, measures $\algo A$'s $i$-th query and outputs the measurement
outcome. Then, we have    
$$
\left|\Pr[\algo A^G(z)=1] - \Pr[\algo A^H(z)=1] \right| \leq 2 q \sqrt{\Pr[\algo B^H(z) \in \algo S]}.
$$
Moreover, for any fixed choice of $G,H,S$ and $z$ (when $z$ is a classical string or a pure state), we get
$$
\big\| \ket{\psi^H_q} - \ket{\psi^G_q} \big\| \leq 2q\sqrt{\frac{1}{q}\sum_{i = 0}^{q-1} \big\|\Pi_{\algo S}\ket{\psi^H_i}\big\|^2},
$$
where $\ket{\psi^H_i}$ denotes the intermediate state of $\algo A$ just before the $(i+1)$-st query, where the initial state at $i=0$ corresponds to $z$, and $\Pi_{\algo S}$ is a projector onto $\algo S$.
\end{lemma}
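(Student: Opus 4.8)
The plan is to first prove the second (``moreover'') inequality for a fixed choice of $G,H,\mathcal{S}$ and pure or classical $z$, since this is the technical core, and then obtain the probability bound and the general (random, mixed) case as consequences. I would model the execution of $\mathcal{A}^H$ as an alternating product of $\mathcal{A}$'s internal unitaries $U_0,\dots,U_q$ and oracle calls $\oracle_H$, so that $\ket{\psi_q^H} = U_q \oracle_H U_{q-1} \cdots \oracle_H U_0 \ket{z}$, with $\ket{\psi_i^H}$ the state just before the $(i{+}1)$-st query, and analogously for $G$. The single structural fact driving everything is that the oracle difference is supported only where the query register lies in $\mathcal{S}$: since $H(x)=G(x)$ for $x\notin\mathcal{S}$, the operator $\oracle_H-\oracle_G$ annihilates every basis state $\ket{x}\ket{y}$ with $x\notin\mathcal{S}$; as both oracles are unitary, it has operator norm at most $2$, giving $\|(\oracle_H-\oracle_G)\ket{\phi}\|\le 2\,\|\Pi_{\mathcal{S}}\ket{\phi}\|$ for every $\ket{\phi}$.

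First I would telescope the difference of the two final states through the hybrids $\ket{\Xi_j}$ in which the first $j$ queries use $H$ and the remaining $q-j$ use $G$, so that $\ket{\Xi_q}=\ket{\psi_q^H}$ and $\ket{\Xi_0}=\ket{\psi_q^G}$. Each consecutive difference $\ket{\Xi_j}-\ket{\Xi_{j-1}}$ equals a norm-preserving product of unitaries applied to $(\oracle_H-\oracle_G)\ket{\psi_{j-1}^H}$, where crucially the state at the swap point is the genuine $H$-state $\ket{\psi_{j-1}^H}$ (this is why the hybrids are ordered to peel off $H$-queries). Applying the triangle inequality together with the support bound yields $\|\ket{\psi_q^H}-\ket{\psi_q^G}\|\le 2\sum_{i=0}^{q-1}\|\Pi_{\mathcal{S}}\ket{\psi_i^H}\|$, and a single application of Cauchy--Schwarz converts this sum of norms into the claimed $\ell_2$ form $2q\sqrt{\frac{1}{q}\sum_{i}\|\Pi_{\mathcal{S}}\ket{\psi_i^H}\|^2}$.

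Next I would derive the first inequality. For pure $z$, the output bit of $\mathcal{A}$ is obtained by a fixed measurement, and the distinguishing advantage of any measurement on two pure states is at most their trace distance, which in turn is at most their Euclidean distance; hence $\big|\Pr[\mathcal{A}^G(z)=1]-\Pr[\mathcal{A}^H(z)=1]\big|\le\|\ket{\psi_q^H}-\ket{\psi_q^G}\|$. It then suffices to identify the averaged projector weight with $\mathcal{B}$'s success probability: by construction $\mathcal{B}$ draws $i\leftarrow[q]$ uniformly and measures the $i$-th query register of $\mathcal{A}^H$, so $\Pr[\mathcal{B}^H(z)\in\mathcal{S}]=\frac{1}{q}\sum_{i=0}^{q-1}\|\Pi_{\mathcal{S}}\ket{\psi_i^H}\|^2$. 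Substituting into the moreover bound gives exactly $2q\sqrt{\Pr[\mathcal{B}^H(z)\in\mathcal{S}]}$ for fixed data and pure $z$.

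Finally I would discharge the fixedness and purity assumptions. A mixed quantum $z$ is handled by purifying it on a reference register that $\mathcal{A}$ never acts on, applying the pure-state bound, and tracing out; randomness over $(G,H,\mathcal{S},z)$ is handled by taking expectations of both sides and pulling the expectation inside the square root via Jensen's inequality (concavity of $\sqrt{\cdot}$), which collapses $\mathbb{E}\sqrt{\Pr[\mathcal{B}^H(z)\in\mathcal{S}]}\le\sqrt{\mathbb{E}\,\Pr[\mathcal{B}^H(z)\in\mathcal{S}]}$ to the stated form. I expect the telescoping/swapping step to be the main obstacle: one must keep careful track of which intermediate states carry the $\Pi_{\mathcal{S}}$-weight (they must be the genuine $\ket{\psi_i^H}$, not hybrid states), and verify that the support bound for $\oracle_H-\oracle_G$ composes correctly with the intervening unitaries without losing the constant. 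This bookkeeping is precisely the content of Vazirani's swapping lemma, and getting the index range and the factor of $2$ right is the delicate part.
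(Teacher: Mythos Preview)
Your proposal is correct and follows the standard hybrid/swapping argument that underlies the cited result. Note, however, that the paper does not actually prove this lemma: it is imported from~\cite{cryptoeprint:2018/904} (combining their Theorem~3 and Lemma~8, the latter being Vazirani's swapping lemma), so there is no ``paper's own proof'' to compare against. Your sketch---telescoping through hybrids that switch one oracle call at a time, bounding each swap by $2\|\Pi_{\mathcal{S}}\ket{\psi_i^H}\|$ via the support of $\oracle_H-\oracle_G$, then Cauchy--Schwarz and Jensen---is exactly the argument in those references, so you have correctly reconstructed the intended proof.
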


\paragraph{Pseudorandom Permutations.}

A quantum-secure pseudorandom permutation is a a bijective function family which can be constructed from quantum-secure one-way functions~\cite{zhandry2016notequantumsecureprps}.
\begin{definition}[QPRP]\label{def:qprp}
Let $\lambda \in \N$ denote the security parameter. Let $P: \bit^\lambda \times \bit^n \rightarrow \bit^n$ be a function, where $n(\lambda)=\poly(\lambda)$ is an integer, such that each function $P_k(x)=P(k,x)$ in the corresponding family $\{P_k\}_{k\in\bit^\lambda}$ is bijective. We say $P$ is a (strong) \textit{quantum-secure pseudorandom permutation} (or QPRP) if, for every QPT $\algo A$ with access to both the function and its inverse, it holds that
\begin{equation*}\label{eq:PRP}
\left|\Pr_{k \sim \bit^\lambda} \left[\algo A^{P_k,P_k^{-1}}(1^\lambda) = 1\right]
- \Pr_{\varphi\sim \mathcal{P}_n} \left[\algo A^{\varphi,\varphi^{-1}}(1^\lambda) = 1\right]\right| \leq \negl(\lambda)\,,
\end{equation*}
where $\mathcal P_n$ denotes the set of permutations over $n$-bit strings.
\end{definition}

We use the following result due to Zhandry which applies to random permutations.

\begin{theorem}[\cite{zhandry2013notequantumcollisionset}, Theorem 3.1] \label{thm:Zhandry-result} Any $q$-query quantum algorithm can distinguish random functions from random permutations over $\bit^n$ with advantage at most $O(q^3/2^n)$.
\end{theorem}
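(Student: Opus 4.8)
The plan is to reduce the distinguishing problem to the single detectable feature that separates the two oracle distributions---namely a \emph{collision}, i.e.\ a pair of distinct inputs mapped to the same output, which a random function on $\bit^n$ exhibits but a random permutation never does---and then to bound the quantum advantage of detecting such a collision by $O(q^3/2^n)$. The natural tool is Zhandry's polynomial framework for oracle distributions. For any fixed oracle $H\colon \bit^n \to \bit^n$, the acceptance probability $\Pr[\mathcal{A}^H = 1]$ of a $q$-query algorithm is a real multilinear polynomial of degree at most $2q$ in the indicator variables $[H(x)=y]$; consequently $\mathbb{E}_H[\Pr[\mathcal{A}^H=1]]$ depends on the oracle distribution only through the joint statistics of $H$ on at most $2q$ inputs. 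Writing $N=2^n$, the entire task therefore reduces to bounding $\sum_M c_M\,(\mu_F(M) - \mu_P(M))$, where $M$ ranges over degree-$\le 2q$ monomials, $c_M$ are the (algorithm-dependent) coefficients, and $\mu_F,\mu_P$ are the monomial expectations under the random function and random permutation respectively.

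First I would evaluate the per-monomial discrepancy. A monomial fixing $k \le 2q$ distinct inputs to outputs $y_1,\dots,y_k$ has $\mu_F(M) = N^{-k}$, while $\mu_P(M) = \tfrac{(N-k)!}{N!}$ when the $y_i$ are all distinct and $\mu_P(M)=0$ when two outputs coincide, since a permutation forbids collisions. Thus the discrepancy $\mu_F - \mu_P$ splits into two pieces: on \emph{collision monomials} (some $y_i=y_j$) it equals the full $N^{-k}$, whereas on injective monomials it is the smaller correction $\mu_F(M)-\mu_P(M) = -N^{-k}\big(\binom{k}{2}/N\big)(1+o(1))$, of order $O\!\big(N^{-k}\,k^2/N\big)$. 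The heart of the argument---and the step I expect to be the main obstacle---is to control the weighted sums $\sum_M c_M\,\mu_F(M)$ over these two families of monomials without any handle on the individual coefficients $|c_M|$ (the naive statistical-distance bound on the $2q$-point marginals fails precisely because the coefficients may be large). This is exactly the content of the tight quantum collision lower bound (matching the Brassard--H{\o}yer--Tapp algorithm): a symmetry-and-counting analysis over the $O(q^2)$ candidate colliding query-pairs, each carrying an $O(q/N)$ effective collision weight, yields the final scaling $O(q^3/2^n)$. The extra factor of $q$ relative to the classical birthday bound $O(q^2/2^n)$ is precisely the quantum speedup of collision finding.

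A conceptually cleaner-looking alternative would route the argument through the compressed-oracle formalism: lazily represent the random function as a superposition over partial databases, couple it to the permutation oracle so the two agree on every collision-free database, and charge any deviation to the weight on databases containing a repeated output. I would flag, however, that this route as stated is \emph{lossy}: bounding the advantage by the norm of the collision-supported part of the final state only gives $O\!\big(\sqrt{q^3/2^n}\big)$, whereas the achievable advantage (run Brassard--H{\o}yer--Tapp collision search and output ``function'' iff a collision is found) is $\Theta(q^3/2^n)$. Matching this tight bound therefore genuinely requires the \emph{probability-level} analysis of the polynomial method rather than an amplitude/norm argument, which is why I would make the polynomial route primary and treat the collision-monomial coefficient estimate as the crux of the proof.
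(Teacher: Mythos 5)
The paper itself offers no proof of this statement---it is imported verbatim from Zhandry \cite{zhandry2013notequantumcollisionset}---so your proposal must be measured against Zhandry's actual argument, and there it has a genuine gap at precisely the step you yourself flag as the crux. Your setup is right: by the polynomial method the acceptance probability is a degree-$\leq 2q$ polynomial in the indicators $[H(x)=y]$, so only the $2q$-wise marginals of the oracle distribution matter, and your per-monomial discrepancies are computed correctly (collision monomials contribute the full $N^{-k}$, injective ones only $O(N^{-k}k^2/N)$). But your claimed resolution---``a symmetry-and-counting analysis over the $O(q^2)$ candidate colliding query-pairs, each carrying an $O(q/N)$ effective collision weight''---is not an argument; it is the target bound restated as numerology. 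Since, as you admit, there is no handle on the coefficients $c_M$ (which can be enormous for quantum query algorithms), no counting of monomials or query-pairs can bound $\sum_M c_M\,(\mu_F(M)-\mu_P(M))$. Appealing to ``the content of the tight quantum collision lower bound'' does not rescue this: the collision bound is a hardness result for a \emph{search} problem, a distinguisher need not ever output a collision, so there is no black-box reduction from the distinguishing-advantage bound to collision-finding hardness---in fact Zhandry's note derives its collision corollaries \emph{from} the distinguishing technique, not the other way around.

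The idea missing from your writeup is Zhandry's interpolation, which sidesteps the coefficient problem entirely. For each integer $r$, let $E_r$ be the distribution of $h \circ g$, where $g: \bit^n \rightarrow [r]$ is a uniformly random function and $h: [r] \rightarrow \bit^n$ is a uniformly random injection; then $E_N$ (with $N=2^n$) is exactly a uniformly random function, the $r \rightarrow \infty$ limit is a uniformly random permutation, and a direct computation shows that every marginal on at most $2q$ points is a polynomial of degree $\leq 2q$ in $\lambda = 1/r$. Consequently the acceptance probability is a \emph{single} polynomial $p(\lambda)$ of degree $d \leq 2q$, and the uncontrolled coefficients are tamed for free: whatever they are, $p(1/r)$ is a probability and hence lies in $[0,1]$ at the infinitely many points $\{1/r : r \in \N\}$. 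The advantage is $|p(1/N)-p(0)|$, and the cubic scaling comes from the final extremal step---a Markov-brothers-type inequality showing that any degree-$d$ polynomial bounded at all points $1/r$ satisfies $|p(0)-p(1/N)| = O(d^3/N)$. That is where $q^3$ genuinely originates, not from counting colliding pairs, and without the interpolating family and this polynomial bound your proof does not go through. (Your side remark that the compressed-oracle route is lossy, yielding only $O\bigl(\sqrt{q^3/2^n}\bigr)$ from an amplitude-level bound, is correct, and is indeed why the probability-level polynomial argument is the one that achieves the tight bound.)
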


\paragraph{Subset States.} We consider the following notations. 
\begin{itemize}

\item We denote the set of distinct $k$-tuples over a set $S$ by $\mathrm{dist}(S,k)$.
    \item Suppose $S$ is a set. We denote $\ket{S} = \frac{1}{\sqrt{|S|}} \sum_{x \in S} \ket{x}$. 
    \item Suppose $X=\{x_1,\ldots,x_t\} \subseteq \{0,1\}^{n}$. We denote $\ket{\perm_{X}} = \frac{1}{t!} \sum_{\sigma \in S_t}  \ket{x_{\sigma(1)},\ldots,x_{\sigma(t)}}$, where $S_t$ denotes the symmetric group on $[t]$.  
    
\end{itemize}

We use the following lemma which follows from Propositions 3.3 and 3.4 in~\cite{JMW23}.

\begin{lemma}[\cite{JMW23}]
\label{lem:jmw23}
Let $n,k \in \N$. Let $T \subseteq \bit^n$ be a subset of size $|T|=t$. Then, it holds that
$${\sf TD}\left( \underset{\substack{S \subseteq T\\
|S|=s}}{\E} \left[ \ketbra{S}{S}^{\otimes k} \right],\ \underset{\substack{X \subseteq T\\
|X|=k}}{\E} \big[ \ketbra{\perm_{X}}{\perm_X} \big]  \right) \, \leq \, O\left( \frac{k}{\sqrt{s}}  + \frac{s k}{t} \right).$$    
\end{lemma}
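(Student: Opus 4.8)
The plan is to route through a single intermediate object and control two genuinely different sources of error: one governed by $s$ being large (so that sampling $k$ elements from $S$ rarely produces collisions), and one governed by $s$ being small relative to $t$ (the sparsity of $S$ inside $T$). These two estimates are exactly the content of Propositions 3.3 and 3.4 of~\cite{JMW23}, and the claimed bound follows from them by the triangle inequality for trace distance. Write $\rho_0 = \E_{S \subseteq T, |S|=s}[\ketbra{S}{S}^{\otimes k}]$ and $\rho_2 = \E_{X \subseteq T, |X|=k}[\ketbra{\perm_X}{\perm_X}]$ for the two states being compared.

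First I would decompose $\ket{S}^{\otimes k} = s^{-k/2}\sum_{\vec x \in S^k}\ket{\vec x}$ into its \emph{distinct-tuple} part and its \emph{collision} part, and replace $\ket{S}^{\otimes k}$ by the renormalized distinct part $\ket{\widehat S} \propto \sum_{\vec x \in \mathrm{dist}(S,k)}\ket{\vec x}$. Since there are $s(s-1)\cdots(s-k+1)$ distinct $k$-tuples in $S$, the squared overlap is $|\langle \widehat S | S^{\otimes k}\rangle|^2 = \prod_{i=0}^{k-1}(1 - i/s) \ge 1 - O(k^2/s)$, and the inequality $\TD(\ketbra{\psi}{\psi},\ketbra{\phi}{\phi}) \le \sqrt{1-|\langle\psi|\phi\rangle|^2}$ together with joint convexity of trace distance under the average over $S$ bounds this step by $O(k/\sqrt s)$. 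This produces the first error term.

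Second, regrouping the distinct tuples by their underlying $k$-element set gives $\ket{\widehat S} = \binom{s}{k}^{-1/2}\sum_{X \subseteq S,\,|X|=k}\ket{\perm_X}$, where the states $\ket{\perm_X}$ are orthonormal as $X$ ranges over $k$-subsets. Hence $\ketbra{\widehat S}{\widehat S}$ splits into a diagonal part ($X=X'$) and an off-diagonal part. Taking the expectation over a uniformly random $S$ of size $s$, the diagonal part equals $\rho_2$ \emph{exactly}, by the hypergeometric identity $\Pr_S[X\subseteq S]/\binom{s}{k} = \binom{t-k}{s-k}/(\binom{t}{s}\binom{s}{k}) = 1/\binom{t}{k}$ (equivalently, the induced marginal on $X$ is uniform over $k$-subsets of $T$). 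Consequently the entire remaining discrepancy is carried by the off-diagonal error operator
$$E = \binom{s}{k}^{-1}\sum_{\substack{X \neq X' \subseteq T\\ |X|=|X'|=k}}\Pr_S[X\cup X'\subseteq S]\,\ket{\perm_X}\bra{\perm_{X'}},$$
and it remains to prove $\|E\|_1 = O(sk/t)$.

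Bounding $\|E\|_1$ is the main obstacle. Viewing $E$ as a Hermitian, trace-zero matrix in the orthonormal basis $\{\ket{\perm_X}\}$, the same hypergeometric identity shows that an entry with $|X \setminus X'|=j$ has magnitude $\approx (s/t)^j/\binom{t}{k}$, so the off-diagonal weight is dominated by pairs differing in a single element ($j=1$). The naive estimate $\|E\|_1 \le \sqrt{\mathrm{rank}(E)}\,\|E\|_F$ only yields $O(s\sqrt{k/t})$, which is too weak; obtaining the sharp $O(sk/t)$ requires exploiting the fact that the entries of $E$ depend only on $|X\cap X'|$, so that $E$ lies in the Bose–Mesner algebra of the Johnson scheme and its eigenvalues can be computed (or tightly bounded) block-by-block rather than treating $E$ as a generic matrix. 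This spectral estimate is precisely the technical heart encapsulated by Proposition 3.4 of~\cite{JMW23}. Combining the two steps via the triangle inequality yields $\TD(\rho_0,\rho_2) \le O(k/\sqrt s) + O(sk/t)$, as claimed.
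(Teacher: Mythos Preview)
Your proposal is correct and matches the paper's approach: the paper does not prove this lemma at all but simply states that it ``follows from Propositions~3.3 and~3.4 in~\cite{JMW23}'', and your sketch unpacks exactly that route---an intermediate state, two error terms of sizes $O(k/\sqrt{s})$ and $O(sk/t)$, combined by the triangle inequality---while correctly identifying the trace-norm bound on the off-diagonal Johnson-scheme operator $E$ as the nontrivial input imported from~\cite{JMW23}.
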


\section{$k \mapsto k+1$ Unforgeability of Subset States} 

We now prove the following theorem. Roughly speaking, our theorem says that any quantum algorithm which receives $k$ copies of a random subset state $\ket{S}$ (and a membership oracle for $S$) cannot find $k+1$ distinct elements in $S$ with high probability unless it makes a large number of queries.

\begin{theorem}[$k \mapsto k+1$ Unforgeability of Subset States]\label{thm:rss:pd}
Let $n \in \N$ and $k \in \N$. Then, for any $q$-query quantum oracle algorithm $\mathcal{A}$, and any $1 \leq k < s < t \leq 2^n$, it holds that
\begin{align*}
&\Pr_{\substack{S \subseteq \bit^n\\
|S|=s}}\Big[
(x_1,\dots,x_{k+1}) \in \mathrm{dist}(S,k+1) \, : \, (x_1,\dots,x_{k+1})\leftarrow \mathcal{A}^{\oracle_S}(\ket{S}^{\otimes k}) 
\Big]\\
&\quad \leq O\left(q \cdot \sqrt{\frac{t-s}{2^n}}+q \cdot \sqrt{\frac{t-k}{2^n}} + \frac{k}{\sqrt{s}}  + \frac{s k}{t}\right)+\negl(n).
\end{align*}
In particular, we can let $k=\poly(n)$, $q=\poly(n)$ and $s(n)= n^{\omega(1)}$ be superpolynomial. Then, for any $t(n)= n^{\omega(1)}$ with $s(n)/t(n) = 1/n^{\omega(1)}$ and $t(n)/2^n = 1/n^{\omega(1)}$, the probability is at most $\negl(n)$.

\end{theorem}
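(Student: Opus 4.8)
The plan is to deduce the forging bound from a single statistical indistinguishability claim between two ``worlds'', and then prove the latter through a short chain of hybrids. In the \emph{real} world the algorithm is given $(\ket{S}^{\otimes k},\oracle_S)$ with $S$ a uniformly random size-$s$ subset of $\bit^n$; in the \emph{simulated} world it is given $(\ket{\perm_X},\oracle_X)$ with $X$ a uniformly random size-$k$ subset of $\bit^n$. Granting this indistinguishability, the theorem is immediate: I build a distinguisher $\D$ that runs $\A$, reads off its $k+1$ outputs $(x_1,\dots,x_{k+1})$, queries each to the membership oracle, and outputs $1$ iff the tuple is distinct and all $k+1$ queries return $1$. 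In the real world $\Pr[\D=1]$ equals the forging probability $p$ we wish to bound, whereas in the simulated world $\Pr[\D=1]=0$, because $|X|=k$ forbids $k+1$ distinct members. Hence $p$ is at most the distinguishing advantage between the two worlds.

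I would prove the indistinguishability in four steps. First, replace $\oracle_S$ by $\oracle_T$, where $T\supseteq S$ is a random superset of size $t$; since the two oracles differ only on $T\setminus S$, I apply \Cref{lem:O2H} with $H=\oracle_S$ and hidden set $T\setminus S$. The key observation is that, conditioned on the entire view $(\ket{S}^{\otimes k},\oracle_S)$, the set $T\setminus S$ is a uniformly random size-$(t-s)$ subset of $\bit^n\setminus S$, so each measured query lands in it with probability at most $(t-s)/(2^n-s)$; this contributes the $q\sqrt{(t-s)/2^n}$ term. Second, re-order the sampling as ``$T$ first, then $S\subseteq T$'', which induces the identical joint law on $(S,T)$ and so is free. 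Third, conditioned on $T$, swap $\ket{S}^{\otimes k}$ (averaged over $S\subseteq T$) for $\ket{\perm_X}$ (averaged over $X\subseteq T$, $|X|=k$) using \Cref{lem:jmw23}: because the oracle $\oracle_T$ depends on $T$ alone, any $\oracle_T$-aided measurement distinguishes the two ensembles with advantage at most their trace distance $O(k/\sqrt{s}+sk/t)$. Fourth, replace $\oracle_T$ by $\oracle_X$; these agree off $T\setminus X$, and invoking \Cref{lem:O2H} once more with $H=\oracle_X$ --- under which $T\setminus X$ is a hidden uniformly random size-$(t-k)$ subset of $\bit^n\setminus X$ even if $X$ is revealed in the clear --- bounds the cost by $q\sqrt{(t-k)/2^n}$.

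Summing the four contributions via the triangle inequality yields the stated bound, and one checks that after the fourth step the marginal law of $X$ is exactly uniform over size-$k$ subsets of $\bit^n$, matching a uniform tuple $(x_1,\dots,x_k)\leftarrow\bit^{nk}$ up to the negligible collision probability when $k=\poly(n)$; this, together with the fact that $\D$ makes $q+k+1$ rather than $q$ queries (whose extra O2H contribution $(k+1)\sqrt{t/2^n}$ is negligible in the stated regime), is absorbed into the final $\negl(n)$ term. The asymptotic corollary then follows by taking $s,t=n^{\omega(1)}$ with $s/t$ and $t/2^n$ negligible.

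I expect the main obstacle to be orienting the two O2H applications correctly. \Cref{lem:O2H} bounds the advantage by the probability that the extractor $\B$, run with one \emph{specific} oracle, queries the differing set; the whole argument hinges on running $\B$ with the oracle ($\oracle_S$ in step one, $\oracle_X$ in step four) under which the differing set is information-theoretically hidden and uniformly random, so that its hitting probability is controlled by the raw density $(t-s)/2^n$ (resp.\ $(t-k)/2^n$) rather than by anything the oracle could help the adversary amplify. A secondary point is to justify the third step uniformly: I need \Cref{lem:jmw23} to hold for \emph{every} fixed $T$, so that averaging over $T$ --- and over the $T$-dependent oracle $\oracle_T$ --- preserves the trace-distance bound.
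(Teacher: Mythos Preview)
Your proposal is correct and follows essentially the same approach as the paper: reduce forging to distinguishing $(\ket{S}^{\otimes k},\oracle_S)$ from $(\ket{\perm_X},\oracle_X)$ via the same four-hybrid chain (reorder sampling, O2H to $\oracle_T$, \Cref{lem:jmw23} to swap the state, O2H to $\oracle_X$), then note that the forging probability in the $X$-world is zero. Your treatment of the O2H orientation, the extra $k+1$ queries made by the distinguisher, and the uniform-in-$T$ application of \Cref{lem:jmw23} are all handled correctly and match the paper's argument.
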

\begin{proof}
Using Lemma \ref{lem:rss:pdf:helpful}, we can prove Theorem \ref{thm:rss:pd} as follows: Let's assume for contradiction that there exists a $q$ query quantum oracle algorithm $\cal{A}$, and a $1\leq k<s<t\leq 2^n$, such that 

 \begin{align*}
&\Pr_{\substack{S \subseteq \bit^n\\
|S|=s}}\Big[
(x_1,\dots,x_{k+1}) \in \mathrm{dist}(S,k+1) \, : \, (x_1,\dots,x_{k+1})\leftarrow \mathcal{A}^{\oracle_S}(\ket{S}^{\otimes k}) 
\Big]\\
&\quad = O\left(q \cdot \sqrt{\frac{t-s}{2^n}}+q \cdot \sqrt{\frac{t-k}{2^n}}+ \frac{k}{\sqrt{s}}  + \frac{s k}{t}\right)+\delta(n).
\end{align*}
where $\delta(.)$ is some non negligible function. 
Then, from Lemma \ref{lem:rss:pdf:helpful}, this implies that,
\begin{align*}
&\Pr_{\substack{
X \subseteq \{0,1\}^n, |X|=k}}\Big[
(x_1,\dots,x_{k+1}) \in \mathrm{dist}(X,k+1) \, : \, (x_1,\dots,x_{k+1})\leftarrow \mathcal{A}^{\oracle_X}(\ket{\perm_{X}}) 
\Big]\\&\geq \delta(n)
\end{align*}
However, the probability that $\cal{A}$ can succeed in this experiment is 0. This is because, $X$ only contains $k$ elements, and therefore, $\cal{A}$ can never produce $k+1$ distinct elements from $X$. This must imply that $\delta(n)$ is negligible. 

\end{proof}

\paragraph{Technical Lemma.}

We need to show the following lemma which made use of in \Cref{thm:rss:pd}.

\begin{lemma} 
\label{lem:rss:pdf:helpful}
Let $n,k,t \in \N$ be integers such that $1 \leq k < s < t \leq 2^n$. Then, for any $q$-query quantum oracle algorithm $\mathcal{D}$ which outputs a single bit, it holds that
\begin{align*}
\Bigg| &\Pr_{\substack{
S \subseteq \bit^n\\
|S|=s
}}\left[\mathcal{D}^{\oracle_S}\left( \ket{S}^{\otimes k}\right)=1 \right] - \Pr_{\substack{X \subseteq \bit^n\\|X|=k}}\left[\mathcal{D}^{\oracle_X}\left( \ket{\perm_{X}}\right)=1 \right] \Bigg|\\
&\leq O\left(q \cdot \sqrt{\frac{t-s}{2^n}} +q \cdot \sqrt{\frac{t-k}{2^n}} +\frac{k}{\sqrt{s}}  + \frac{s k}{t} \right).
\end{align*}
\end{lemma}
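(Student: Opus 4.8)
The plan is to interpolate between the two distributions via a chain of hybrids that matches the four-step roadmap sketched in the technical overview, using the One-Way-to-Hiding Lemma (\Cref{lem:O2H}) to swap membership oracles and \Cref{lem:jmw23} to swap the underlying states. Throughout, fix an auxiliary subset $T \subseteq \bit^n$ of size $t$; the strategy is to ``drown out'' the small set $S$ inside the much larger set $T$, then replace the copies of $\ket{S}$ by the permutation-symmetric state $\ket{\perm_X}$, and finally shrink the oracle from membership-in-$T$ down to membership-in-$X$.

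Concretely, I would define the following hybrids and bound each consecutive trace/statistical distance. In hybrid $H_0$ the distinguisher $\mathcal{D}$ receives $\ket{S}^{\otimes k}$ with oracle $\oracle_S$, where $S \subseteq \bit^n$ is uniform of size $s$; this is the left-hand side. In $H_1$ I keep the same state but replace the oracle $\oracle_S$ by $\oracle_T$ for a fixed (or freshly sampled) superset $T \supseteq S$ of size $t$. Since $\oracle_S$ and $\oracle_T$ agree outside $T \setminus S$, \Cref{lem:O2H} bounds $|\Pr[H_0=1]-\Pr[H_1=1]|$ by $2q\sqrt{\Pr[\mathcal{B}\text{ finds an element of }T\setminus S]}$; because $S$ is a random size-$s$ subset of the ambient $2^n$-element space and the extractor has no information pinning down $T\setminus S$ beyond its density, this probability is $O((t-s)/2^n)$, giving the term $q\sqrt{(t-s)/2^n}$. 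Next, in $H_2$ I resample so that $T$ is drawn first and $S$ is a uniform size-$s$ subset of $T$ — this is an identical reformulation of the distribution and costs nothing. In $H_3$ I replace $\E_{S\subseteq T}\ketbra{S}{S}^{\otimes k}$ by $\E_{X\subseteq T,|X|=k}\ketbra{\perm_X}{\perm_X}$, keeping the oracle $\oracle_T$; by \Cref{lem:jmw23} this costs $O(k/\sqrt{s} + sk/t)$ in trace distance, and since the oracle $\oracle_T$ is independent of whether we hold $\ket{S}$ or $\ket{\perm_X}$ (it only depends on $T$), the distinguishing advantage is bounded by this same quantity. Finally, in $H_4$ I shrink the oracle from $\oracle_T$ to $\oracle_X$, where $X \subseteq T$ is the size-$k$ multiset/tuple underlying $\ket{\perm_X}$; these agree outside $T\setminus X$, so a second application of \Cref{lem:O2H} — crucially valid \emph{even if the tuple $(x_1,\dots,x_k)$ is given in the clear}, since $T$ still hides the remaining $t-k$ elements — yields the term $q\sqrt{(t-k)/2^n}$. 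Summing the four bounds via the triangle inequality gives exactly the stated estimate.

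The main obstacle I anticipate is the $H_0 \to H_1$ and $H_3 \to H_4$ O2H applications, specifically verifying that the ``finding probability'' $\Pr[\mathcal{B}^H(z)\in \mathcal{S}]$ is genuinely small. The subtlety is that in both of these steps the distinguisher holds quantum side information — the state $\ket{S}^{\otimes k}$ (resp.\ $\ket{\perm_X}$) and, in the last step, possibly the explicit description of $X$ or of $T$ — which is correlated with the ``hidden'' region $T\setminus S$ (resp.\ $T\setminus X$). I would need to argue that this side information does not help the extractor locate a point in the hidden region: the copies of $\ket{S}$ are supported entirely on $S$, hence orthogonal to $T\setminus S$, so measuring any query register returns such an element only through the adversary's oracle-guided search, whose success is governed purely by the density $(t-s)/2^n$ of unqueried hidden points. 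Making this rigorous requires care in choosing the joint distribution of $(G,H,S,z)$ in \Cref{lem:O2H} and in showing the extractor $\mathcal{B}$ cannot exploit entanglement between its state and $T$; I expect this to be where the bulk of the technical work lies, whereas the $H_1\to H_2$ reformulation and the $H_2 \to H_3$ invocation of \Cref{lem:jmw23} are essentially bookkeeping.
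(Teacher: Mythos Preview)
Your proposal is correct and follows essentially the same approach as the paper's proof: the same hybrid chain (swap $\oracle_S \to \oracle_T$ via O2H, swap $\ket{S}^{\otimes k} \to \ket{\perm_X}$ via \Cref{lem:jmw23}, swap $\oracle_T \to \oracle_X$ via O2H), with the same bounds at each step. The only cosmetic difference is that the paper performs the ``resample $T$ first, then $S \subseteq T$'' reformulation \emph{before} switching the oracle from $\oracle_S$ to $\oracle_T$, whereas you do it after; since the joint distribution of $(S,T)$ is identical either way, this is immaterial. Your anticipated ``obstacle'' is also handled exactly as the paper does: in each O2H step the extractor's intermediate state depends only on the smaller set ($S$, resp.\ $X$) while the hidden region $T\setminus S$ (resp.\ $T\setminus X$) is a fresh random subset of the complement, so averaging over it gives the density bound $(t-s)/2^n$ (resp.\ $(t-k)/2^n$) directly via Jensen.
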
 

\begin{proof} Consider the following hybrid distributions. \\

\noindent $\hybrid_1$: Output $\mathcal{D}^{\oracle_S}( \ket{S}^{\otimes k})$, where $S \subseteq \bit^n$ is a random subset of size
$|S|=s$.\\

\noindent $\hybrid_2$: Output $\mathcal{D}^{\oracle_S}( \ket{S}^{\otimes k})$, where the subset $S$ is sampled as follows: first, sample a random subset $T \subseteq \bit^n$ of size $|T|=t$, and then let $S \subseteq T$ be a random subset of size
$|S|=s$.\\

\par Let $p(\hybrid_i)$ be the probability that $\hybrid_i$ outputs 1, for some $i$. We now show the following. 
\begin{claim}
$p(\hybrid_2) = p(\hybrid_1)$. 
\end{claim}
\begin{proof}
The distribution of sampling $S \subseteq \bit^n$ of size
$|S|=s$ is identical to the distribution of first sampling a superset $T \subseteq \bit^n$ of size $|T|=t$, and letting $S \subseteq T$ be a random subset of size
$|S|=s$.
\end{proof}

\noindent $\hybrid_3$: Output $\mathcal{D}^{\oracle_T}( \ket{S}^{\otimes k})$, where the subset $S$ is sampled as follows: first, sample a random subset $T \subseteq \bit^n$ of size $|T|=t$, and then let $S \subseteq T$ be a random subset of size
$|S|=s$.

\begin{claim}
$$
\left| p(\hybrid_3) - p(\hybrid_2) \right| \leq O\left(q \cdot \sqrt{\frac{t-s}{2^n}}\right). 
$$
\end{claim}
\begin{proof}
We can model the quantum oracle algorithm $\mathcal{D}^{\oracle_S}$ on input $\ket{S}^{\otimes t}$ as a sequence of oracle queries and unitary computations followed by a measurement. Thus, the final output state just before the measurement can be written as
$$
\ket{\Psi_q^S} = U_q \oracle_S U_{q-1} \dots U_1 \oracle_S U_0\ket{\psi_0}\ket{S}^{\otimes k} \,, 
$$
where $U_0,U_1,\dots,U_q$ are unitaries (possibly acting on additional workspace registers, which we omit above), and where $\ket{\psi_0}$ is some fixed initial state which is independent of $S$. 

In the next step of the proof, we will use the ``subset flooding'' technique to drown $S$ in a random superset.
Let $T \subseteq \bit^n$ be a random superset of $S$ of size $t >s$. We now consider the state 
$$
\ket{\Psi_q^T} = U_q \oracle_T U_{q-1} \dots U_1 \oracle_T U_0\ket{\psi_0}\ket{S}^{\otimes k}.
$$
We now claim that the states $\ket{\Psi_q^S}$ and $\ket{\Psi_q^T}$ are sufficiently close.
From the definition of $\oracle_T$ and $\oracle_S$, we have that $\oracle_T(x)\neq \oracle_S(x)$ iff $x \in T\backslash S \subset \{0,1\}^{n}$. By the O2H Lemma (\Cref{lem:O2H}), 
\begin{align*}
\E_{\substack{T \subseteq \bit^n, |T|=t\\
S \subseteq T, |S|=s}}\left\| \ket{\Psi_q^S} - \ket{\Psi_q^T} \right\| 
&\leq  2q\E_{\substack{T \subseteq \bit^n, |T|=t\\
S \subseteq T, |S|=s}}\sqrt{ \frac{1}{q} \sum_{i=0}^{q-1}\big\|\Pi_{T\backslash S} \ket{\Psi_{i}^S}\big\|^2}\\
&\leq 2q\sqrt{
\frac{1}{q}\sum_{i=0}^{q-1} \,\E_{\substack{T \subseteq \bit^n, |T|=t\\
S \subseteq T, |S|=s}} \big\|\Pi_{T\backslash S} \ket{\Psi_{i}^S}\big\|^2} & \text{(Jensen's inequality)}\\
&= O\left(q \cdot \sqrt{\frac{t-s}{2^n}}\right).
\end{align*}
Therefore, the probability (over the choice of $S$ and $T$) that $\mathcal{D}^{\oracle_S}(\ket{S}^{\otimes k})$ succeeds is 
 at most the probability that $\mathcal{D}^{\oracle_T}(\ket{S}^{\otimes k})$ succeeds---up to an additive loss of
$O(q \cdot \sqrt{\frac{t-s}{2^n}})$. 

\end{proof}

\noindent $\hybrid_4$: Output $\mathcal{D}^{\oracle_T}(\ket{\perm_{X} })$, where the subset $X$ is sampled as follows: first, sample a random subset $T \subseteq \bit^n$ of size $|T|=t$, and then let $X \subseteq T$ be a random subset of size
$|X|=k$.

\begin{claim}
$$
\left| p(\hybrid_4) - p(\hybrid_3) \right| \leq O\left( \frac{k}{\sqrt{s}}  + \frac{s k}{t} \right). 
$$
\end{claim}
\begin{proof}
Here, we make use of~\Cref{lem:jmw23} which says that, for any superset $T \subseteq \bit^n$ of size $|T|=t$,
$${\sf TD}\left( \underset{\substack{S \subseteq T\\
|S|=s}}{\E} \left[ \ketbra{S}{S}^{\otimes k} \right],\ \underset{\substack{X \subseteq T\\
|X|=k}}{\E} \big[ \ketbra{\perm_{X}}{\perm_X} \big]  \right) \, \leq \, O\left( \frac{k}{\sqrt{s}}  + \frac{s k}{t} \right).$$ 

\end{proof}

\noindent $\hybrid_5$: Output $\mathcal{D}^{\oracle_X}(\ket{\perm_{X} })$, where the subset $X$ is sampled as follows: first, sample a random subset $T \subseteq \bit^n$ of size $|T|=t$, and then let $X \subseteq T$ be a random subset of size
$|X|=k$. 

\begin{claim}
$$
\left| p(\hybrid_5) - p(\hybrid_4) \right| \leq O\left(q \cdot \sqrt{\frac{t-k}{2^n}}\right). 
$$
\end{claim}

\begin{proof}
Suppose that $X \subseteq T$ is a random subset of size $|X|=k$, and let $\ket{\perm_{X}} = \frac{1}{k!} \sum_{\sigma \in S_k}  \ket{x_{\sigma(1)},\ldots,x_{\sigma(k)}}$, where $S_k$ denotes the symmetric group on $[k]$. Consider the state
$$
\ket{\Phi_q^T} = U_q \oracle_T U_{q-1} \dots U_1 \oracle_T U_0\ket{\psi_0}\ket{\perm_{X}}.
$$
prepared by $\mathcal{D}^{\oracle_T}(\ket{\perm_X})$ just before the measurement. Similarly, we let
$$
\ket{\Phi_q^X} = U_q \oracle_X U_{q-1} \dots U_1 \oracle_X U_0\ket{\psi_0}\ket{\perm_{X}}.
$$
be the state prepared by $\mathcal{A}^{\oracle_X}(\ket{\perm_X})$. Using \Cref{lem:O2H} as before, we get that
\begin{align*}
\E_{\substack{T \subseteq \bit^n, |T|=t\\
X \subseteq T, |X|=k}}\left\| \ket{\Phi_q^T} - \ket{\Phi_q^X} \right\|
&\leq  2q\E_{\substack{T \subseteq \bit^n, |T|=t\\
X \subseteq T, |X|=k}}\sqrt{
\frac{1}{q}\sum_{i=0}^{q-1}\big\|\Pi_{T\backslash X} \ket{\Phi_{i}^X}\big\|^2}\\
&\leq 2q\sqrt{
\frac{1}{q} \sum_{i=0}^{q-1} \,\E_{\substack{T \subseteq \bit^n, |T|=t\\
X \subseteq T, |X|=k}} \big\|\Pi_{T\backslash X} \ket{\Phi_{i}^X}\big\|^2} & \text{(Jensen's inequality)}\\
&\leq O\left(q \cdot \sqrt{\frac{t-k}{2^n}} \right).
\end{align*}
Therefore, the probability (over the choice of $X$ and $T$) that $\mathcal{D}^{\oracle_T}(\ket{\perm_{X}})$ succeeds is 
 at most the probability that $\mathcal{D}^{\oracle_X}(\ket{\perm_{X}})$ succeeds (up to an additive error of
$\frac{q}{\sqrt{t-k}}$).  
\end{proof}
Therefore, by applying the triangle inequality, we get that
$$
\left| p(\hybrid_1) - p(\hybrid_5) \right| \leq O\left(q \cdot \sqrt{\frac{t-s}{2^n}} +q \cdot \sqrt{\frac{t-k}{2^n}} +\frac{k}{\sqrt{s}}  + \frac{s k}{t} \right).
$$
This proves the claim.
\end{proof}

\section{Multi-Copy Revocable Encryption: Definition}
\label{sec:revenc}

In this section we formally define and construct multi-copy secure revocable encryption schemes. These are regular encryption scheme but where the ciphertexts are associated with quantum states. Moreover, the security property guarantees that any adversary that successfully returns $k$ valid copies of a quantum ciphertext (which it received from a trusted party), where $k$ is an arbitrary polynomial, necessarily loses the ability to decrypt the ciphertexts in the future---even if the secret key is revealed.

Our definition of revocable encryption is as follows:

\begin{definition}[Revocable Encryption] Let $\lambda \in \N$ denote the security parameter. A revocable encryption scheme $\Sigma=(\KeyGen,\Enc,\Dec,\Revoke)$ with plaintext space $\algo M$ consists of the following QPT algorithms:
\begin{itemize}
\item $\KeyGen(1^\lambda)$: on input the security parameter $1^\lambda$, output a secret key $\sk$. 

\item $\Enc(\sk,m)$: on input the secret key $\sk$ and a message $m \in \algo M$, output a (pure) ciphertext state $\ket{\psi}$ and a (private) verification key $\vk$. 

\item $\Dec(\sk,\rho)$: on input the secret key $\sk$ and a quantum state $\rho$, output a message $m'$.

\item $\Revoke(\sk,\vk,\sigma)$: on input the secret key $\sk$, a verification key $\vk$ and a state $\sigma$, output $\top$ or $\bot$.
\end{itemize}
In addition, we require that $\Sigma$ satisfies the following two properties:
\begin{description}
    \item \textbf{Correctness of decryption:} for all plaintexts $m \in \algo M$, it holds that
$$\prob\left[m \leftarrow \dec(\sk,\ket{\psi}) \ :\ \substack{\sk \leftarrow \KeyGen(1^{\secparam})\\ \ \\ (\ket{\psi},\vk) \leftarrow \enc(\sk,m)} \right] \geq 1 - \negl(\secparam).$$

\item \textbf{Correctness of revocation:} for all plaintexts $m \in \algo M$, it holds that
$$\prob\left[\top \leftarrow \Revoke(\sk,\vk,\ket{\psi}) \ :\ \substack{\sk \leftarrow \KeyGen(1^{\secparam})\\ \ \\ (\ket{\psi},\vk) \leftarrow \enc(\sk,m)} \right] \geq 1 - \negl(\secparam).$$
\end{description}
\end{definition}

\noindent There are two properties we require the above scheme to satisfy. \\

\noindent Firstly, we require the above scheme to be correct. That is, we require the following to hold for all $m \in \{0,1\}^{\ell}$,
$$\prob\left[m \leftarrow \dec(\sk,\ket{\psi}) \ :\ \substack{\sk \leftarrow \KeyGen(1^{\secparam})\\ \ \\ \ket{\psi} \leftarrow \enc(\sk,m)} \right] \geq 1 - \eps(\secparam),$$
for some negligible function $\eps(\cdot)$. \\

\paragraph{Multi-Copy Revocable Security.}

We use the following notion of security.

\begin{definition}[Multi-Copy Revocable Security] Let $\lambda \in \N$ denote the security parameter and let $\Sigma=(\KeyGen,\Enc,\Dec,\Revoke)$ be a revocable encryption scheme with plaintext space $\algo M$. Consider the following experiment between a QPT adversary $\algo A$ and a challenger.\\
\ \\
\noindent \underline{$\revokeexperiment_{\lambda,\Sigma,\adversary}(b)$:}
\begin{enumerate}
    \item $\adversary$ submits two messages $m_0,m_1 \in \algo M$ and a polynomial $k=k(\lambda)$ to the challenger.
    \item The challenger samples 
a key $\sk \leftarrow \KeyGen(1^{\secparam})$ and produces $\ket{\psi_b} \leftarrow \enc(\sk,m_b)$. Afterwards, the challenger sends the quantum state $\ket{\psi_b}^{\otimes k}$ to $\adversary$. 
    \item $\adversary$ returns a quantum state $\rho$. 
    \item The challenger performs the measurement $\left\{ \ketbra{\psi_b}{\psi_b}^{\otimes k},\ \mathbb{I} - \ketbra{\psi_b}{\psi_b}^{\otimes k} \right\}$ on the returned state $\rho$. If the measurement succeeds, the game continues; otherwise, the challenger aborts.
    \item The challenger sends the secret key $\sk$ to $\adversary$. 
    \item $\adversary$ outputs a bit $b'$. 
\end{enumerate}
We say that the revocable encryption scheme $\Sigma=(\KeyGen,\Enc,\Dec,\Revoke)$ has multi-copy revocable security, if the following holds for all $\mu \in \{0,1,\bot\}$:
$$\left| \prob\big[ \mu \leftarrow \revokeexperiment_{\lambda,\Sigma,\adversary}(0) \big] - \prob\big[ \mu \leftarrow \revokeexperiment_{\lambda,\Sigma,\adversary}(1) \right] \big| \leq \negl(\secparam) \, ,
$$
\begin{remark}
    In this work, we will work with a weaker variant of this security game, one where the post revocation adversary is given access to an oracle that depends on the key $\mathsf{sk}$ instead. The reason for this will become clear from context.
\end{remark}

\begin{remark}[Search variant]\label{search-security}
Occasionally, we also consider the search variant of multi-copy revocable encryption. Here, the experiment is similar, except that in Step $1$, the adversary only submits $k$ to the challenger. Then, in Step $2$, the challenger chooses a message $m$ uniformly at random from the plaintext space, encrypts it $k$ times and sends all of the copies to the adversary. Finally, the adversary is said to win the game if it guesses $m$ correctly.
\end{remark}
\end{definition}

\section{Construction of Multi-Copy Secure Revocable Encryption}
\label{sec:revocable-encryption-constr}

In this section,
we instantiate our revocable encryption scheme using quantum-secure pseudorandom permutations (QPRPs), and we prove security in an oracular model: this means that, rather than revealing the QPRP key in the final part of the revocable security experiment, we instead allow the adversary to query an oracle for a the permutation instead.

\begin{construction}\label{const:permutation-scheme} Let $\lambda \in \N$ be the security parameter. Let $n,m \in \N$ be polymomial in $\lambda$. Let $\Phi = \{\Phi_\lambda\}_{\lambda \in \N}$ be an ensemble of permutations $\Phi_\lambda = \{\varphi_\kappa : \bit^{n+m} \rightarrow \bit^{n+m}\}_{\kappa \in \algo K_\lambda}$, for some set $\algo K_\lambda$. Consider the scheme $\Sigma^\Phi=(\KeyGen,\Enc,\Dec,\Revoke)$ which consists of the following QPT algorithms:
 \begin{itemize}
    \item $\KeyGen(1^{\secparam})$: sample a uniformly random key $\kappa \in \algo K_\lambda$ and let $\sk = \kappa$.
    \item $\enc(\sk,\mu)$: on input the secret key $\sk=\kappa$ and message $\mu \in \bit^m$, sample  $y \sim \bit^m$ and prepare the subset state given by
    $$
\ket{S_y} = 
\frac{1}{\sqrt{2^n}}\sum_{x \in \bit^n} \ket{\varphi_\kappa(x || y)}. 
$$
    Output the ciphertext state $(\ket{S_y}^{\otimes k},y \oplus \mu)$ and (private) verification key $\vk=y$. 
    
    \item $\dec(\sk,\ct)$: on input the decryption key $\kappa$ and ciphertext state $(\ket{S_y},z) \leftarrow \ct$, do the following: 
    \begin{itemize}
        \item Coherently compute $\varphi_\kappa^{-1}$ on $\ket{S_y}$ and store the answer in a separate output register. 
        \item Measure the output register to get $x'||y' \in \bit^{n+m}$. 
        \item Output $y' \oplus z$. 
    \end{itemize}
\item $\Revoke(\sk,\vk,\rho)$: on input $\sk$, a state $\rho$ and verification key $\vk$, it parses $\kappa \leftarrow \sk$, $y \leftarrow \vk$ and applies the measurement $\left\{ \ketbra{S_y}{S_y},\ \mathbb{I} - \ketbra{S_y}{S_y} \right\}$ to $\rho$; it outputs $\top$ if it succeeds, and $\bot$ otherwise.
\end{itemize}   
\end{construction}

\paragraph{Proof of Multi-Copy Revocable Security.}

\begin{theorem}\label{thm:revoc-encryption}

\Cref{const:permutation-scheme}, when instantiated with a QPRP $\Phi = \{\Phi_\lambda\}_{\lambda \in \N}$, satisfies (an oracular notion of) multi-copy revocable security.
\end{theorem}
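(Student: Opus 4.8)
The plan is to prove security through a sequence of hybrid games, ultimately deriving a contradiction with the $k \mapsto k+1$ unforgeability of subset states (\Cref{thm:rss:pd}). First I would invoke the security of the QPRP (\Cref{def:qprp}): since the strong QPRP guarantee grants the distinguisher oracle access to both $\varphi_\kappa$ and $\varphi_\kappa^{-1}$, and since the entire revocable experiment---preparing the states $\ket{S_y}^{\otimes k}$, running revocation, and handing $\adversary$ the post-revocation oracle for $\varphi_\kappa^{-1}$---can be simulated with two-sided oracle access, I may replace $\varphi_\kappa$ by a truly random permutation $\varphi$ at the cost of a negligible term. For random $\varphi$ and random $y \sim \bit^m$, the set $S_y = \{\varphi(x||y) : x \in \bit^n\}$ is distributed exactly as a uniformly random subset $S \subseteq \bit^{n+m}$ of size $2^n$, the pair $(S,y)$ is independent, and $\varphi^{-1}$ acts on $S$ by sending each element to a string whose last $m$ bits equal $y$. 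This lets me pass to the idealized hybrid $H_2$ of the overview, in which the challenger samples $S$ and $y$ directly and, after revocation, grants $\adversary$ access to an oracle $f$ satisfying $f(s) = \ast||y$ for all $s \in S$.

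Next I would introduce $H_3$, identical to $H_2$ except that $f(s) = \ast||u$ on $S$ for an independent uniform $u$. The point of $H_3$ is that the challenge bit $b$ is now information-theoretically hidden: the only carrier of $b$ is the classical string $y \oplus \mu_b$, and since $S$ is independent of $y$ and the oracle only reveals $u$ (also independent of $y$), the string $y \oplus \mu_b$ is uniform and independent of $b$, so $H_3$ is distributed identically for $b=0$ and $b=1$. Because the revocation measurement is applied to $\ket{S_y}^{\otimes k}$, which does not depend on $b$, the abort probability is likewise independent of $b$; hence if the scheme were insecure, a straightforward hybrid argument forces $\adversary$ to distinguish $H_2$ from $H_3$ with non-negligible advantage $\Delta$ for some fixed $b$. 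Crucially, $H_2$ and $H_3$ feed $\adversary$ two oracles that agree everywhere off $S$ and differ only on $S$, so I am exactly in the regime of the One-Way-to-Hiding Lemma (\Cref{lem:O2H}): measuring a uniformly random query of $\adversary$'s post-revocation interaction with $f$ yields an element of $S$ with probability $\Omega(\Delta^2/q^2)$, giving an extractor $\cal E$ that produces a genuine subset element from $\adversary$'s side information \emph{after} revocation has succeeded.

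Finally I would assemble a $q$-query adversary $\cal A'$ against \Cref{thm:rss:pd}. On input $\ket{S}^{\otimes k}$ and a membership oracle $\oracle_S$, $\cal A'$ samples $y$ and $b$, hands $\adversary$ the state together with $y \oplus \mu_b$, and simulates $f$ using $\oracle_S$ together with its knowledge of $y$ (answering $\ast||y$ on inputs that $\oracle_S$ declares to lie in $S$, and a fixed completion elsewhere, which costs one $\oracle_S$-query per $f$-query and keeps $\cal A'$ query-bounded). The register $\mathsf{R}$ returned by $\adversary$, which passes revocation with non-negligible probability, is measured in the computational basis to yield $k$ elements; since any state in the image of $\ketbra{S}{S}^{\otimes k}$ is exactly $\ket{S}^{\otimes k}$ and a computational measurement of $\ket{S}^{\otimes k}$ returns $k$ distinct elements of $S$ except with probability $O(k^2/2^n)$, these are $k$ distinct subset elements whenever revocation succeeds. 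Running $\cal E$ on the post-revocation phase then extracts a further element of $S$, which collides with the first $k$ only with probability $O(k/2^n)$. Thus $\cal A'$ outputs $k+1$ distinct elements of $S$ from only $k$ copies of $\ket{S}$ with non-negligible probability, contradicting \Cref{thm:rss:pd}.

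The main obstacle is the last step: arguing that the two extractions succeed \emph{simultaneously}. The reduction cannot implement the exact projective revocation $\{\ketbra{S}{S}^{\otimes k},\ \mathbb{I} - \ketbra{S}{S}^{\otimes k}\}$ using only a few queries to $\oracle_S$, so it must substitute a computational-basis measurement of $\mathsf{R}$; but because $\mathsf{R}$ and $\adversary$'s auxiliary register may be entangled, this measurement disturbs the side information differently than genuine revocation would, and one must show that conditioning on the computational measurement returning a distinct $S$-tuple still leaves $\cal E$ enough advantage to recover the $(k+1)$-st element. Reconciling the coherent post-revocation state with the mixture produced by the computational measurement, and tracking that the distinctness of all $k+1$ elements holds jointly with extraction success, is the delicate heart of the argument; I defer this bookkeeping to \Cref{sec:distinct-ext}.
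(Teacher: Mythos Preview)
Your plan tracks the paper's proof closely: invoke QPRP security, recast the experiment with a uniformly random subset $S$ and an oracle constrained on $S$, swap the constraint from $y$ to an independent $u$ so the bit $b$ becomes hidden, apply O2H (\Cref{lem:O2H}) to extract a subset element from the post-revocation queries, and then combine this with a computational-basis measurement of the returned register to produce $k+1$ distinct elements, deferring the joint-success analysis to \Cref{sec:distinct-ext}. This is exactly the paper's strategy.

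There is one step you compress that the paper treats with care. When you ``pass to the idealized hybrid $H_2$'' with an oracle $f$ satisfying $f(s)=\ast||y$ on $S$, you are silently replacing the random \emph{permutation} $\varphi^{-1}$ by a random \emph{function} with that constraint. The paper does this in two hops using Zhandry's permutation-versus-function bound (\Cref{thm:Zhandry-result}): once to turn the bijection $S\to T_y$ into a random function into $T_y$, and once to turn the bijection on the complement into a random function. This matters precisely for your reduction $\cal A'$: your phrase ``a fixed completion elsewhere'' is not a faithful simulation of a random permutation conditioned on $\varphi^{-1}(S)\subseteq T_y$ when $S$ is known only through black-box membership---there is no query-efficient way to lazily sample such a permutation under superposition queries. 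Once the post-revocation oracle is a random \emph{function}, however, the simulation is immediate (sample independent random functions $g_1,g_2$ and branch on $\oracle_S$, exactly as in the paper's $\mathsf{Forge}$ algorithm). With the Zhandry step made explicit, your sketch is the paper's proof.
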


\begin{proof}
Because we are working in the oracular model of revocable security, we can invoke QPRP security and assume that $\Sigma$ in \Cref{const:permutation-scheme} is instantiated with a perfectly random permutation $\varphi$ rather than a QPRP permutation $\varphi_\kappa$.

Suppose that our construction does not achieve multi-copy revocable security. Then, there exists an adversary $\adversary$ such that
$$\left| \prob\left[ \mu \leftarrow \revokeexperiment_{\lambda, \Sigma,\adversary}(0) \right] - \prob\left[ \mu \leftarrow \revokeexperiment_{\lambda, \Sigma,\adversary}(1) \right] \right| = \eps(\secparam),$$
for some non-negligible function $\eps(\cdot)$. For convenience, we model $\adversary$ as a pair of quantum algorithms $(\adversary_0,\adversary_1)$, where $\adversary_0$ corresponds to the pre-revocation adversary, and $\adversary_1$ corresponds to the post-revocation adversary.
We consider the following sequence of hybrid distributions.\\

\noindent $\hybrid_{1}^b$: This corresponds to $\revokeexperiment^{\adversary}(1^{\secparam},b)$.
\begin{enumerate}
 \item $\adversary_0$ submits two $m$-bit messages $(\mu_0,\mu_1)$ and a polynomial $k=k(\secparam)$ to the challenger.
    \item The challenger samples $y \sim \bit^m$ and produces a quantum state
       $$
\ket{S_y} = 
\frac{1}{\sqrt{2^n}}\sum_{x \in \bit^n} \ket{\varphi(x || y)}. 
$$
The challenger then sends $\ket{S_y}^{\otimes k}$ and $y \oplus \mu_b$ to $\adversary_0$. 
    \item $\adversary_0$ prepares a bipartite state on registers $\mathsf{R}$ and $\mathsf{AUX}$, and sends $\mathsf{R}$ to the challenger and $\mathsf{AUX}$ to $\adversary_1$.
    \item The challenger performs the projective measurement $\left\{ \ketbra{S_y}{S_y}^{\otimes k},\ \mathbb{I} - \ketbra{S_y}{S_y}^{\otimes k} \right\}$ on $\mathsf{R}$. If the measurement succeeds, the challenger outputs $\bot$. Otherwise, the challenger continues.
    \item The challenger grants $\adversary_1$ quantum oracle access to $\varphi^{-1}$. 
    \item $\adversary_1$ outputs a bit $b'$. 
\end{enumerate}

\ \\

\noindent $\hybrid_{2}^b$: This is the same experiment as in $\hybrid_{1}^b$, except that we change how $\ket{S_y}^{\otimes k}$ is generated before revocation:
\begin{itemize}
    \item The challenger samples a random subset $S \subseteq \bit^{n+m}$ of size $|S|=2^n$.
    \item The challenger samples a random $y \sim \bit^m$.

    \item The challenger sends $\ket{S}^{\otimes k}$
and $y \oplus \mu_b$ to $\adversary_0$.
\end{itemize}
Later, the challenger samples a random permutation $\pi: \bit^{n+m} \rightarrow \bit^{n+m}$ subject to the constraint that $\pi(s) = \ast||y$, for all $s \in S$. In other words, $\pi(S) = T_y$, where 
$$
T_y := \big\{ x \in \bit^{n+m} \, : \, x = (\ast||y) \big\}$$
and $|S| = |T_y|=2^n$. After revocation, $\adversary_1$ receives oracle access to $\pi$.
\\

\begin{claim}
 $\hybrid_{1}^b$ and $\hybrid_{2}^b$ are identically distributed.  
\end{claim}
\begin{proof}
This follows immediately. We just changed the order in which we sample things.
\end{proof}

\ \\

\noindent $\hybrid_{3}^b$: This is the same experiment as in $\hybrid_{2}^b$, except that we change the second part of the experiment: after the challenger sends $\ket{S}^{\otimes k}$
and $y \oplus \mu_b$ to $\adversary_0$, he does the following:
\begin{itemize}
    \item The challenger samples a random function $g: S \rightarrow T_y$. \footnote{Note that we have $g(S) \subseteq T_y$ in general.}

    \item The challenger samples a random permutation $\omega: (\bit^{n+m} \setminus S) \rightarrow (\bit^{n+m} \setminus T_y)$. 
\end{itemize}
After revocation, $\adversary_1$ receives oracle access to $f:\bit^{n+m} \rightarrow \bit^{n+m}$, where
$$
f(x) = \begin{cases}
g(x), \quad \text{ if } x \in S\\
\omega(x) , \quad \text{ if } x \notin S.
\end{cases}
$$
\\
\begin{claim}
 $\hybrid_{2}^b$ and $\hybrid_{3}^b$ are $O(q^3/2^n)$-close whenever $\adversary$ makes $q$ queries.
\end{claim}
\begin{proof}
Here, we apply Zhandry's result from \Cref{thm:Zhandry-result} 
 which says that random functions are indistinguishable from random permutations. Consider the following algorithm $\mathcal{B}$ which receives oracle access to $O$, which is either a random function $F: \bit^n \rightarrow \bit^n$ or a random permutation $P: \bit^n \rightarrow \bit^n$:
 \begin{enumerate}
     \item $\mathcal{B}$ samples a random subset $S \subseteq \bit^{n+m}$ of size $|S|=2^n$ and a random $y \sim \bit^m$.

    \item $\mathcal{B}$  sends $\ket{S_y}^{\otimes k}$
and $y \oplus \mu_b$ to $\adversary_0$.

\item When $\adversary_0$ replies with a bipartite state on registers $\mathsf{R}$ and $\mathsf{AUX}$, $\mathcal{B}$ performs the projective measurement $\left\{ \ketbra{S}{S}^{\otimes k},\ \mathbb{I} - \ketbra{S}{S}^{\otimes k} \right\}$ on $\mathsf{R}$. If it succeeds, $\mathcal{B}$ outputs $\bot$. Otherwise, $\mathcal{B}$ continues. 

\item $\mathcal{B}$ runs the post-revocation adversary $\adversary_1$ on input $\mathsf{AUX}$. Whenever $\adversary_1$ makes a query, $\mathcal{B}$ answers using the function $$
f(x) = \begin{cases}
(\tau \circ O\circ \sigma)(x), \quad \text{ if } x \in S\\
\omega(x) , \quad\quad\quad\quad\quad \text{ if } x \notin S
\end{cases}
$$
where we let
\begin{itemize}
    \item $\sigma$ be some canonical mapping from $S$ (with $|S|=2^n$) onto $\bit^n$, i.e., $\sigma$ is a
a function which assigns each element of $S \subseteq \bit^{n+m}$ a unique bit string in $\bit^n$.

\item $\tau$ be the function which maps each $x \in \bit^n$ to $(x||y) \in \bit^{n+m}$.
\end{itemize}
 \end{enumerate}
Note that whenever $O$ is a random permutation, the view of $\adversary$ is precisely $\hybrid_{3}^b$; whereas, if $O$ is a random function, the view of $\adversary$ is precisely $\hybrid_{4}^b$. Therefore, the claim follows from Zhandry's result in \Cref{thm:Zhandry-result}.
\end{proof}

\noindent $\hybrid_{4}^b$: This is the same experiment as in $\hybrid_{3}^b$, except that we change the second part of the experiment once again: after the challenger sends $\ket{S_y}^{\otimes k}$
and $y \oplus \mu_b$ to $\adversary_0$, he does the following:
\begin{itemize}
    \item the challenger samples a random function $f: \bit^{n+m} \rightarrow \bit^{n+m}$ subject to the constraint that $f(s) = \ast||y$, for all $s \in S$. 
\end{itemize}
After revocation, $\adversary_1$ receives oracle access to $f$.

\ \\
\begin{claim}
 $\hybrid_{3}^b$ and $\hybrid_{4}^b$ are $O\big(q^3/(2^{n+m} - 2^n)\big)$-close whenever $\adversary$ makes $q$ queries.
\end{claim}
\begin{proof}
The proof again follows from Zhandry's result in \Cref{thm:Zhandry-result}, since distinguishing  $\hybrid_{3}^b$ and $\hybrid_{4}^b$ amounts to distinguishing a random function from a random permutation mapping a random permutation $\omega: (\bit^{n+m} \setminus S) \rightarrow (\bit^{n+m} \setminus T_y)$. Since the domain and co-domain are of equal size $2^{n+m} - 2^n$, the advantage is at most $O\big(q^3/(2^{n+m} - 2^n)\big)$.   
\end{proof}

\noindent $\hybrid_{5}^b$: This is the same experiment as in $\hybrid_{4}^b$, but now we change what $\adversary_0$ receives in the pre-revocation phase:
\begin{itemize}
\item The challenger samples a random subset $S \subseteq \bit^{n+m}$ of size $|S|=2^n$.
    \item The challenger samples a random $y \sim \bit^m$.


    \item The challenger sends $(\ket{S_y}^{\otimes k},y)$ to $\adversary_0$.
\end{itemize}
After revocation, the challenger samples a random function $f: \bit^{n+m} \rightarrow \bit^{n+m}$ subject to the constraint that $f(s) = \ast||y\oplus \mu_b$, for all $s \in S$. $\adversary$ receives oracle access to $f$.\\
\ \\
\begin{claim}
 $\hybrid_{4}^b$ and $\hybrid_{5}^b$ are indentical.
\end{claim}
\begin{proof}
This follows from the fact that we have just re-labeled the variables.
\end{proof}

\noindent $\hybrid_{6}^b$: This is the same experiment as in $\hybrid_{5}^b$, except that we once again change what $\adversary_0$ receives in the pre-revocation phase:
\begin{itemize}
\item The challenger samples a random subset $S \subseteq \bit^{n+m}$ of size $|S|=2^n$.
    \item The challenger samples a random $y \sim \bit^m$.

\item The challenger samples a random $u \sim \bit^m$.

    \item The challenger sends $(\ket{S_y}^{\otimes k},y)$ to $\adversary_0$.
\end{itemize}
After revocation, the challenger samples a random function $f: \bit^{n+m} \rightarrow \bit^{n+m}$ subject to the constraint that $f(s) = \ast||u$, for all $s \in S$. $\adversary$ receives oracle access to $f$.

\ \\
Note that hybrids $\hybrid_{6}^0$ and $\hybrid_{6}^1$ are identically distributed.
By assumption, we also know that $\adversary$ distinguishes $\hybrid_{1}^0$ and  $\hybrid_{1}^1$ with non-negligible advantage. Therefore, our previous hybrid argument has shown that $\adversary$ must also distinguish $\hybrid_{5}^0$ and $\hybrid_{6}^0$ (respectively, hybrids $\hybrid_{5}^1$ and $\hybrid_{6}^1$) with advantage at least $\eps'(\secparam)$, for some non negligible function $\eps'(\secparam)$. To complete the proof, we show the following claim which yields the desired contradiction to the $k\mapsto k+1$ unforgeability property of subset states from \Cref{thm:rss:pd}.
\begin{claim}
$\mathsf{Forge}$ in Algorithm \ref{alg:forge} is a $\poly(\lambda)$-query algorithm (which internally runs $\adversary$) such that 
    \[\Pr_{\substack{S \subseteq \bit^{n+m}\\
|S|=2^n}}\Big[
(x_1,\dots,x_{k+1}) \in \mathrm{dist}(S,k+1) \, : \, (x_1,\dots,x_{k+1})\leftarrow \mathsf{Forge}^{\oracle_S}(\ket{S}^{\otimes k}) 
\Big]\geq 1/\poly(\lambda). \]
\end{claim}
\begin{proof}
Since $\hybrid_{5}^b$ and $\hybrid_{6}^b$ can be distinguished by the adversary $\adversary = (\adversary_0, \adversary_1)$ with advantage $\eps'(\secparam)$, for some non-negligible function $\eps'(\secparam)$, this implies that the post-revocation adversary $\adversary_1$ is an algorithm for which the following property holds: given as input uniformly random string $y$ and an auxiliary register (conditioned on revocation succeeding), $\adversary_1$ can distinguish
whether it is given an oracle for a function $H: \bit^{n+m} \rightarrow \bit^{n+m}$ which is random subject to the constraint that $H(s) = \ast||y \oplus \mu_b$ for all $s \in S$, or whether it is given an oracle for a function $G: \bit^{n+m} \rightarrow \bit^{n+m}$ which is a random function subject to the constraint that $G(s) = \ast||u$ for all $s \in S$. Crucially, the two functions differ precisely on inputs belonging to $S$, and are otherwise identical.

Consider the quantum extractor $\mathsf{Ext}^G(y,\mathsf{AUX})$ which is defined as follows:
  \begin{enumerate}
      \item Sample $i \leftarrow [q]$, where $q$ denotes the total number of queries made by $\adversary$.
      \item Run $\adversary_1^{G}(y,\mathsf{AUX})$ just before the $(i-1)$-st query to $G$. 
      \item Measure $\adversary_1$'s $i$-th query in the computational basis, and output the measurement outcome. 
  \end{enumerate}

From the O2H Lemma \ref{lem:O2H}, we get that
\begin{align*}
&\left|\Pr\left[\algo A_1^H(y,\mathsf{AUX}_{|\top})=1 \, : \, \substack{
S \subset \bit^{n+m}, |S|=2^n\\
y \sim \bit^m\\
(\mathsf{R},\mathsf{AUX}) \leftarrow \adversary_0(\ket{S}^{\otimes k},y)\\
H \text{ s.t. }
H(s) = \ast||y \oplus \mu_b, \forall s \in S
}\right] - \Pr\left[\algo A_1^{G}(y,\mathsf{AUX}_{|\top})=1 \, : \, \substack{
S \subset \bit^{n+m}, |S|=2^n\\
y \sim \bit^m, u \sim \bit^m\\
(\mathsf{R},\mathsf{AUX}) \leftarrow \adversary_0(\ket{S}^{\otimes k},y)\\
G \text{ s.t. }
G(s) = \ast||u, \forall s \in S
}\right] \right| \vspace{2mm}\\
&\leq 2 q \sqrt{\Pr\left[\mathsf{Ext}^G(y,\mathsf{AUX}_{|\top}) \in S \, : \, \substack{
S \subset \bit^{n+m}, |S|=2^n\\
y \sim \bit^m,  u \sim \bit^m\\
(\mathsf{R},\mathsf{AUX}) \leftarrow \adversary_0(\ket{S}^{\otimes k},y)\\
G \text{ s.t. }
G(s) = \ast||u, \forall s \in S
}\right]},
\end{align*}
where $\mathsf{AUX}_{|\top}$ corresponds to the register $\mathsf{AUX}$ conditioned on the event that the projective measurement $$\left\{ \ketbra{S}{S}^{\otimes k},\ \mathbb{I} - \ketbra{S}{S}^{\otimes k} \right\}$$ succeeds on register $\mathsf{R}$. Because the distinguishing advantage of the adversary $\adversary_1$ is non-negligible (conditioned on the event that revocation succeeds on register $\mathsf{R}$) and $q=\poly(\lambda)$, we get
\begin{align*}
\Pr\left[\mathsf{Ext}^G(y,\mathsf{AUX}_{|\top}) \in S \, : \, \substack{
S \subset \bit^{n+m}, |S|=2^n\\
y \sim \bit^m, u \sim \bit^m\\
(\mathsf{R},\mathsf{AUX}) \leftarrow \adversary_0(\ket{S}^{\otimes k},y)\\
G \text{ s.t. }
G(s) = \ast||u, \forall s \in S
}\right] \geq 1/\poly(\lambda).    
\end{align*}
To complete the proof, we now show that $\mathsf{Forge}^{\oracle_S}(\ket{S}^{\otimes k})$ in Algorithm \ref{alg:forge} is a successful algorithm against the $k\rightarrow k+1$ unforgeability of subset states.  

\begin{algorithm}
\DontPrintSemicolon
\SetAlgoLined
\label{alg:forge}
\KwIn{$\ket{S}^{\otimes k}$ and a membership oracle $\oracle_S$, where $S \subset \bit^{n+m}$ is a subset.}
    
\KwOut{$x_1, \dots x_{k+1} \in \bit^{n+m}$.}


Sample uniformly random strings $y,u \sim \bit^m$;

Run $(\mathsf{R},\mathsf{AUX}) \leftarrow \adversary_0(\ket{S}^{\otimes k},y)$;

Measure $\mathsf{R}$ in the computational basis to obtain $x_1,\dots,x_k$;

Run the quantum extractor $\mathsf{Ext}^G(y,\mathsf{AUX})$ to obtain an element $x_{k+1}$, where the oracle $G$ can be simulated via $\oracle_S$ as follows: on input $x \in \bit^{n+m}$, we let
$$
G(x) = \begin{cases}
g_1(x)||u, &\text{ if } \oracle_S(x)=1\\
g_2(x), & \text{ otherwise }
\end{cases}
$$
where $g_1: \bit^{n+m} \rightarrow \bit^n$ and $g_2: \bit^{n+m} \rightarrow \bit^{n+m}$ are uniformly random functions.

Output $(x_1,\dots,x_{k+1})$.

\caption{$\mathsf{Forge}^{\oracle_S}(\ket{S}^{\otimes k})$}
\end{algorithm}
Let $\mathsf{Revoke}(S,k,\mathsf{R})$ denote the projective measurement $\{ \ketbra{S}{S}^{\otimes k},\ \mathbb{I} - \ketbra{S}{S}^{\otimes k}\}$ of register $\mathsf{R}$.
Using the Simultaneous Distinct Extraction Lemma (Lemma \ref{lem:distinct_extract}), we get that
\begin{align*}
   &\Pr_{\substack{S \subseteq \bit^{n+m}\\
|S|=2^n}}\Big[
(x_1,\dots,x_{t+1}) \in \mathrm{dist}(S,k+1) \, : \, (x_1,\dots,x_{k+1})\leftarrow \mathsf{Forge}^{\oracle_S}(\ket{S}^{\otimes k}) 
\Big]\\
&\geq  \left(1- O\left(\frac{k^2}{2^n}\right)\right) \cdot \Pr\left[\mathsf{Revoke}(S,k,\mathsf{R})=\top
\, : \, \substack{
S \subset \bit^{n+m}, |S|=2^n\\
y \sim \bit^m\\
(\mathsf{R},\mathsf{AUX}) \leftarrow \adversary_0(\ket{S}^{\otimes k},y)}
\right]\\
&\quad\quad\quad\cdot \Pr\left[\mathsf{Ext}^G(y,\mathsf{AUX}_{|\top}) \in S \, : \, \substack{
S \subset \bit^{n+m}, |S|=2^n\\
y \sim \bit^m, u \sim \bit^m\\
(\mathsf{R},\mathsf{AUX}) \leftarrow \adversary_0(\ket{S}^{\otimes k},y)\\
G \text{ s.t. }
G(s) = \ast||u, \forall s \in S
}\right] \,
\end{align*}
which is at least inverse polynomial in $\lambda$. This proves the claim.
\end{proof}
Thus, we obtain the desired contradiction to the $k\mapsto k+1$ unforgeability property of subset states from \Cref{thm:rss:pd}. This completes the proof of multi-copy revocable encryption security.
\end{proof}

\subsection{Simultaneous Distinct Extraction Lemma}\label{sec:distinct-ext}

The following lemma allows us to analyze the probability of simultaneously extracting $k+1$ distinct subset elements in some subset $S \subseteq \bit^n$ in terms of the success probability of revocation (i.e., the projection onto $\ketbra{S}{S}^{\otimes k}$) and the success probability of extracting another subset element from the adversary's state.

\begin{lemma}[Simultaneous Distinct Extraction]
\label{lem:distinct_extract}
Let $n,k \in \N$ and
let $\rho \in \mathcal{D}({\cal H}_{X} \otimes {\cal H}_{Y})$ be an any density matrix, where ${\cal H}_{X}$ is an $n\cdot k$-qubit Hilbert space and where ${\cal H}_{Y}$ is an arbitrary Hilbert space. Let $\ket{S}$ denote a subset state, for some subset $S \subseteq \bit^n$, and let $\cal{E}: \algo L(\algo H
_Y) \rightarrow \algo L(\algo H_{X'})$ be any $\mathsf{CPTP}$ map of the form 
$$
\cal{E}_{Y \rightarrow X'}(\sigma) = \Tr_{E} \left[V_{Y \rightarrow X'E}\, \sigma V_{Y \rightarrow X'E}^\dag\right], \quad \forall \sigma \in \algo D(\algo H_{Y}),
$$
for some isometry $V_{Y \rightarrow X'E}$ and $n$-qubit Hilbert space $\algo H_{X'}$. Consider the $\mathsf{POVM}$ element
$$\Lambda = \sum_{\substack{s_1,\dots,s_{k+1} \in S\\
(s_1,\dots,s_{k+1}) \in \mathrm{dist}(S,k+1)}} \ketbra{s_1,\dots,s_k}{s_1,\dots,s_k}_{X} \otimes V_{Y \rightarrow X'E}^{\dagger} (\ketbra{s_{k+1}}{s_{k+1}}_{X'} \otimes I_E) V_{Y \rightarrow X'E}.$$
Let $\rho_X = \Tr_Y[\rho_{XY}]$ denote the reduced state. Then, it holds that
$$\Tr[\Lambda \rho] \,\geq \, \left(1- O\left(\frac{k^2}{|S|}\right)\right) \cdot \Tr[\ketbra{S}{S}^{\otimes k} \rho_X]\cdot \Tr\Big[ \ketbra{S}{S} \, \algo E_{Y \rightarrow X'}(\sigma) \Big],$$
where $\sigma = \Tr[(\ketbra{S}{S}^{\otimes k} \otimes I)\rho]^{-1} \cdot  \Tr_X[(\ketbra{S}{S}^{\otimes k} \otimes I)\rho]$ is a reduced state in system $Y$.
\end{lemma}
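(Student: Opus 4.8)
The plan is to exploit the fact that the revocation projector $\ketbra{S}{S}^{\otimes k}$ acting on the $X$ registers is rank one: conditioned on it succeeding, the $X$ registers collapse to the fixed pure state $\ket{S}^{\otimes k}$ and thereby \emph{disentangle} from $Y$. Hence, in the branch where revocation succeeds, the global state is exactly the product $\ketbra{S}{S}^{\otimes k} \otimes \sigma$ on $X \otimes Y$, where $\sigma$ is the reduced post-measurement state on $Y$ specified in the statement, and this branch carries weight $p := \Tr[\ketbra{S}{S}^{\otimes k}\rho_X]$. My first move is therefore to lower bound $\Tr[\Lambda \rho]$ by the probability that, starting from this product state, a computational-basis measurement of $X$ together with the extraction channel $\mathcal{E}$ on $Y$ jointly yield $k+1$ distinct elements of $S$. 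The structural point I would isolate as a preliminary claim is that, \emph{in this product branch}, the computational-basis measurement of $X$ and the measurement of the extracted register $X'$ are statistically independent.

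Granting the product structure, the estimate splits cleanly. On the $X$ side, measuring $\ket{S}^{\otimes k}$ in the computational basis produces $k$ i.i.d.\ uniform samples from $S$. On the extraction side, since $\Pi_S := \sum_{s \in S}\ketbra{s}{s} \geq \ketbra{S}{S}$ as operators, applying $\mathcal{E}$ to $\sigma$ and measuring $X'$ in the computational basis lands in $S$ with probability at least $\Tr[\ketbra{S}{S}\,\mathcal{E}(\sigma)]$, which is precisely the third factor $r$ on the right-hand side. Because the two measurements are independent in this branch, the event that the extraction lands in $S$ and that the $k$ samples have the required structure factorizes, producing the product $p \cdot r$.

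It remains to control the distinctness constraints, both among $s_1,\dots,s_k$ and between those and the extracted $s_{k+1}$. Here I would again use independence, conditioning on the extracted value. For each fixed outcome $s_{k+1}=x \in S$, the $k$ samples $s_1,\dots,s_k$ are uniform in $S$ and independent of $x$, so a birthday-plus-union bound gives that they are pairwise distinct and all differ from $x$ except with probability at most $\binom{k}{2}/|S| + k/|S| = O(k^2/|S|)$, \emph{uniformly in $x$}. Summing this conditional bound against the extraction's output distribution preserves the factor, so the conditional joint success (given the product branch) is at least $(1-O(k^2/|S|))\,\Tr[\Pi_S \mathcal{E}(\sigma)] \geq (1-O(k^2/|S|))\,r$. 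Multiplying by the branch weight $p$ yields $\Tr[\Lambda \rho] \geq (1-O(k^2/|S|))\, p \, r$, as claimed.

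The main obstacle is the very first step: rigorously passing from the operator $\Lambda$, which measures the $X$ registers \emph{directly} in the computational basis, to the revocation-projected product state. One cannot naively insert $\ketbra{S}{S}^{\otimes k}$, since a priori the computational-basis statistics of $X$ and the extraction on $Y$ may be correlated through $\rho$; it is precisely the successful revocation branch—where $X$ is pinned to the pure state $\ket{S}^{\otimes k}$—that supplies the independence driving the whole argument. I would make this precise by expressing $\Tr[\Lambda\rho]$ through the sub-normalized state $\ketbra{S}{S}^{\otimes k}\,\rho\,\ketbra{S}{S}^{\otimes k}$ and tracking the cross-register distinctness terms of $\Lambda$ as positive-semidefinite corrections, arranging that the only losses are the $O(k^2/|S|)$ birthday and collision contributions computed above.
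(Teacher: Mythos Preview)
Your proposal is essentially the paper's proof: use the rank-one property of $\ketbra{S}{S}^{\otimes k}$ to reduce to the product state $\ketbra{S}{S}^{\otimes k}\otimes\sigma$ (with weight $p=\Tr[\ketbra{S}{S}^{\otimes k}\rho_X]$), then evaluate $\Tr\big[\Lambda(\ketbra{S}{S}^{\otimes k}\otimes\sigma)\big]$ directly by grouping over $s_{k+1}$, using a birthday/union count for the distinctness factor and the operator inequality $\sum_{s\in S}\ketbra{s}{s}\ge\ketbra{S}{S}$ for the extraction factor. The step you flag as the ``main obstacle'' is precisely the paper's opening move, which it phrases as ``the order in which we apply $\Lambda$ and $\ketbra{S}{S}^{\otimes k}\otimes I$ does not matter'' to pass from $\Tr[\Lambda\rho]$ to $\Tr\big[\Lambda(\ketbra{S}{S}^{\otimes k}\otimes I)\rho(\ketbra{S}{S}^{\otimes k}\otimes I)\big]$.
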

\begin{proof}
\noindent 
Because the order in which we apply $\Lambda$ and $(\ketbra{S}{S}^{\otimes k} \otimes I)$ does not matter, we have the inequality
\begin{align}\label{ineq:gamma}
\Tr\left[\Lambda \rho \right]
&\geq \Tr\left[ \left(\ketbra{S}{S}^{\otimes k} \otimes I \right) \Lambda \rho  \right] \nonumber\\
&=  \Tr\left[ \left(\ketbra{S}{S}^{\otimes k} \otimes I \right) \Lambda \rho \left(\ketbra{S}{S}^{\otimes k} \otimes I \right)  \right]\nonumber\\
&= \Tr\left[ \Lambda  (\ketbra{S}{S}^{\otimes k} \otimes I )  \rho \left(\ketbra{S}{S}^{\otimes k} \otimes I \right) \right]. 
\end{align}
Notice also that $(\ketbra{S}{S}^{\otimes k} \otimes I)\rho(\ketbra{S}{S}^{\otimes k} \otimes I)$ lies in the image of $(\ketbra{S}{S}^{\otimes k} \otimes I)$, and thus
\begin{align}\label{ineq:sigma}
(\ketbra{S}{S}^{\otimes k} \otimes I)\rho(\ketbra{S}{S}^{\otimes k} \otimes I) = \Tr[(\ketbra{S}{S}^{\otimes k} \otimes I)\rho] \cdot ( \ketbra{S}{S}^{\otimes k} \otimes \sigma),
\end{align}
for some $\sigma \in \algo D(\algo H_Y)$.
Using \eqref{ineq:gamma} and \eqref{ineq:sigma}, we get the lower bound,
\begin{eqnarray*}
\Tr\left[\Lambda \rho \right] &\geq&  \Tr\left[ \Lambda  (\ketbra{S}{S}^{\otimes k} \otimes I )  \rho \left(\ketbra{S}{S}^{\otimes k} \otimes I \right) \right] \\
& = &  \Tr[(\ketbra{S}{S}^{\otimes k} \otimes I)\rho] \cdot \Tr\left[\Lambda\left( \ketbra{S}{S}^{\otimes k} \otimes \sigma \right) \right].
\end{eqnarray*}
Next, we analyze the final right-most quantity separately. Specifically, we find

\begin{align*}
&\Tr\left[\Lambda\left( \ketbra{S}{S}^{\otimes k} \otimes \sigma \right) \right]\\
&= \sum_{\substack{s_1,\dots,s_{k+1} \in S:\\
(s_1,\dots,s_{k+1})\\\in \mathrm{dist}(S,k+1)}} \Tr\left[  \ketbra{s_1,\dots,s_k}{s_1,\dots,s_k}_X \otimes V_{Y \rightarrow X'E}^{\dagger} \left(  \ketbra{s_{k+1}}{s_{k+1}}_{X'} \otimes I_E \right) V_{Y \rightarrow X'E} \left( \ketbra{S}{S}^{\otimes k}\otimes \sigma \right) \right] \\
&= \sum_{s_{k+1} \in S}\Bigg( \sum_{\substack{s_1,\dots,s_{k} \in S:\\
(s_1,\dots,s_{k+1})\\\in \mathrm{dist}(S,k+1)}} |S|^{-k} \Bigg)\Tr\Big[  V_{Y \rightarrow X'E}^{\dagger} (\ketbra{s_{k+1}}{s_{k+1}}_{X'} \otimes I_E)  V_{Y \rightarrow X'E} \, \sigma\Big] \\
&= \sum_{s_{k+1} \in S}\Bigg( 1 - \sum_{\substack{s_1,\dots,s_{k} \in S:\\
(s_1,\dots,s_{k}) \notin \mathrm{dist}(S,k)\\
\text{or } \, \exists i \in [k]: \, s_i = s_{k+1}
}} |S|^{-k} \Bigg)\Tr\Big[  (\ketbra{s_{k+1}}{s_{k+1}}_{X'} \otimes I_E)  V_{Y \rightarrow X'E} \,\sigma \, V_{Y \rightarrow X'E}^{\dagger}\Big] \\
&\geq  \left(1- O\left(\frac{k^2}{|S|}\right)\right)\cdot \sum_{s_{k+1} \in S}\Tr\Big[  (\ketbra{s_{k+1}}{s_{k+1}}_{X'} \otimes I_E)  V_{Y \rightarrow X'E} \,\sigma \, V_{Y \rightarrow X'E}^{\dagger}\Big] \\
&=  \left(1- O\left(\frac{k^2}{|S|}\right)\right)\cdot \sum_{s_{k+1} \in S}\Tr\left[ \ketbra{s_{k+1}}{s_{k+1}}_{X'} \Tr_E \left[V_{Y \rightarrow X'E} \,\sigma \, V_{Y \rightarrow X'E}^{\dagger}\right] \right] \\
&= \left(1- O\left(\frac{k^2}{|S|}\right)\right)\cdot  \Tr\Big[ \ketbra{S}{S} \, \algo E_{Y \rightarrow X'}(\sigma) \Big].
\end{align*}
Putting everything together, this gives the desired inequality
$$\Tr[\Lambda \rho] \,\geq \, \left(1- O\left(\frac{k^2}{|S|}\right)\right) \cdot \Tr[\ketbra{S}{S}^{\otimes k} \rho_X]\cdot \Tr\Big[ \ketbra{S}{S} \, \algo E_{Y \rightarrow X'}(\sigma) \Big],$$
where $\sigma = \Tr[(\ketbra{S}{S}^{\otimes k} \otimes I)\rho]^{-1} \cdot  \Tr_X[(\ketbra{S}{S}^{\otimes k} \otimes I)\rho]$ is a reduced state in system $Y$.
\end{proof}

\section{Multi-Copy Revocable Programs: Definition}\label{sec:revocavle programs-def}

In this section, we study whether arbitrary functionalities can be revoked. We define and study revocable programs with multi-copy security. In this notion, there is a functionality preserving compiler that takes a program and converts it into a quantum state, which can later be certifiably revoked.

We now give a formal definition of revocable programs.

\begin{definition}[Revocable Program] 
\label{def:revprogramsyntax}
Let $\mathscr{P} = \bigcup_{\lambda \in \N} \algo P_\lambda$ be a class of efficiently computable program families $\algo P_\lambda =\{P:\algo X_\lambda \rightarrow \algo Y_\lambda\}$ with domain $\algo X_\lambda$ and range $\algo Y_\lambda$.
A revocable program compiler for the class $\mathscr{P}$ is a tuple $\Sigma= (\Compile,\Eval,\Revoke)$ consisting of the following QPT algorithms:
\begin{itemize}
\item $\Compile(1^\lambda,P)$: on input the security parameter $1^\lambda$ and a program $P \in \algo P_\lambda$ with $P: \cal{X}_\lambda\rightarrow \cal{Y}_\lambda$, output a quantum state $\ket{\Psi_{P}}$ and a (private) verification key $\vk$.

\item $\Eval(\ket{\Psi_{P}},x)$: on input a quantum state $\ket{\Psi_{P}}$  and input $x \in \cal{X}_{\lambda}$, output $P(x)$.

\item $\Revoke(\vk,\sigma)$: on input the verification key $\vk$ and a state $\sigma$, output $\top$ or $\bot$.
\end{itemize}
In addition, we require that $\Sigma$ satisfies the following two properties for all $\lambda \in \N$:
\begin{description}
    \item \textbf{Correctness of evaluation:} for all programs $P \in \algo P_\lambda$ and inputs $x \in \algo X_\lambda$, it holds that
$$\prob\left[P(x) \leftarrow \Eval(\ket{\Psi_{P}},x) \ :\ (\ket{\Psi_{P}},\vk) \leftarrow \Compile(1^\lambda,P) \right] \geq 1 - \negl(\secparam).$$

\item \textbf{Correctness of revocation:} for all programs $P \in \algo P_\lambda$, it holds that
$$\prob\left[\top \leftarrow \Revoke(\vk,(\ket{\Psi_{P}}) \ :\ (\ket{\Psi_{P}},\vk) \leftarrow \Compile(1^\lambda,P)\right] \geq 1 - \negl(\secparam).$$
\end{description}
\end{definition}

\paragraph{Multi-Copy Revocable Security.}

We use the following notion of security.

\begin{definition}[Multi-Copy Revocable Security for Programs]\label{def:revprogsec} Let $\mathscr{P} = \bigcup_{\lambda \in \N} \algo P_\lambda$ be a class of efficiently computable program families $\algo P_\lambda =\{P:\algo X_\lambda \rightarrow \algo Y_\lambda\}$ and let $\mathscr{D}_{\mathscr{P}} = \bigcup_{\lambda \in \N} \algo D_{\algo P_\lambda}$ be an ensemble of program distributions. Let $\mathscr{D}_{\algo X} = \bigcup_{\lambda \in \N}\algo{D}_{\algo X_\lambda}$ be an ensemble of challenge distribution families with $\algo{D}_{\algo X_\lambda}=\{\algo D_{\algo X_\lambda}(P)\}_{P \in \algo P_\lambda}$. Consider the following experiment between a QPT adversary $\algo A$ and a challenger.\\
\ \\
\noindent \underline{$\revokeexperiment_{\lambda,\Sigma,\adversary}^{\mathscr{D}_{\mathscr{P}},\mathscr{D}_{\algo X}}$:}
\begin{enumerate}
    \item $\adversary$ submits a polynomial $k=k(\lambda)$ to the challenger. 
    \item The challenger samples 
a program $P \sim \algo D_{\algo P_\lambda}$ with domain $\algo X_\lambda$ and range $\algo Y_\lambda$,
and runs $\Compile(1^\lambda,P)$ to generate a pair $(\ket{\Psi_{P}},\vk)$. Afterwards, the challenger sends the quantum state $\ket{\Psi_{P}}^{\otimes}$ to $\adversary$. 
    \item $\adversary$ returns a quantum state $\rho$. 
    \item The challenger performs the measurement $\left\{ \ketbra{\Psi_{P}}{\Psi_{P}}^{\otimes k},\ \mathbb{I} - \ketbra{\Psi_{P}}{\Psi_{P}}^{\otimes k} \right\}$ on the returned state $\rho$. If the measurement succeeds, the game continues; otherwise, the challenger aborts.
    \item The challenger 
    samples a challenge input $x \sim \algo D_{\algo X_\lambda}(P)$
    sends $x$ to $\adversary$. 
    \item The adversary outputs $y \in \algo Y_\lambda$.

    \item The challenger outputs $1$ if and only if $P(x)=y$.
\end{enumerate}
We say that a revocable program compiler $\Sigma= (\Compile,\Eval,\Revoke)$ has multi-copy revocable security for the ensembles $\mathscr{D}_{\mathscr{P}}$ and $\mathscr{D}_{\algo X}$, if the following holds for any QPT adversary $\algo A$:
$$ \prob\left[1 \leftarrow \revokeexperiment_{\lambda,\Sigma,\adversary}^{\mathscr{D}_{\mathscr{P}},\mathscr{D}_{\algo X}} \right]  \leq p_{\mathrm{triv}}^{\mathscr{D}_{\mathscr{P}},\mathscr{D}_{\algo X}}(\lambda)+\negl(\lambda)\, ,$$
where $p_{\mathrm{triv}}^{\mathscr{D}_{\mathscr{P}},\mathscr{D}_{\algo X}}(\lambda)=\sup_{\adversary}\{\prob\left[  P(x) \leftarrow \adversary(x)\ :\ x \leftarrow \algo D_{\algo X_\lambda}(P)\right]\}$ is the trivial guessing probability.
\end{definition}

\section{Construction of Multi-Copy Secure Revocable Programs in a Classical Oracle Model} 

In this section, we give a construction of multi-copy secure revocable programs; specifically, we work with a classical oracle model. Our construction is as follows:

\begin{construction}\label{const:revprog-scheme} Let $\lambda \in \N$ be the security parameter. Let $n,m \in \N$ be polymomial in $\lambda$. Let $\Phi = \{\Phi_\lambda\}_{\lambda \in \N}$ be an ensemble of permutations $\Phi_\lambda = \{\varphi_\kappa : \bit^{n+m} \rightarrow \bit^{n+m}\}_{\kappa \in \algo K_\lambda}$, for some set $\algo K_\lambda$.
\end{construction}
\begin{itemize}
 \item $\setup(1^{\secparam})$: sample a uniformly random key $\kappa \in \algo K_\lambda$ and let $\vk = \kappa$.
\item $\Compile(1^{\secparam},P)$: on input $1^{\lambda}$ and a program $P \in \algo P_\lambda$ with $P: \cal{X}_\lambda\rightarrow \cal{Y}_\lambda$, do the following:
\begin{itemize}
    \item Sample  $y \sim \bit^m$ and prepare the subset state given by
    $$
\ket{S_y} = 
\frac{1}{\sqrt{2^n}}\sum_{x \in \bit^n} \ket{\varphi_{\kappa}(x || y)}. 
$$
 \item Let $\ket{\Psi_{P}}=\ket{S_y}$ and, for brevity, let $S = \{\varphi_\kappa(x||y) \, : \, x \in \bit^n\}$ be the corresponding subset.
 \item Let $O=O_{P,S}$ denote an a (public) classical oracle, which is defined as follows:
 \[
O_{P,S}(x,s)=
\begin{cases}
    P(x), & \text{ if } s \in S\\
    0, &\text{ otherwise }
\end{cases}
\]
  \end{itemize}  
   \item $\Eval^{O}(\ket{\Psi_{P}},x)$: on input $\ket{\Psi_{P}}$, $x \in \cal{X}_{\lambda}$, do the following:
  \begin{itemize}
       \item Coherently evaluate $O_{P,S}$ on input $\ket{x} \otimes \ket{\Psi_{P}}$, and compute its output into an ancillary register.
       \item Measure the ancillary register and then output the measurement outcome.
   \end{itemize}
   \item $\Revoke(\vk,\rho)$: on input $\vk$, a state $\rho$ and verification key $\vk$, it parses  $y \leftarrow \vk$ and applies the measurement $\left\{ \ketbra{S_y}{S_y},\ \mathbb{I} - \ketbra{S_y}{S_y} \right\}$ to $\rho$; it outputs $\top$ if it succeeds, and $\bot$ otherwise.
    \end{itemize}
    The above scheme is easily seen to satisfy correctness.

\paragraph{Proof of Multi-Copy Revocable Security.}

Before we analyze the security of \Cref{const:revprog-scheme}, let us first remark that we can instantiate the scheme using a QPRP family $\Phi_\lambda = \{\varphi_\kappa : \bit^{n+m} \rightarrow \bit^{n+m}\}_{\kappa \in \algo K_\lambda}$, for some key space $\algo K_\lambda$. In the security proof, however, we will work with the random permutation model instead. This means that we will consider random permutations throughout the security game.

\begin{theorem}\label{thm:revoc-program}

\Cref{const:revprog-scheme} satisfies multi-copy revocable  security for any pair of distributions $\mathscr{D}_{\mathscr{P}},\mathscr{D}_{\algo X}$ in a classical oracle model, where the recipient receives an (ideal classical) oracle for the purpose of evaluation.
\end{theorem}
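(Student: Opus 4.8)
The plan is to mirror the proof of \Cref{thm:revoc-encryption} for revocable encryption, ultimately deriving a contradiction with the $k \mapsto k+1$ unforgeability of subset states (\Cref{thm:rss:pd}). Since we work in the oracular (random-permutation) model, I would first invoke QPRP security to replace $\varphi_\kappa$ by a uniformly random permutation $\varphi$; equivalently, the compiled subset $S = \{\varphi(x\|y) : x \in \bit^n\}$ becomes a uniformly random size-$2^n$ subset of $\bit^{n+m}$, and the adversary is handed $\ket{S}^{\otimes k}$ together with oracle access to the evaluation oracle $O = O_{P,S}$. Assume towards a contradiction that some QPT adversary $\adversary = (\adversary_0,\adversary_1)$ wins $\revokeexperiment$ with probability $p_{\mathrm{triv}} + \eps$ for a non-negligible $\eps$, where $\adversary_0$ is the pre-revocation stage and $\adversary_1$ the post-revocation stage. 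As in the encryption proof, I would condition throughout on the event that revocation succeeds (the projection onto $\ketbra{S}{S}^{\otimes k}$ on the returned register $\mathsf{R}$), carrying this conditioning into the side register $\mathsf{AUX}$ passed to $\adversary_1$.

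The core is a puncturing step analogous to the passage from $\hybrid_5^b$ to $\hybrid_6^b$ in the encryption proof. I would introduce a hybrid in which the \emph{post-revocation} evaluation oracle $O_{P,S}$ is replaced by a punctured oracle $\tilde O$ that agrees with $O_{P,S}$ off $S$ but returns a value independent of $P$ on all inputs $(x,s)$ with $s \in S$, while $\adversary_0$ still runs with the real oracle. The key claim is that the probability that $\adversary_1$ outputs $P(x)$ in the punctured experiment is at most $p_{\mathrm{triv}} + \negl(\lambda)$, since $\tilde O$ carries no information about $P$ beyond what is encoded in $\mathsf{AUX}$ and the challenge $x \sim \algo D_{\algo X_\lambda}(P)$. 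Because $\tilde O$ and $O_{P,S}$ differ only on inputs whose second coordinate lies in $S$, applying the One-Way-to-Hiding Lemma (\Cref{lem:O2H}) to $\adversary_1$ then yields an extractor $\mathsf{Ext}$ which, by measuring a uniformly chosen one of $\adversary_1$'s post-revocation queries, outputs an element of $S$ with probability at least $\eps^2/(4q^2) = 1/\poly(\lambda)$, where $q$ bounds the total number of queries.

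Finally, I would assemble the forger. Given $\ket{S}^{\otimes k}$ and a membership oracle $\oracle_S$, the forger samples $P \sim \algo D_{\algo P_\lambda}$ (and later the challenge $x$) itself, which lets it simulate the evaluation oracle via $O_{P,S}(x,s) = P(x)\cdot[\,\oracle_S(s)=1\,]$ at the cost of one $\oracle_S$ query per $O$-query. It runs $\adversary_0$, measures the returned register $\mathsf{R}$ in the computational basis to obtain $k$ candidate elements, and then runs $\mathsf{Ext}$ on the post-revocation state to obtain one more. By the Simultaneous Distinct Extraction Lemma (\Cref{lem:distinct_extract}), these $k+1$ elements are distinct and all lie in $S$ with probability at least $(1 - O(k^2/2^n))$ times the product of the revocation-success probability and the extraction-success probability, which is $1/\poly(\lambda)$. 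Setting $s = |S| = 2^n$ and choosing the auxiliary superset size $t$ within $\bit^{n+m}$ so that all error terms in \Cref{thm:rss:pd} are negligible, this contradicts the $k \mapsto k+1$ unforgeability of subset states and completes the proof.

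The step I expect to be the main obstacle is justifying that the punctured experiment wins with probability at most $p_{\mathrm{triv}} + \negl(\lambda)$, i.e. that the entire $\eps$ advantage is genuinely \emph{located} in the post-revocation phase so that the extractor acts on a post-revocation query. This is delicate because $\adversary_0$ can apply $O_{P,S}$ coherently to a pre-revocation copy of $\ket{S}$ and learn $P$ on polynomially many inputs of its choosing while leaving all $k$ copies intact for revocation. To rule out that this pre-revocation side information already predicts $P(x)$, the argument must crucially use that $x$ is sampled only after revocation and is unpredictable given $\mathsf{AUX}$, so that the poly-many points learned before revocation miss $x$ except with negligible probability. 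This is exactly where the "extra care in defining the hybrids" enters and where the analysis departs from the encryption case, in which the hidden message plays the role of the unpredictable challenge automatically.
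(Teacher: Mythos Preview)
Your plan matches the paper's proof essentially step for step: replace the QPRP by a random permutation so that $S$ is a uniformly random size-$2^n$ subset of $\bit^{n+m}$, puncture the \emph{post-revocation} evaluation oracle to the constant-zero function, apply the O2H Lemma (\Cref{lem:O2H}) to extract a subset element from a random post-revocation query of $\adversary_1$, simulate the real evaluation oracle from the membership oracle $\oracle_S$, and combine the extracted element with a computational-basis measurement of the returned register $\mathsf{R}$ via \Cref{lem:distinct_extract} to contradict \Cref{thm:rss:pd}. The paper uses exactly two hybrids $\hybrid_{1,x}$ (the real experiment) and $\hybrid_{2,x}$ (post-revocation oracle replaced by the all-zero function), proves their closeness via the forger $\mathsf{Forge}^{\oracle_S}$ you describe, and then asserts that closeness suffices.

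You are right to single out your final paragraph as the real crux, and in fact you are being more careful here than the paper's own write-up. The paper's concluding sentences simply state that, since the punctured oracle $\tilde O_{P,S}$ always outputs $0$, closeness of $\hybrid_{1,x}$ and $\hybrid_{2,x}$ ``suffices to complete the proof''; it does not spell out why the winning probability in $\hybrid_{2,x}$ is bounded by $p_{\mathrm{triv}}+\negl(\lambda)$. Your concern is exactly the point that is being glossed over: $\adversary_0$ still has access to the genuine oracle $O_{P,S}$ before revocation, and because $O_{P,S}(x,s)=P(x)$ is \emph{constant} over $s\in S$, a coherent query on $\ket{x}\otimes\ket{S}$ returns $P(x)$ without disturbing the subset state, so $\adversary_0$ can harvest $P$ on polynomially many inputs and still pass revocation perfectly. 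Whether the resulting $\mathsf{AUX}$ already suffices to predict $P(x)$ for the post-revocation challenge $x\sim\algo D_{\algo X_\lambda}(P)$ depends on the unpredictability of $x$, which the paper does not explicitly argue. In short, your proposal reproduces the paper's argument and additionally isolates the one step the paper leaves implicit; you have not introduced any gap that the paper itself closes.
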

\begin{proof}
Because we are working in the random permutation model, we will henceforth assume that $\Sigma$ in \Cref{const:permutation-scheme} is instantiated with a perfectly random permutation $\varphi$ rather than a QPRP permutation $\varphi_\kappa$.
    Suppose not. Then there exists an adversary $\adversary$ such that,
    \[ \prob\left[1 \leftarrow \revokeexperiment_{\lambda,\Sigma,\adversary}^{\mathscr{D}_{\mathscr{P}},\mathscr{D}_{\algo X}} \right]  = \eps(\secparam)+p_{\text{triv}},
\]
 
    for some non negligible function $\eps(\secparam)$. For convenience, we model $\adversary$ as a pair of QPT algorithms $(\adversary_0,\adversary_1)$, where $\adversary_0$ corresponds to the pre-revocation adversary, and $\adversary_1$ corresponds to the post-revocation adversary.
We consider the following sequence of hybrid distributions.\\
\ \\
    \noindent $\hybrid_{1,x}$: this corresponds to $\revokeexperiment_{\lambda,\Sigma,\adversary}^{\mathscr{D}_{\mathscr{P}},\mathscr{D}_{\algo X}}$.\\
    \ \\\underline{$\revokeexperiment_{\lambda,\Sigma,\adversary}^{\mathscr{D}_{\mathscr{P}},\mathscr{D}_{\algo X}}$:}
\begin{enumerate}
    \item $\adversary$ submits a polynomial $k=k(\lambda)$ to the challenger. 
    \item The challenger samples 
a program $P \sim \algo D_{\algo P_\lambda}$ with domain $\algo X_\lambda$ and range $\algo Y_\lambda$, and a random $y \leftarrow \{0,1\}^m$, and prepares $k$ copies of the subset state over $S = \{\varphi(x||y) \, : \, x \in \bit^n\}$ given by 
\[
\ket{S}=\frac{1}{\sqrt{2^n}}\sum_{x \in \{0,1\}^n}
\ket{\varphi(x||y)}.\]
\item The adversary sends $\ket{S}^{\otimes k}$ to $\adversary$, and grants $\adversary$ access to the  oracle $O=O_{P,S}$, where
 \[
O_{P,S}(x,s)=
\begin{cases}
    P(x), & \text{ if } s \in S\\
    0, &\text{ otherwise }
\end{cases}
\]

  \item The challenger performs the projective measurement $\left\{ \ketbra{S}{S}^{\otimes k},\ \mathbb{I} - \ketbra{S}{S}^{\otimes k} \right\}$ on $\mathsf{R}$. If the measurement succeeds, the challenger outputs $\bot$. Otherwise, the challenger continues.
\item The challenger 
    samples a challenge input $x \sim \algo D_{\algo X_\lambda}(P)$
    sends $x$ to $\adversary$. 
    \item The adversary outputs $y \in \algo Y_\lambda$.
    \end{enumerate}
    \noindent $\hybrid_{2,x}:$ This is the same as hybrid 1 except, after the revocation phase, $\adversary$ has access to the oracle $\tilde{O}_{P,S}$ which always outputs 0.
\begin{claim}
   $\hybrid_{2,x}$ and $\hybrid_{1,x}$ are computationally  indistinguishable.
\end{claim}

\begin{proof}
Let's assume for contradiciton that  $\hybrid_{2,x}$ and $\hybrid_{1,x}$ can be distinguished by the adversary $\adversary = (\adversary_0, \adversary_1)$ with advantage $\eps'(\secparam)$, for some non-negligible function $\eps'(\secparam)$. This implies that the post-revocation adversary $\adversary_1$ is an algorithm for which the following property holds: given as input a string $x\leftarrow \cal{D}_{\cal{X}}(P)$, and an auxiliary register $\mathsf{AUX}_{|\top}$ (conditioned on revocation succeeding), $\adversary_1$ can distinguish
whether it is given an oracle for a function $O_{P,S}$,  which is defined as follows:
 \[
O_{P,S}(x,s)=
\begin{cases}
    P(x), & \text{ if } s \in S\\
    0, &\text{ otherwise }
\end{cases}
\]
or whether it is given an oracle for a function $\tilde{O}_{P,S}$, which always outputs 0. Crucially, the two functions differ precisely on inputs belonging to $S$, and are otherwise identical.

Consider the quantum extractor $\mathsf{Ext}^{O_{P,S}}(x,\mathsf{AUX})$ which is defined as follows:
  \begin{enumerate}
      \item Sample $i \leftarrow [q]$, where $q$ denotes the total number of queries made by $\adversary$.
      \item Run $\adversary_1^{O_{P,S}}(x,\mathsf{AUX})$ just before the $(i-1)$-st query to $O_{P,S}$. 
      \item Measure $\adversary_1$'s $i$-th query in the computational basis, and output the measurement outcome. 
  \end{enumerate}
Note that, since we are working with the random permutation model, the subset $S$ is effectively a uniformly random subset of size $2^n$.
From the O2H Lemma \ref{lem:O2H}, we get that
\begin{align*}
&\left|\Pr\left[\algo A_1^{O_{P,S}}(x,\mathsf{AUX}_{|\top})=1 \, : \, \substack{
S \subset \bit^{n+m}, |S|=2^n\\
x \sim \bit^m\\
(\mathsf{R},\mathsf{AUX}) \leftarrow \adversary_0(\ket{S}^{\otimes k})\\
O_{P,S}
}\right] - \Pr\left[\algo A_1^{\tilde{O_{P,S}}}(x,\mathsf{AUX}_{|\top})=1 \, : \, \substack{
S \subset \bit^{n+m}, |S|=2^n\\
x \sim \bit^m\\
(\mathsf{R},\mathsf{AUX}) \leftarrow \adversary_0(\ket{S}^{\otimes k})\\
\tilde{O}_{P,S}
}\right] \right| \vspace{2mm}\\
&\leq 2 q \sqrt{\Pr\left[\mathsf{Ext}^{\O_{P,S}}(\mathsf{AUX}_{|\top}) \in S \, : \, \substack{
S \subset \bit^{n+m}, |S|=2^n\\
x \sim \bit^m\\
(\mathsf{R},\mathsf{AUX}) \leftarrow \adversary_0(\ket{S}^{\otimes k})\\
O_{P,S}
}\right]},
\end{align*}

where $\mathsf{AUX}_{|\top}$ corresponds to the register $\mathsf{AUX}$ conditioned on the event that the projective measurement $$\left\{ \ketbra{S}{S}^{\otimes k},\ \mathbb{I} - \ketbra{S}{S}^{\otimes k} \right\}$$ succeeds on register $\mathsf{R}$. Because the distinguishing advantage of the adversary $\adversary_1$ is non-negligible (conditioned on the event that revocation succeeds on register $\mathsf{R}$) and $q=\poly(\lambda)$, we get
\begin{align*}
\Pr\left[\mathsf{Ext}^{O_{P,S}}(x,\mathsf{AUX}_{|\top}) \in S \, : \, \substack{
S \subset \bit^{n+m}, |S|=2^n\\
x \sim \bit^m\\
(\mathsf{R},\mathsf{AUX}) \leftarrow \adversary_0(\ket{S}^{\otimes k})\\
O_{P,S}
}\right] \geq 1/\poly(\lambda).    
\end{align*}
To complete the proof, we now show that $\mathsf{Forge}^{\oracle_S}(\ket{S}^{\otimes k})$ in Algorithm \ref{alg:forge1} is a successful algorithm against the $k\rightarrow k+1$ unforgeability of subset states. 
\begin{algorithm}
\DontPrintSemicolon
\SetAlgoLined
\label{alg:forge1}
\KwIn{$\ket{S}^{\otimes k}$ and a membership oracle $\oracle_S$, where $S \subset \bit^{n+m}$ is a subset.}
    
\KwOut{$x_1, \dots x_{k+1} \in \bit^{n+m}$.}


Run $(\mathsf{R},\mathsf{AUX}) \leftarrow \adversary_0(\ket{S}^{\otimes k})$;

Measure $\mathsf{R}$ in the computational basis to obtain $x_1,\dots,x_k$;

Sample a string $x \leftarrow \cal{D}_X(P)$;

Run the quantum extractor $\mathsf{Ext}^{O_{P,S}}(x,\mathsf{AUX})$ to obtain an element $x_{k+1}$, where the oracle $O_{P,S}$ can be simulated via $\oracle_S$ as follows: on input $(x,s) \in \bit^{2n+m}$, we let
$$
O_{P,S}(x) = \begin{cases}
P(x), &\text{ if } \oracle_S(s)=1\\
0, & \text{ otherwise }
\end{cases}
$$

Output $(x_1,\dots,x_{k+1})$.

\caption{$\mathsf{Forge}^{\oracle_S}(\ket{S}^{\otimes k})$}
\end{algorithm}
Let $\mathsf{Revoke}(S,k,\mathsf{R})$ denote the projective measurement $\{ \ketbra{S}{S}^{\otimes k},\ \mathbb{I} - \ketbra{S}{S}^{\otimes k}\}$ of register $\mathsf{R}$.
Using the Simultaneous Distinct Extraction Lemma (Lemma \ref{lem:distinct_extract}), we get that
\begin{align*}
   &\Pr_{\substack{S \subseteq \bit^{n+m}\\
|S|=2^n}}\Big[
(x_1,\dots,x_{t+1}) \in \mathrm{dist}(S,k+1) \, : \, (x_1,\dots,x_{k+1})\leftarrow \mathsf{Forge}^{\oracle_S}(\ket{S}^{\otimes k}) 
\Big]\\
&\geq  \left(1- O\left(\frac{k^2}{2^n}\right)\right) \cdot \Pr\left[\mathsf{Revoke}(S,k,\mathsf{R})=\top
\, : \, \substack{
S \subset \bit^{n+m}, |S|=2^n\\
x \sim \bit^m\\
(\mathsf{R},\mathsf{AUX}) \leftarrow \adversary_0(\ket{S}^{\otimes k})}
\right]\\
&\quad\quad\quad\cdot \Pr\left[\mathsf{Ext}^{O_{P,S}}(x,\mathsf{AUX}_{|\top}) \in S \, : \, \substack{
S \subset \bit^{n+m}, |S|=2^n\\
x \sim \bit^m\\
(\mathsf{R},\mathsf{AUX}) \leftarrow \adversary_0(\ket{S}^{\otimes k})\\
O_{P,S}
}\right] \,
\end{align*}
which is at least inverse polynomial in $\lambda$. This proves the claim.
\end{proof}
Notice that in $\hybrid_{2,x}$, the oracle $\tilde{O}_{P,S}$ always outputs 0, thus, if $\hybrid_{1,x}$ and $\hybrid_{2,x}$ are indistinguishable, $(\adversary_0^H,\adversary_1^H)$ succeeds with probability $\epsilon^*(\lambda)+p_{\text{triv}}$, for some non negligible $\epsilon$. However, this implies that $(\adversary_0^H,\adversary_1^H)$ breaks the revocable security of the program $P$.
Therefore, it suffices to argue that $\hybrid_{1,x}$ and $\hybrid_{2,x}$ are close in order to complete the proof.
\end{proof}

\section{Construction of Multi-Copy Secure Revocable Point Functions in the QROM}

\label{sec:pointfunction}

In this section, we construct multi-copy secure revocable point functions. Our main ingredient is an underlying revocable encryption scheme with a so-called \emph{wrong-key detection} mechanism. But first, we introduce some relevant tools, such as hybrid encryption schemes, which will be essential in our analysis.

\subsection{Hybrid Encryption}

We introduce a simple \emph{hybrid encryption} scheme which we will make use of in order to construct revocable multi-copy secure point functions. Our hybrid encryption scheme combines a revocable encryption scheme with the classical \emph{one-time pad} $\mathsf{OTP} = (\KeyGen,\Enc,\Dec)$, where an encryption of a plaintext $ m \in \bit^\lambda$ via a key $k \in \bit^\lambda$ is generated as $\mathsf{OTP}.\Enc(k,m) = k \oplus m$.

\begin{construction}[Hybrid encryption scheme]\label{cons:hybrid-encryption} Let $\lambda \in \N$ denote the security parameter and let the scheme $\Sigma= (\KeyGen,\Enc,\Dec)$ be a revocable encryption scheme. Let $\mathsf{OTP} = (\KeyGen,\Enc,\Dec,\mathsf{Revoke})$ be the classical one-time pad. We define the hybrid encryption scheme $\mathsf{HE}= (\KeyGen,\Enc,\Dec,\mathsf{Revoke})$ as follows:
\begin{itemize}
\item $\mathsf{HE}.\KeyGen(1^\lambda)$: This is the same procedure as $\Sigma.\KeyGen(1^\lambda)$.
\item $\mathsf{HE}.\Enc(\mathsf{sk},m)$: Given as input a key $\mathsf{sk}$ and a plaintext $m \in \bit^\lambda$, encrypt as follows:
\begin{enumerate}
    \item Sample a random string $r \sim \bit^\lambda$.

    \item Generate the pair $(\ket{\mathsf{ct}},\mathsf{vk}) \leftarrow \Sigma.\Enc(\sk,r)$.
    \item Output the ciphertext $(\ket{\mathsf{ct}},\mathsf{OTP}.\Enc(r,m))$ and (private) verification key $\mathsf{vk}$.
\end{enumerate}
\item $\mathsf{HE}.\Dec(\sk,c)$: given as input a key $\mathsf{sk}$ and ciphertext $c = (c_0,c_1)$, decrypt as follows:
\begin{enumerate}
    \item Compute
$r' = \Sigma.\Dec(\sk,c_0)$
\item Output the plaintext $\mathsf{OTP}.\Dec(r',c_1)$. 
\end{enumerate}
\item $\mathsf{HE}.\mathsf{Revoke}(\sk,\vk,\sigma)$: On input the keys $\sk$ and $\vk$, as well as the state $\sigma$, output $\Sigma.\mathsf{Revoke}(\sk,\vk,\sigma)$
\end{itemize}

\end{construction}

We now prove the following theorem. Note that in the theorem below, we make use of the search variant of multi-copy revocable security which we mentioned in \Cref{search-security}.

\begin{theorem}\label{thm:hybrid-enc}Let $\lambda \in \N$ denote the security parameter and 
let $\Sigma = (\KeyGen,\Enc,\Dec,\mathsf{Revoke})$ be a revocable multi-copy (search) secure encryption scheme. Then, the resulting hybrid encryption scheme given by $\mathsf{HE}= (\KeyGen,\Enc,\Dec,\mathsf{Revoke})$ in \Cref{cons:hybrid-encryption} which is instantiated with $\Sigma$ is also a revocable multi-copy (search) secure encryption scheme.
\end{theorem}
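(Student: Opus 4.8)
The plan is to prove the statement by a direct black-box reduction: any adversary $\adversary$ that wins the search multi-copy revocable experiment against $\mathsf{HE}$ with non-negligible advantage can be converted into an adversary $\adversary'$ that wins the corresponding experiment against the underlying scheme $\Sigma$ with the \emph{same} advantage, contradicting the assumed search security of $\Sigma$. The reduction forwards the $k$ quantum ciphertext copies unchanged and exploits the one-time pad to simulate the extra classical component of the hybrid ciphertext without ever learning the underlying $\Sigma$-plaintext.

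First I would set up $\adversary'$ as a player in the search variant of $\revokeexperiment$ for $\Sigma$. When $\adversary$ requests a copy parameter $k$, $\adversary'$ forwards the same $k$ to its own challenger. The $\Sigma$-challenger samples a uniformly random plaintext $r \sim \bit^\lambda$, produces $(\ket{\ct},\vk) \leftarrow \Sigma.\enc(\sk,r)$, and hands $\adversary'$ the copies $\ket{\ct}^{\otimes k}$. To simulate the hybrid ciphertext, $\adversary'$ samples a fresh uniform string $c_1 \sim \bit^\lambda$ and passes $(\ket{\ct}^{\otimes k}, c_1)$ to $\adversary$. The key observation is that in the real $\mathsf{HE}$ experiment the classical part equals $r \oplus m$ for a uniformly random, independent plaintext $m$; hence, conditioned on any fixed $r$ (and thus on $\ket{\ct}$), the value $r \oplus m$ is uniformly distributed and independent of $\ket{\ct}$. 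Consequently $(\ket{\ct}^{\otimes k}, c_1)$ is distributed exactly as a genuine $k$-copy hybrid ciphertext whose implicit plaintext is $m := r \oplus c_1$, which is itself uniform. Thus the simulation of the pre-revocation phase is perfect.

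Next, $\adversary'$ relays $\adversary$'s returned register to the $\Sigma$-challenger for revocation; since $\mathsf{HE}.\Revoke = \Sigma.\Revoke$ acts only on the quantum part and ignores $c_1$, the revocation check is identical across the two experiments. After revocation succeeds, whatever post-revocation information the $\Sigma$-challenger releases (the secret key $\sk$ in the plain model, or the decryption-enabling oracle in the oracular model) is forwarded verbatim to $\adversary$; because hybrid decryption is merely $\Sigma.\dec$ followed by an XOR, this is precisely what $\adversary$ expects. Finally, when $\adversary$ outputs a guess $m'$ for the hybrid plaintext, $\adversary'$ outputs $r' := m' \oplus c_1$ as its guess for $r$. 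Since $m = r \oplus c_1$, we have $m' = m$ if and only if $r' = r$, so $\adversary'$ succeeds exactly when $\adversary$ does, and with the same probability.

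The reduction is essentially mechanical, so the only point requiring care — and the main (mild) obstacle — is the distributional argument in the second step: one must verify that substituting the honestly computed classical part $r \oplus m$ with a freshly sampled uniform $c_1$ yields an identically distributed view, and that this re-labelling correctly pairs each winning transcript of $\adversary$ with a winning transcript of $\adversary'$. Once this is established the claim follows at once: if $\adversary$ guessed the hybrid plaintext with probability non-negligibly above the trivial bound, then $\adversary'$ would guess the $\Sigma$-plaintext with the same probability, contradicting the search multi-copy revocable security of $\Sigma$. Since nothing in the reduction inspects the model-specific details of the post-revocation information, the argument is model-agnostic and applies equally in the plain and oracular settings.
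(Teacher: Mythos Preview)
Your reduction is correct and is essentially identical to the paper's own proof: both forward the $k$ quantum ciphertext copies unchanged, sample a fresh uniform string for the classical one-time-pad component, and recover the underlying $\Sigma$-plaintext by XOR-ing the adversary's guess with that string (the only difference is variable naming). Your explicit justification that the simulated classical part is distributed correctly, and your remark that the argument is model-agnostic, are nice additions but do not change the substance.
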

\begin{proof}
Suppose that $\adversary$ is a successful adversary against the (search) revocable security game with respect to $\mathsf{HE}$, i.e., there exists a non-negligible function $\eps(\lambda)$ such that
$$ \prob\left[ 1 \leftarrow \revokeexperiment_{\lambda, \mathsf{HE},\adversary} \right] =\eps(\secparam).$$
For convenience, we model $\adversary$ as a pair of quantum algorithms $(\adversary_0,\adversary_1)$, where $\adversary_0$ corresponds to the pre-revocation adversary, and $\adversary_1$ corresponds to the post-revocation adversary.

Consider the following reduction $\algo B=(\algo B_0,\algo B_1)$ which breaks the security of $\Sigma$.
\begin{enumerate}
   \item $\algo B_0$ runs $\adversary_0$ to obtain $k=k(\secparam)$, where $k(\secparam)$ is a polynomial, and sends it to the challenger.
    \item Upon receiving a ciphertext of the form $\Enc(\sk,m)^{\otimes k}$, for a random $m \sim \bit^\lambda$, $\algo B_1$ samples a random $r \sim \bit^\lambda$, and runs $(\mathsf{R},\mathsf{AUX}) \leftarrow \algo A_0(\Enc(\sk,m)^{\otimes k},r)$. 
    \item $\algo B_0$ sends $\mathsf{R}$ to the challenger, and sends $\mathsf{AUX}$ to the post-revocation adversary $\algo B_1$.
    \item Upon receiving $\sk$ from the challenger, $\algo B_1$ runs $m' \leftarrow \algo A_1(\sk,\mathsf{AUX})$ and outputs $r \oplus m'$. 
\end{enumerate}
Therefore, $\algo B$ succeeds with non-negligible probability $\eps(\lambda)$, which yields the desired contradiction.
\end{proof}

\subsection{Quantum Encryption with Wrong-Key Detection}

We use the following variant\footnote{\cite{Coladangelo2024quantumcopy} consider the more general notion of quantum secret-key encryption schemes.} of \emph{wrong-key detection} for quantum encryption schemes of classical messages (QECM) schemes~\cite{Coladangelo2024quantumcopy}.

\begin{definition}[Wrong-key detection for QECM schemes]\label{def:wkd} A QECM scheme $\Sigma  = (\KeyGen,\Enc,\Dec)$ satisfies the wrong-key detection (WKD) property if, for every $k'\neq k \leftarrow \KeyGen(1^\lambda)$ and plaintext $m$,
$$
\Tr\left[(I - \proj{\bot}) \Dec_{k'} \circ \Enc_k(m)\right] \leq \negl(\lambda).
$$

\end{definition}

Next, we give a simple transformation that converts a multi-copy-search-secure revocable encryption scheme to one with the WKD property in the quantum random oracle model (QROM).

\begin{construction}[Generic Transformation for WKD in the QROM]\label{cons:key-detection}
Let $\Sigma=(\KeyGen,\Enc,\Dec,\mathsf{Revoke})$ be a revocable encryption scheme and let $\lambda$ be the security parameter. Fix a function $H: \{0,1\}^{2\lambda} \rightarrow \{0,1\}^\ell$. Then, the scheme  $\Sigma^H=(\KeyGen^H,\Enc^H,\Dec^H,\mathsf{Revoke}^H)$ is defined by the following $QPT$ algorithms:
\begin{itemize}
    \item  $\KeyGen^H(1^{\secparam})$: on input $1^\lambda$, run $\KeyGen( 1^\lambda)$ to output a key $\sk \in \bit^\lambda$.
    \item $\Enc^H({\sk},m)$: on input $m \in \bit^\lambda$, sample $x \sim \{0,1\}^\lambda$, and output $(\Enc(\sk,x),H(\sk||x),x \oplus m)$.
    \item $\Dec^H(\sk,\mathsf{ct})$: on input $\mathsf{ct}$, first parse $(\rho,c,y) \leftarrow \mathsf{ct}$. Then, run $\Dec(\sk,\rho)$ and check (using an auxiliary register) whether $(\sk||\cdot)$ (i.e., the outcome with pre-fix $\sk$) maps to $c$ under $H$. If yes, measure the register with outcome $x'$, and output $x' \oplus y$. Otherwise, output $\proj{\bot}$.

\item $\mathsf{Revoke}^H(\sk,\vk,\sigma)$: On input the keys $\sk$ and $\vk$, as well as the state $\sigma$, output the result of running the procedure $\mathsf{Revoke}(\sk,\vk,\sigma)$ on the corresponding quantum part of the ciphertext.
\end{itemize}
\end{construction}

Clearly, correctness is preserved.
We now show the following lemma. For simplicity, we consider the standard notion of multi-copy revocable encryption security and remark that the oracular variant of security is analogous.

\begin{lemma}\label{lem:generic_transf}
Let $\Pi$ be any multi-copy (search) secure revocable encryption scheme and let $H: \{0,1\}^{2\lambda} \rightarrow \{0,1\}^\ell$ be a hash function, for $ \ell = 4\lambda$. Then, the scheme $\Sigma^H$ in Construction \ref{cons:key-detection} yields a multi-copy (search) secure revocable encryption scheme with WKD in the QROM.
\end{lemma}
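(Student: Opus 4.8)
The plan is to establish the two requirements separately: first that $\Sigma^H$ satisfies wrong-key detection (\Cref{def:wkd}), and second that it inherits multi-copy (search) security from the underlying scheme $\Pi$. Correctness is immediate, since on the correct key $\sk$ the decryption check $H(\sk||x)=c$ always passes and the one-time pad is undone. Throughout I write a single encryption as $\Enc^H(\sk,m)=(\Enc(\sk,x),\,H(\sk||x),\,x\oplus m)$ for a fresh $x\sim\bit^\lambda$, so that $k$ identical copies take the form $(\ket{\psi_x}^{\otimes k},\,c,\,x\oplus m)$ with $\ket{\psi_x}=\Enc(\sk,x)$ and $c=H(\sk||x)$.

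For wrong-key detection, fix $k'\neq k$ and $m$ and consider $\Dec^H_{k'}\circ\Enc^H_k(m)$. Decryption first runs $\Dec(k',\Enc(k,x))$, producing some (possibly superposed) outcome register holding values $x''$, and then coherently tests whether $H(k'||x'')=c=H(k||x)$, outputting $\proj{\bot}$ unless the test passes. The key observation is that $\Pi$ never touches $H$, so the state on the $x''$-register is independent of $H$; moreover, since $k'\neq k$ we have $(k'||x'')\neq(k||x)$ for every $x''$ in the support. Hence for a random oracle the two hash values are independent and uniform, and the test passes with probability exactly $2^{-\ell}$ in expectation over $H$. I would then take a union bound over all $2^{2\lambda}$ candidate pairs $(k',x'')$ to conclude that, with $\ell=4\lambda$, the WKD test fails for every wrong key simultaneously except with probability $\le 2^{2\lambda}\cdot 2^{-4\lambda}=\negl(\lambda)$ over the choice of $H$.

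For search security, suppose toward a contradiction that $\cA=(\cA_0,\cA_1)$ wins the search game (\Cref{search-security}) against $\Sigma^H$ with non-negligible advantage. I would introduce an ideal game $G_{\mathrm{ideal}}$, identical to the real game except that the hint $c=H(\sk||x)$ in the ciphertext is replaced by a fresh independent string while the oracle itself is left unchanged. Writing $\cS=\{(\sk||x)\}$, the real and ideal games correspond to running $\cA$ against two oracles that agree off $\cS$—the reprogrammed oracle $\cO'$ with $\cO'(\sk||x)=c$ (real) and the un-reprogrammed oracle $\cO$ (ideal)—with the identical input $z=(\ket{\psi_x}^{\otimes k},c,x\oplus m)$. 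Applying the One-Way-to-Hiding lemma (\Cref{lem:O2H}) then bounds the gap between the two games by $2q\sqrt{\Pr[\cB\text{ measures a query equal to }(\sk||x)]}$, where $\cB$ is the associated query-extracting algorithm.

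It remains to bound both quantities by the multi-copy search security of $\Pi$. For the ideal game the hint $c$ is independent of everything, so a reduction $\cR$ can simulate it without knowing $\sk$ or $x$: $\cR$ forwards $k$ to the $\Pi$-challenger, receives $\ket{\psi_x}^{\otimes k}$, samples a fresh $c$ and a fresh $z\sim\bit^\lambda$ used as the one-time-pad component (implicitly fixing the uniform message $m=x\oplus z$), simulates the plain random oracle, relays the revocation register and the key $\sk$ through the revocation handshake, and finally outputs $z\oplus m'$ when $\cA$ guesses $m'$—a correct guess of $m$ yields a correct guess of the $\Pi$-message $x$. For the extraction term, the very same simulation turns $\cB$ into a $\Pi$-search adversary that reads off the $\lambda$-bit suffix of the measured query as its guess for $x$, and a measured query lying in $\cS$ means exactly that this suffix equals $x$. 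Both events are therefore negligible, giving $\Pr[G_{\mathrm{real}}]\le\negl(\lambda)+2q\sqrt{\negl(\lambda)}=\negl(\lambda)$, the desired contradiction. The main obstacle is precisely this search step: because the adversary eventually learns $\sk$, it could try to exploit the hint $c=H(\sk||x)$ to invert $H(\sk||\cdot)$ and recover $x$, and the crux is to argue—via O2H together with a reduction that never needs to know $x$ or $\sk$ ahead of time—that doing so is no easier than breaking $\Pi$ directly.
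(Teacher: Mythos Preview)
The proposal is correct and follows essentially the same approach as the paper: both verify WKD via a collision bound over the random oracle, and both prove search security by passing to an ideal hybrid in which the hint $H(\sk\Vert x)$ is replaced by a fresh random string, bounding the gap via the O2H lemma, and reducing both the ideal-game success and the O2H extraction probability to the search security of the underlying scheme. The only cosmetic difference is that the paper routes the final reduction through the intermediate hybrid-encryption lemma (\Cref{thm:hybrid-enc}), whereas you reduce directly to $\Pi$---these are equivalent since the hybrid scheme's security is exactly $\Pi$'s.
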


\begin{proof}

Correctness is clearly preserved. Let us first verify the WKD property of the construction $\Sigma^H =(\KeyGen^H,\Enc^H,\Dec^H,\mathsf{Revoke}^H)$ in the QROM. We will analyze the probability that decryption with a wrong key $k'$ passes the key detection test.
Formally, for every $\kappa'\neq \kappa \leftarrow \KeyGen^H(1^\lambda)$ and plaintext $m \in \bit^\lambda$, we bound the probability as follows:
\begin{align*}
\Pr_{H}[\Dec_{\kappa'}^H( \Enc_\kappa^H(m)) \neq \bot] &= \Pr_{H,x}[H(\kappa'||\Dec_{\kappa'}(\Enc_\kappa(x)))=H(\kappa||x)]\\
&\leq \Pr_H[\exists w'\neq w\in \{0,1\}^{2\lambda}:H(w')=H(w)]\\
&\leq \frac{2^{2\lambda}-1}{2^{4\lambda}} = \negl(\lambda).
\end{align*}

For security, suppose that $\mathcal{A}$ is a successful adversary against the search variant of the revocable security game with respect to $\Sigma^H$, i.e., there exists a non-negligible function $\eps(\lambda)$ such that\\
$$ \prob\big[1 \leftarrow \revokeexperiment_{\lambda,\Sigma^H,\adversary} \big]  =\eps(\secparam).$$
For convenience, we model $\adversary$ as a pair of quantum algorithms $(\adversary_0^H,\adversary_1^H)$, where $\adversary_0^H$ corresponds to the pre-revocation adversary, and $\adversary_1^H$ corresponds to the post-revocation adversary. Suppose also that $\algo A^H$ makes at most $q=\poly(\lambda)$ many queries to the random oracle $H$.

Now, consider the following hybrids:\\
\ \\
\noindent $\hybrid_{1}$: This corresponds to $\revokeexperiment_{\lambda,\Sigma^H,\adversary}$:
\begin{enumerate}
   \item $\adversary_0$  submits $k=k(\secparam)$, where $k(\secparam)$ is a polynomial.  
    \item The challenger samples a message $m \sim \bit^\lambda$ and $x \sim \bit^\lambda$, lets $\sk \leftarrow \KeyGen(1^{\secparam})$ and produces $k$ copies of $\ket{\psi} \leftarrow \enc(\sk,x)$. The challenger then sends $(\ket{\psi}^{\otimes k},H(\sk||x),x \oplus m)$ to $\adversary_0^H$. 
    \item $\adversary_0^H$ prepares a bipartite state on registers $\mathsf{R}$ and $\mathsf{AUX}$, and sends $\mathsf{R}$ to the challenger and $\mathsf{AUX}$ to the post-revocation adversary $\adversary_1^H$.
    \item The challenger performs the projective measurement $\left\{ \ketbra{\psi}{\psi}^{\otimes k},\ \mathbb{I} - \ketbra{\psi}{\psi}^{\otimes k} \right\}$ on register $\mathsf{R}$. If the measurement succeeds, the game continues. Otherwise, the challenger outputs $\bot$.
    \item The challenger sends $\sk$ to $\adversary_1^H$. 
    \item $\adversary_1^H$ outputs a message $m'$. 

    \item The outcome of the experiment is $1$, if and only if $m'=m$.
\end{enumerate}

\noindent $\hybrid_{2}$: This is the same experiment as $\hybrid_{2}$, except that in Step $2$ the challenger sends $(\ket{\psi}^{\otimes k}, z, x \oplus m)$
instead of $(\ket{\psi}^{\otimes k},H(\sk||x),x \oplus m)$ to $\adversary_0$, where $z \sim \bit^\ell$ is uniformly random.\ \\
\par
Notice that in $\hybrid_{2}$, the oracle $H$ is a completely random function and $z$ is also completely random, i.e., both are independent of the rest of the scheme. Thus, if $\hybrid_{1}$ and $\hybrid_{2}$ are indistinguishable, $(\adversary_0^H,\adversary_1^H)$ breaks the revocable security of the hybrid encryption scheme in \Cref{thm:hybrid-enc}, since both $H$ and $z$ can be simulated.
Therefore, it suffices to argue that $\hybrid_{1}$ and $\hybrid_{2}$ are close in order to complete the proof.

\begin{claim}
    $\hybrid_{1}$ and $\hybrid_{2}$ are negligibly close.
\end{claim}
\begin{proof}
Suppose there exists an adversary $\adversary = (\adversary_0,\adversary_1)$ that can distinguish the two hybrids with non-negligible advantage
$\epsilon(\secparam)$. In other words, $\adversary$ satisfies
\begin{align}\label{eq:dist-adv}
&\vline\Pr\left[\algo A_1^H(\sk,\mathsf{AUX}_{|\top})=1 \, : \, \substack{
\sk \leftarrow \KeyGen(1^{\secparam})\\
m \sim \bit^\lambda, \, x \sim \bit^\lambda\\
\ket{\psi} \leftarrow \enc(\sk,x)\\
H \sim \{H:\bit^{2\lambda} \rightarrow \bit^{\ell}\}\\
(\mathsf{R},\mathsf{AUX}) \leftarrow \adversary_0^H(\ket{\psi}^{\otimes k},H(\sk||x),x \oplus m)
}\right] \nonumber\\
&\quad- \Pr\left[\algo A_1^H(\sk,\mathsf{AUX}_{|\top})=1 \, : \, \substack{
\sk \leftarrow \KeyGen^H(1^{\secparam})\\
m \sim \bit^\lambda, \, x \sim \bit^\lambda\\
\ket{\psi} \leftarrow \enc(\sk,x)\\
H \sim \{H:\bit^{2\lambda} \rightarrow \bit^{\ell}\}\\
z \sim \bit^\ell\\
(\mathsf{R},\mathsf{AUX}) \leftarrow \adversary_0^H(\ket{\psi}^{\otimes k},z,x\oplus m)
}\right]\vline \, \geq \, \eps(\lambda).
\end{align}
Our goal is to use the O2H Lemma (\Cref{lem:O2H}) in order to argue that we can use $\adversary^H = (\adversary_0^H,\adversary_1^H)$ to extract $(\sk||x)$. To this end, we introduce the following notation. For $H:\bit^{2\lambda} \rightarrow \bit^{\ell}$, a pre-image $x \in \bit^{2\lambda}$ and image $z \in \bit^\ell$, we use $H_{x,z}$ to denote the re-programmed function
$$
H_{x,z}(w) := \begin{cases}
H(x), & \text{ if } w \neq x\\
z, & \text{ if } w=x.
\end{cases}
$$
Consider the following oracle-aided quantum algorithm $\tilde{\adversary}$ which receives access to an oracle $\oracle$ and an input of the form $(\sk,x,\ket{\psi}^{\otimes k}, z, x \oplus m)$:
\begin{enumerate}
    \item $\tilde{\adversary}$ runs $(\mathsf{R},\mathsf{AUX}) \leftarrow \adversary_0^\oracle(\ket{\psi}^{\otimes k},z,x\oplus m)$.

    \item $\tilde{\adversary}$ performs the projective measurement $\left\{ \proj{\Enc(\sk,x)}^{\otimes k}, \mathbb{I} - \proj{\Enc(\sk,x)}^{\otimes k} \right\}$ on register $\mathsf{R}$. If the measurement fails, $\tilde{\adversary}$ aborts.
    \item $\tilde{\adversary}$ runs $\adversary_1^\oracle(\sk,\mathsf{AUX}_{|\top})$ and outputs whatever it outputs.
\end{enumerate}
Notice that $\tilde{\algo A}$ has precisely the same distinguishing advantage $\epsilon(\lambda)$ in \Cref{eq:dist-adv}.
Next, we also define the quantum extractor $\mathsf{Ext}^{\oracle}(\sk,x,\ket{\psi}^{\otimes k}, z, x \oplus m)$ acting as follows:
  \begin{enumerate}
      \item Sample $i \leftarrow [q]$, where $q$ denotes the total number of queries made by $\tilde{\adversary}$.
      \item Run $\tilde{\adversary}^\oracle(\sk,\ket{\psi}^{\otimes k}, z, x \oplus m)$ just before the $(i-1)$-st query to $\oracle$. 
      \item Measure $\tilde{\adversary}$'s $i$-th query in the computational basis, and output the measurement outcome. 
  \end{enumerate}

Applying the O2H Lemma (Lemma \ref{lem:O2H}) to the \emph{particular} extractor $\mathsf{Ext}^{\oracle}$ we defined above, we get 
\begin{align*}
\eps(\lambda) \,\,\leq \,\,\, \,&\vline\Pr\left[\tilde{\algo A}^H(x,\sk,\ket{\psi}^{\otimes k}, z, x \oplus m)=1 \, : \, \substack{
\sk \leftarrow \KeyGen(1^{\secparam})\\
m \sim \bit^\lambda, \, x \sim \bit^\lambda\\
\ket{\psi} \leftarrow \enc(\sk,x)\\
H \sim \{H:\bit^{2\lambda} \rightarrow \bit^{\ell}\}\\
z \sim \bit^\ell, \, H \leftarrow H_{\sk||x,z}
}\right] \\
&\quad\quad- \Pr\left[\tilde{\algo A}^G(x,\sk,\ket{\psi}^{\otimes k}, z, x \oplus m)=1 \, : \, \substack{
sk \leftarrow \KeyGen(1^{\secparam})\\
m \sim \bit^\lambda, \, x \sim \bit^\lambda\\
\ket{\psi} \leftarrow \enc(\sk,x)\\
G \sim \{G:\bit^{2\lambda} \rightarrow \bit^{\ell}\}\\
z \sim \bit^\ell
}\right] \vline \vspace{2mm}\\
&\leq 2 q \sqrt{\Pr\left[ (\sk||x) \leftarrow \mathsf{Ext}^G(\sk,x,\ket{\psi}^{\otimes k}, z, x \oplus m) \, : \, \substack{
sk \leftarrow \KeyGen(1^{\secparam})\\
m \sim \bit^\lambda, \, x \sim \bit^\lambda\\
\ket{\psi} \leftarrow \enc(\sk,x)\\
G \sim \{G:\bit^{2\lambda} \rightarrow \bit^{\ell}\}\\
z \sim \bit^\ell
}\right]}.
\end{align*}
In other words, the quantum extractor succeeds with probability at least
\begin{align*}
\Pr\left[ (\sk||x) \leftarrow \mathsf{Ext}^G(\sk,x,\ket{\psi}^{\otimes k}, z, x \oplus m) \, : \, \substack{
sk \leftarrow \KeyGen(1^{\secparam})\\
m \sim \bit^\lambda, \, x \sim \bit^\lambda\\
\ket{\psi} \leftarrow \enc(\sk,x)\\
G \sim \{G:\bit^{2\lambda} \rightarrow \bit^{\ell}\}\\
z \sim \bit^\ell
}\right] \geq \frac{\eps^2}{4q}.    
\end{align*}
But this implies that we can find $(\sk||x)$ with $1/\poly(\lambda)$ probability by measuring a random query which occurs either as a result of running the pre-revocation adversary $\algo A_0^G(\ket{\psi}^{\otimes k},z,x\oplus m)$ or of the post-revocation adversary $\algo A_1^G(\sk,\mathsf{AUX}_{|\top})$. In either case, we can use $x$ to immediately find $m$. Because $G$ is a completely random function and $z$ is also completely random (i.e., both can be simulated) this breaks the revocable security of the hybrid encryption scheme in \Cref{thm:hybrid-enc}.

\end{proof}

\subsection{Revocable Multi-Copy-Secure Point Functions}
We are now ready to define our revocable multi-copy-secure scheme for multi-bit point functions, which we obtain from any revocable encryption scheme with the aforementioned ``wrong-key detection mechanism''. Here, we consider multi-bit point functions $P_{y,m}$ of the form
$$ P_{y,m} (x) = \begin{cases} m   & \text{if } x = y\,,\\
    0^\lambda &\text{if } x \neq y \,,   \end{cases}  $$ 
where $y,m \in \{0,1\}^\lambda$.
Point functions are simply a special case of programs, so for a formal definition of syntax and revocable multi-copy security, see Definition \ref{def:revprogramsyntax}
 and Definition \ref{def:revprogsec} respectively.

 Our construction is the following

\begin{construction}[Construction for revocable multi-copy secure  point functions]\label{cons:cp_UQE}\ \\Let $\lambda\in \mathbb{N}$  be the security parameter. To construct a revocable multi-copy secure scheme for multi-bit point functions with input and output sizes $\lambda$, respectively, let $\Pi = (\KeyGen,\Enc,\Dec,\mathsf{Revoke})$ be a revocable encryption scheme with WKD, with security parameter and message length equal to $\lambda$. We define the scheme $\Sigma=(\mathsf{Compile},\mathsf{Eval},\mathsf{Revoke})$ as follows:

\begin{itemize}
\item $\mathsf{Compile}(1^{\lambda}, P_{y,m})$: on input a security parameter $\lambda$ and a multi-bit point function $P_{y,m}$, succinctly specified by the marked input $y$ (of size $\lambda$) and message $m$ (of size $\lambda$), it 

\begin{itemize}
    \item generates a pair $(\ket{\Psi_{P}},\Pi.\mathsf{vk}) \leftarrow \Pi.\Enc(y,m)$, and
    \item outputs the quantum state $\ket{\Psi_{P}}$ and (private) verification key $\mathsf{vk}=(y,\Pi.\mathsf{vk})$.
\end{itemize}

\item $\mathsf{Eval}( \ket{\Psi_{P}} , x)$:    On input $\ket{\Psi_{P}}$ and a string $x \in \{0,1\}^\lambda$ , do the following: append an ancillary qubit in the $\ket{0}$ state. Then, coherently perform a two-outcome measurement to check whether $\Pi.\Dec(x,\ket{\Psi_{P}})$ is in the state $\ket{\bot}\bra{\bot}$, or not, and store the resulting bit in the ancilla. If true, output $0^\lambda$. Otherwise, rewind the procedure and measure in the standard basis to obtain a message $m'$.

\item $\mathsf{Revoke}(\mathsf{vk},\sigma):$ on input the key $\vk$ and a state $\sigma$, parse $(y,\Pi.\mathsf{vk})\leftarrow \mathsf{vk}$ and then run $\Pi.\mathsf{Revoke}(y,\Pi.\mathsf{vk},\sigma)$.
\end{itemize}
\end{construction}

We now prove the security of Construction~\ref{cons:cp_UQE}.

\begin{theorem}\label{thm:point-function-CP-from-UCE}
Let $\Pi = (\KeyGen,\Enc,\Dec,\mathsf{Revoke})$ be any multi-copy revocable secure encryption scheme which has the WKD property. Then, Construction \ref{cons:cp_UQE} yields a revocable multi-copy secure  scheme for multi-bit point functions with respect to the pair of ensembles  $\mathscr{D}_{\mathscr{P}},\mathscr{D}_{\algo X}$ in the QROM, where $\mathscr{D}_{\mathscr{P}}$ is the (natural) uniform distribution over multi-bit point functions and $\mathscr{D}_{\algo X}$ is arbitrary.
\end{theorem}

\begin{proof} The correctness of the scheme follows directly from the WKD property of $\Pi$. Let $\mathcal{A}$ denote the adversary for $\revokeexperiment_{\lambda,\Sigma,\adversary}^{\mathscr{D}_{\mathscr{P}},\mathscr{D}_{\algo X}}$. We consider two cases, namely when $p_{\mathrm{triv}}^{\mathscr{D}_{\mathscr{P}},\mathscr{D}_{\algo X}}(\lambda)=1$ and when $p_{\mathrm{triv}}^{\mathscr{D}_{\mathscr{P}},\mathscr{D}_{\algo X}}(\lambda) <1$. In the former case, the scheme is trivially secure by definition and we are done. Hence, we will assume that $p_{\mathrm{triv}}^{\mathscr{D}_{\mathscr{P}},\mathscr{D}_{\algo X}}(\lambda)<1$ for the remainder of the proof. Note that, in this case, $\mathcal{D}_{\algo X_{\lambda}}(P_{y,m})$ has non-zero weight on the marked input $x$, for a random multi-bit point function $P_{y,m}$.

Let $x \leftarrow \mathcal{D}_{\algo X_{\lambda}}(P_{y,m})$ denote the input received by $\adversary$ during the post revocation phase. 
We can express the probability that $\mathcal{A}$ succeeds at $\revokeexperiment_{\lambda,\Sigma,\adversary}^{\mathscr{D}_{\mathscr{P}},\mathscr{D}_{\algo X}}$ as follows:
\begin{align}
&\Pr[ \mathcal{A} \text{ wins}] \nonumber\\
&\! =\! \Pr[ \mathcal{A} \text{ wins} \,|\, y \!\neq\! x] \!\cdot\!\Pr[y \!\neq\! x]\!+\!\Pr[ \mathcal{A} \text{ wins} \,|y \!=\! x] \!\cdot\!\Pr[ y \!=\! x].\\
&\! \leq\Pr[ \mathcal{A} \text{ wins} \,|\, y \!\neq\! x]\!+\!\Pr[ \mathcal{A} \text{ wins} \,|y \!=\! x] \!\cdot\!\Pr[ y \!=\! x].\\
&\! =p_{\mathrm{triv}}^{\mathscr{D}_{\mathscr{P}},\mathscr{D}_{\algo X}}(\lambda)+\!\Pr[ \mathcal{A} \text{ wins} \,|y \!=\! x] \!\cdot\!\Pr[ y \!=\! x].
\end{align}
where the second equality follows because when $x\neq y$, then the post-revocation adversary $\cal{A}_1$ does not have access to the secret key $y$, and the winning probability is simply the optimal guessing probability in this case. 

We complete the proof by showing that $\Pr[ \mathcal{A} \text{ wins} \,|\,  y = x] \leq \negl(\lambda)$. This would imply that
\begin{align}
\Pr[ \mathcal{A} \text{ wins}] \leq p_{\mathrm{triv}}^{\mathscr{D}_{\mathscr{P}},\mathscr{D}_{\algo X}}(\lambda) + \negl(\lambda).\label{eq:CP_security_bound}
\end{align}
Suppose that $\mathcal{A}$ succeeds with non negligible probability on the challenge $x$ such that $x =y$. We will use $\mathcal{A}$ to construct an adversary against the multi-copy (search) revocable security of the scheme $\Pi$. For convenience, we model $\adversary$ as a pair of QPT algorithms $(\adversary_0,\adversary_1)$, where $\adversary_0$ corresponds to the pre-revocation adversary, and $\adversary_1$ corresponds to the post-revocation adversary. Consider the $QPT$ adversary $\mathcal{B}= (\mathcal{B}_0,\mathcal{B}_1)$ against $\Pi$, which we define as follows:
\begin{enumerate}
 \item $\algo B_0$ runs $\adversary_0$ to obtain $k=k(\secparam)$, where $k(\secparam)$ is a polynomial, and sends it to the challenger.
    \item Upon receiving a multi-copy ciphertext of the form $\ket{\psi}^{\otimes k}$, where $\ket{\psi}=\Enc(y,m)$, for a random $m \sim \bit^\lambda$, $\algo B_0$ sends $\ket{\psi}^{\otimes k}$ to $\adversary_0$, and runs $(\mathsf{R},\mathsf{AUX}) \leftarrow \algo A_0(\ket{\psi}^{\otimes k})$. 
    
    \item $\algo B_0$ sends $\mathsf{R}$ to the challenger. 
    \item The challenger performs the projective measurement $\left\{ \ketbra{\psi}{\psi}^{\otimes k},\ \mathbb{I} - \ketbra{\psi}{\psi}^{\otimes k} \right\}$ on $R$. If the measurement outcome is 1 then the challenger outputs $\bot$. Otherwise, the challenger continues.
    
    \item $\algo B_0$ sends $\mathsf{AUX}$ to the post-revocation adversary $\algo B_1$.

    \item $\algo B_1$  receives the secret key $\sk=y$ from the challenger, and it forwards $y$ to  $\adversary_1$.  
    \item Finally, $\algo B_1$ outputs the output obtained by running $\adversary_1$.
\end{enumerate}
$\Pr[B \text{ wins }]= \Pr[A \text{ wins }| x=y]=\text{non-negl}(\lambda)$, which is a contradiction.
\end{proof}

Finally, we end this section with the following corollary, which is immediate from the previous results in this section.

\begin{corollary}
There exist revocable multi-bit point functions which satisfy (an oracular notion of) multi-copy security in the quantum random oracle model.     
\end{corollary}

\end{proof}

\section{Applications to Sponge Hashing}\label{sec:sponge}

In this section, we show that the techniques we developed in this paper, i.e., our analysis of $k \mapsto k+1$ unforgeability of random subsets via query lower bounds for permutations, are more broadly applicable and extend to other cryptographic settings as well. Here, we single out the so-called \emph{sponge construction} and show how to prove new space-time trade-offs for finding elements in sponge hash tables. Our key insight is that the aforementioned query problem can be viewed as a particular instance of $k \mapsto k+1$ unforgeability of random subsets in the presence of membership oracles.

\subsection{Single-Round Sponge Hashing}

The National Institute of Standards and Technology (NIST) recently announced a new international hash function standard known as SHA-3. This is a new family of cryptographic functions based on the idea of \emph{sponge hashing}~\cite{KeccakSponge3,KeccakSub3}.
This particular approach  allows for both variable input length and variable output length, which makes it particularly attractive towards the design of cryptographic hash functions. 
The internal state of a sponge function gets updated through successive applications of a so-called \emph{block function} (modeled as a random permutation) $\varphi: \bit^{r+c} \rightarrow \bit^{r+c}$, where we call the parameters $r \in \N$ the \emph{rate} and $c \in \N$ the \emph{capacity} of the sponge. 

The basic (single-round) sponge hash function $\mathsf{Sp}_{\mathsf{IV}}^\varphi: \bit^{r} \rightarrow \bit^r$ on input $x \in \bit^r$ is defined by $h = \mathsf{Sp}_{\mathsf{IV}}^\varphi(x)$, where $h$ corresponds to the first $r$ bits of $\varphi(x||\mathsf{IV}) := (h||w)$, for some fixed \emph{initialization vector} $\mathsf{IV} \in \bit^c$.
Here, the post-fix $w \in \bit^c$ is simply discarded during the evaluation of the hash function, and we can think of $\mathsf{IV} \in \bit^c$ as a random \emph{salt}.  We give an illustration of the sponge hash function below.

\begin{figure}[H]
\begin{center}
{\small
\begin{tikzpicture}
  \draw (5,-0.75) rectangle (6,0.75) node [pos=.5]{$\varphi$}; 

\draw[-] (4.6,0.35) node[left]{$x$} -- (5,0.35);
\draw[-] (4.6,-0.35) node[left]{$\mathsf{IV}$} --(5,-0.35);
  \draw[-] (6.0,0.35) node[right]{\hspace{4mm}$\mathsf{Sp}^\varphi(x)$} --(6.4,0.35);
\draw[-] (6.0,-0.35) node[right]{\hspace{4mm}$w$} --(6.4,-0.35);
 \end{tikzpicture}
\label{fig:single-sponge}
}
\end{center}
\caption{The single-round sponge hash function.}
\end{figure}
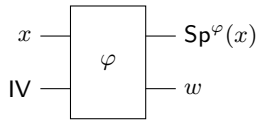

While much is known about the security of the sponge in a classical world~\cite{10.1007/978-3-540-78967-3_11}, the post-quantum security of the sponge construction in the case when the block function is a random (but invertible) permutation remains a major open problem~\cite{Czajkowski2017,Czajkowski2019,Zhandry21,carolan2024oneway,majenz2024permutationsuperpositionoraclesquantum}. The inherent difficulty in proving the post-quantum security of the sponge construction lies in the fact that an adversary needs to be modeled as a quantum query algorithm which has access to $\varphi$ and its inverse $\varphi^{-1}$.
This is where existing techniques~\cite{Czajkowski2017,Czajkowski2019} seem to break down and require an entirely new approach, e.g., as the recent works of~\cite{carolan2024oneway,majenz2024permutationsuperpositionoraclesquantum} suggest.

\subsection{Space-Time Trade-Offs for Finding Elements in Sponge Hash Tables}

We now consider a simple query problem which seems to lie outside of the scope of existing techniques in~\cite{Czajkowski2017,Czajkowski2019,carolan2024oneway,majenz2024permutationsuperpositionoraclesquantum}. Specifically, we consider the following task of finding elements in hash tables, which is also a potentially relevant problem in practice.\\ 
\ \\
\textbf{A simple query problem:} Suppose that $\varphi: \bit^{r+c} \rightarrow \bit^{r+c}$ is a random permutation, for some $r,c \in \N$. Let  $\mathsf{IV} \in \bit^c$ be a random salt. Given as input a hash table of size $S$ which consists of
$$
\big(h_1 =  \mathsf{Sp}_{\mathsf{IV}}^\varphi(x_1), \, \dots, \, h_S =  \mathsf{Sp}_{\mathsf{IV}}^\varphi(x_S)\big)
$$
where $x_1,\dots,x_S \leftarrow \bit^r$ are randomly chosen distinct inputs, how many queries to $\varphi^{-1}$ and a checking oracle $\mathsf{Valid}_{\varphi,\mathsf{IV}}$ (which outputs $1$, if an input hash is in the range of $\mathsf{Sp}_{\mathsf{IV}}^\varphi$, and $0$ otherwise) does a quantum algorithm need to find a valid new element in the range of $\mathsf{Sp}_{\mathsf{IV}}^\varphi$? Suppose that
$$
\Pr\left[ \mathsf{Valid}_{\varphi,\mathsf{IV}}(h')=1 \,\, \wedge\,\, h' \neq h_i, \forall i \in [S] \, \,: \,\,h' \leftarrow \mathcal{A}^{\varphi^{-1},\, \mathsf{Valid}_{\varphi,\mathsf{IV}}}(h_1,\dots,h_S) \right]= \epsilon > 0.
$$
Can we prove that $\epsilon$ must satisfy a space-time trade-off, i.e., in terms of the size $S$ of the hash table, and the number of queries $T$? In other words, can we show that the only way an adversary can add a new element to the hash table is by making a large number of queries? Clearly, there is a subtle trade-off that appears in this scenario, since a sufficiently large hash table could potentially make it harder to find valid new elements.

Analyzing such a query problem is non-trivial for a number of reasons. The first challenge lies in the fact the oracle-aided algorithm $\algo A$ receives as input list of hashes which depend on the salt itself (i.e., the algorithm can act \emph{adaptively}). Typically, in traditional function inversion tasks~\cite{cryptoeprint:2019/1093,CiC-1-1-27}, the adversary only receives advice on the hash function itself, i.e., it does not get to receive information which depends on the actual salt as part of the pre-processing phase. The second challenge lies in the fact that $\algo A$ receives not just one, but two oracles $\varphi^{-1}$ and $\mathsf{Valid}_{\varphi,\mathsf{IV}}$ which depend on $\varphi$, and also the salt $\mathsf{IV}$.

Surprisingly, our techniques for proving the unforgeability of random subsets via query lower bounds for permutations are amendable in this context and allow us to show the following theorem.

\begin{theorem}
Let $r,c \in \N$ be integers and let $\lambda = \min(r,c)$. Let $\varphi: \bit^{r+c} \rightarrow \bit^{r+c}$ be a uniformly random permutation and $\mathsf{IV} \leftarrow \bit^c$ be a random salt. Suppose that $\mathcal{A}^{\varphi^{-1},\, \mathsf{Valid}_{\varphi,\mathsf{IV}}}$ is a $T$-query quantum algorithm which receives as input a hash table
$\big(h_1 =  \mathsf{Sp}_{\mathsf{IV}}^\varphi(x_1), \, \dots, \, h_S =  \mathsf{Sp}_{\mathsf{IV}}^\varphi(x_S)\big)$ of size $S$,
where $x_1,\dots,x_S \leftarrow \bit^r$ are randomly chosen distinct inputs. Suppose that $\mathcal{A}$ succeeds at finding a new element in the hash table with probability
$$
\Pr\left[ \mathsf{Valid}_{\varphi,\mathsf{IV}}(h')=1 \,\, \wedge\,\, h' \neq h_i, \forall i \in [S] \, \,: \,\,h' \leftarrow \mathcal{A}^{\varphi^{-1},\, \mathsf{Valid}_{\varphi,\mathsf{IV}}}(h_1,\dots,h_S) \right]= \epsilon > 0.
$$    
Then, $\mathcal{A}$ satisfies the space-time trade-off given by
$$
\epsilon \leq O\left( T \cdot \sqrt{\frac{2^r -S}{2^{r+c}}}  + \sqrt{\frac{2^r -S}{2^{r}}} +\frac{T^3}{2^{r+c} - 2^{r}} +\frac{T^3}{2^{r}}\right).
$$
\end{theorem}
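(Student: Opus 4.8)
The plan is to recognise the set $\mathcal{S} := \{\varphi(x||\mathsf{IV}) : x \in \bit^r\} \subseteq \bit^{r+c}$ as a uniformly random subset of size $2^r$, and to view the two oracles handed to $\mathcal{A}$ as structured accessors to $\mathcal{S}$: the inverse permutation $\varphi^{-1}$ encodes membership in $\mathcal{S}$ (the capacity part of $\varphi^{-1}(s)$ equals $\mathsf{IV}$ precisely when $s\in\mathcal{S}$), while $\mathsf{Valid}_{\varphi,\mathsf{IV}}$ is exactly a membership oracle for the \emph{projection} $\mathcal{R} := \{ h \in \bit^r : \exists w,\ h||w \in \mathcal{S}\}$. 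The input hash table is then $S$ elements of $\mathcal{R}$. Under this dictionary, finding a new valid hash is morally the problem of producing an $(S{+}1)$-st element of a random set given $S$ of them and membership access — a projected instance of the $k \mapsto k+1$ unforgeability phenomenon of \Cref{thm:rss:pd}.

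First I would strip the permutation structure. Writing $T_{\mathsf{IV}} = \{x||\mathsf{IV} : x\in\bit^r\}$, I split $\varphi$ into its action $T_{\mathsf{IV}} \to \mathcal{S}$ and $(\bit^{r+c}\setminus T_{\mathsf{IV}}) \to (\bit^{r+c}\setminus\mathcal{S})$, then invoke Zhandry's function-versus-permutation bound (\Cref{thm:Zhandry-result}) twice — exactly as in hybrids $\hybrid_3^b,\hybrid_4^b$ of the proof of \Cref{thm:revoc-encryption} — to replace each branch by a random function. The two switches cost $O(T^3/2^r)$ (on $\mathcal{S}$, of size $2^r$) and $O\!\big(T^3/(2^{r+c}-2^r)\big)$ (on the complement), which are precisely the last two terms of the claimed bound. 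A useful by-product is that this de-structuring removes the \emph{adaptivity} worry flagged in the problem statement: once $\varphi|_{T_{\mathsf{IV}}}$ is a random function, the hashes $h_1,\dots,h_S$ become fresh random elements of $\mathcal{R}$ carrying no information about $\mathsf{IV}$ beyond membership in $\mathcal{S}$, and both oracles become simulable from a single membership oracle together with independent randomness.

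With the structure gone, I would close the argument with the One-Way-to-Hiding Lemma (\Cref{lem:O2H}), mirroring the flooding step in \Cref{lem:rss:pdf:helpful}. The key point is that $\mathcal{S}$ minus the known elements is a random set of size $\approx 2^r - S$ about which $\mathcal{A}$ has no information, so any single $(r{+}c)$-bit query overlaps it with weight at most $(2^r-S)/2^{r+c}$ in expectation; O2H then bounds the advantage $\mathcal{A}$ gains over blind guessing by $O\!\big(T\sqrt{(2^r-S)/2^{r+c}}\big)$, the first term. The remaining $\sqrt{(2^r-S)/2^r}$ term accounts for the fact that the goal lives in the \emph{projected} $r$-bit space, where $\mathcal{R}$ is dense: it bounds the residual probability that $\mathcal{A}$'s final $r$-bit output lands in $\mathcal{R}\setminus\{h_i\}$ by chance, and is the place where the projection is paid for. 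Assembling the four contributions by the triangle inequality yields the stated space-time trade-off.

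The main obstacle I anticipate is precisely this projection: because the capacity bits are discarded, a valid hash $h'$ does not by itself pin down a full element $h'||w \in \mathcal{S}$, so one cannot transport a sponge forgery into an $(r{+}c)$-space forgery and quote \Cref{thm:rss:pd} as a black box. The analysis must instead run in two coupled spaces at once — the full $(r+c)$-bit space governing $\varphi^{-1}$ and the de-structuring, and the projected $r$-bit space governing $\mathsf{Valid}$ and the trivial-guessing term — while keeping the two correlated oracles (both determined by $\varphi$ and $\mathsf{IV}$) mutually consistent throughout a single O2H argument. Verifying that the per-query overlap estimates survive the projection, and that repeats and collisions of the hash table inside $\mathcal{R}$ do not spoil the counting, is the delicate part of the calculation.
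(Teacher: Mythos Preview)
Your de-structuring via Zhandry's bound matches the paper exactly and accounts for the two $T^3$ terms. The divergence is in how you handle the projection, and here you dismiss the very move the paper makes.

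You write that ``one cannot transport a sponge forgery into an $(r{+}c)$-space forgery and quote \Cref{thm:rss:pd} as a black box.'' The paper does precisely this: once $\mathcal{A}$ outputs a valid $r$-bit hash $h'$, the reduction Grover-searches over $g\in\bit^c$ for a witness $h'||g$ lying in the subset, at a cost of $O(\sqrt{2^c})$ extra membership queries. This lifts the projected forgery to a full $(r{+}c)$-bit one and makes the reduction a black-box $(T+O(\sqrt{2^c}))$-query adversary against the \emph{classical} $S\mapsto S+1$ unforgeability bound of \Cref{thm:classical-unforgeability} (the appendix variant, sharper here than \Cref{thm:rss:pd} since the inputs are classical strings rather than copies of a subset state). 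Expanding $(T+\sqrt{2^c})\sqrt{(2^r-S)/2^{r+c}}$ then produces the first two terms of the trade-off; in particular the $\sqrt{(2^r-S)/2^r}$ term is literally the Grover overhead $\sqrt{2^c}$ multiplied by the per-query unforgeability rate $\sqrt{(2^r-S)/2^{r+c}}$.

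This also shows why your reading of that term as a blind-guessing probability is off: guessing probabilities do not carry square roots, and the projection $\mathcal{R}$ of a random $2^r$-subset of $\bit^{r+c}$ has \emph{constant} density in $\bit^r$ (a fixed $h$ lies in $\mathcal{R}$ with probability roughly $1-e^{-1}$), so for small $S$ the blind-guess probability is $\Theta(1)$, far from $\sqrt{(2^r-S)/2^r}$. If instead you push your direct O2H through the $\mathsf{Valid}$ oracle --- which lives in the dense $r$-bit space with difference set of size up to $2^r-S$ --- you pick up $O\big(T\sqrt{(2^r-S)/2^r}\big)$, a factor of $T$ worse than the paper's second term. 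The Grover lift is exactly what removes that extra $T$.
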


\begin{proof}
We give a direct reduction from the $S \mapsto S+1$ unforgeability experiment in \Cref{{thm:classical-unforgeability}}. Suppose we are given as input $(a_1,\dots,a_S)$ as well as a membership oracle $\oracle_A$ for a random subset $A \subseteq \bit^{r+c}$ of size $|A|=2^r$. Consider the reduction $\mathcal{B}^{\oracle_A}(a_1,\dots,a_S)$ which does the following:
\begin{enumerate}
    \item $\mathcal{B}$ samples a random salt  $\mathsf{IV} \leftarrow \bit^c$.

    \item $\mathcal{B}$ samples a random function $f: \bit^{r+c} \rightarrow \bit^{r+c}$ subject to the constraint that $f(a) = \ast||\mathsf{IV}$, whenever $a \in A$.

    \item $\mathcal{B}$ runs $h' \leftarrow \mathcal{A}^{f,\, \mathcal{V}_{f,\mathsf{IV}}}(h_1,\dots,h_S)$ to obtain a string $h' \in \bit^r$, where we let $h_1,\dots,h_S$ denote the first $r$ bits of the strings $a_1,\dots,a_S$, respectively, and where we let
    $$
\mathcal{V}_{f,\mathsf{IV}}(x) = 
\begin{cases}
1, & \text{ if }  f(x) = \ast||\mathsf{IV}\\
0, & \text{ otherwise.}
\end{cases}
$$

\item $\mathcal{B}$ does a standard Grover search to find a string $g \in \bit^c$ such that $a'_{S+1} \in A$ and $a'_{S+1} \neq a_i$, for all $i \in [k]$, with  $a'_{S+1}:=h'||g$. If no such string $g$ is found, $\mathcal{B}$ outputs $\bot$. Note that this step requires at most $O(\sqrt{2^c})$ many additional queries in the worst case.
\end{enumerate}

We now analyze the probability that $\mathcal{B}$ solves the  $S \mapsto S+1$ unforgeability problem by finding $S+1$ distinct elements in $A$. Using an identical analysis as in the proof of \Cref{thm:revoc-encryption}, we can show that the probability that $\mathcal{A}^{f,\, \mathcal{V}_{f,\mathsf{IV}}}(h_1,\dots,h_S)$ finds a valid element such that $\mathcal{V}_{f,\mathsf{IV}}(h')=1$ drops by at most
$$
O\left(\frac{T^3}{2^{r+c} - 2^{r}}+\frac{T^3}{2^{r}}\right)
$$
compared to $\epsilon$.
Moreover, $\mathcal{B}$ solves the $S \mapsto S+1$ unforgeability problem with precisely the same loss in probability.
We can now invoke \Cref{thm:classical-unforgeability} to argue that 
\begin{align*}
\epsilon - O\left(\frac{T^3}{2^{r+c} - 2^{r}}\right) -  O\left(\frac{T^3}{2^{r}}\right) \leq O\left( \left(T + \sqrt{2^c}\right) \cdot \sqrt{\frac{2^r-S}{2^{r+c}}} + \frac{2^r-S}{2^{r+c}}\right).
\end{align*}
Re-arranging for $\epsilon$ and simplifying the bound, we get that
$$
\epsilon \leq O\left( T \cdot \sqrt{\frac{2^r -S}{2^{r+c}}}  + \sqrt{\frac{2^r -S}{2^{r}}} +\frac{T^3}{2^{r+c} - 2^{r}} +\frac{T^3}{2^{r}}\right).
$$

\paragraph{Open questions.} We now state a number of interesting open questions which are relevant to this section. While our analysis of finding elements in sponge hash tables does result in an interesting space-time trade-off, we do not believe that our bounds are potentially not tight. We conjecture that, if $r$ is significantly larger than $c$, then $\mathcal{A}$ requires at least $$\Omega\left(\sqrt{\frac{2^{r+c}}{2^r-S}} \right)$$ 
many queries to succeed with constant success probability. We leave this as an interesting open problem. Finally, we believe that our analysis could also be extended in the case when $\algo A$ receives access to the permutation $\varphi$---in addition to the oracles $\varphi^{-1}$ and $\mathsf{Valid}_{\varphi,\mathsf{IV}}$. Because our techniques do not seem apply in this setting, this suggest that an entirely new approach is necessary.

\end{proof}



\printbibliography

\appendix

\section{Appendix}

\paragraph{Tighter bound in the classical case.}

Here, we give a slightly tighter variant of \Cref{thm:rss:pd}, which we apply in the context of sponge hashing.

\begin{theorem}[Classical $k \mapsto k+1$ Unforgeability]\label{thm:classical-unforgeability}
Let $n \in \N$ and $k \in \N$. Then, for any $q$-query quantum oracle algorithm $\mathcal{A}$, and any $1 \leq k < s\leq 2^n$, it holds that
\begin{align*}
&\Pr_{\substack{x_1,\dots,x_k \leftarrow S\\
(x_1,\dots,x_k) \in \mathrm{dist}(S,k)
}}\Pr_{\substack{S \subseteq \bit^n\\
|S|=s}}\Big[
x' \in S \, \wedge \, x' \neq x_i, \, \forall i \in [k] \,\, : \,\, x'\leftarrow \mathcal{A}^{\oracle_S}(x_1,\dots,x_k) 
\Big]\\
&\leq  O\left(q \cdot \sqrt{\frac{s-k}{2^n}} + \frac{s-k}{2^n}\right).
\end{align*}
In particular, for $k=\poly(n)$, $q=\poly(n)$ and $s(n)= n^{\omega(1)}$, the probability is at most $\negl(n)$.

\end{theorem}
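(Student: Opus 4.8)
The plan is to reduce to the trivial observation that an adversary which only has access to the $k$ revealed elements $x_1,\dots,x_k$ cannot possibly produce a genuinely new member of $S$, and to show via the O2H Lemma that handing the adversary the full membership oracle $\oracle_S$ does not help unless it makes roughly $\sqrt{2^n/(s-k)}$ queries. Write $X=\{x_1,\dots,x_k\}$ and let $\oracle_X$ denote the membership oracle for the \emph{known} set $X$. The crucial point is that $\oracle_X$ together with the classical input $(x_1,\dots,x_k)$ is completely independent of the hidden part $S\setminus X$; consequently, when $\mathcal A$ is run relative to $\oracle_X$, its entire view — and in particular its final output $x'$ — is a random variable depending only on $X$ and $\mathcal A$'s internal randomness, never on $S\setminus X$.

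First I would re-sample the pair $(S,X)$ in the equivalent order dictated by this independence: draw $X$ as a uniform $k$-subset of $\bit^n$, then draw $S\setminus X$ as a uniform $(s-k)$-subset of $\bit^n\setminus X$, and set $S=X\cup(S\setminus X)$. By symmetry this reproduces exactly the joint distribution in the statement (a uniform size-$s$ set $S$ together with a uniform $k$-subset $X\subseteq S$). I then compare two experiments: the real one, in which $\mathcal A^{\oracle_S}(x_1,\dots,x_k)$ is run, and an idealized one, in which $\mathcal A^{\oracle_X}(x_1,\dots,x_k)$ is run. In both the output $x'$ is judged against the true set $S$, i.e.\ success means $x'\in S\setminus X$.

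Step one bounds the gap between the two experiments. Since $\oracle_S$ and $\oracle_X$ agree on every input outside $S\setminus X$, I apply the pure-state form of the O2H Lemma (\Cref{lem:O2H}) with $H=\oracle_X$, $G=\oracle_S$ and $\mathcal S=S\setminus X$, so that the relevant intermediate states $\ket{\psi^{\oracle_X}_i}$ are those of the $\oracle_X$-run and hence independent of $S\setminus X$. For any state $\ket{\psi}$ independent of $S\setminus X$, averaging the projection weight over the random hidden set gives $\E_{S\setminus X}\big\|\Pi_{S\setminus X}\ket{\psi}\big\|^2\le (s-k)/(2^n-k)$, because each fixed basis string in $\bit^n\setminus X$ lands in $S\setminus X$ with exactly that probability. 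Plugging this into O2H and using Jensen's inequality (exactly as in the proof of \Cref{lem:rss:pdf:helpful}) yields $\E_{S,X}\big\|\ket{\psi^{\oracle_S}_q}-\ket{\psi^{\oracle_X}_q}\big\|\le O\!\big(q\sqrt{(s-k)/2^n}\big)$, which upper-bounds the difference in success probabilities of the two experiments.

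Step two bounds the success probability of the idealized experiment directly: conditioned on $X$, the output $x'$ of $\mathcal A^{\oracle_X}$ has a distribution that does not depend on $S\setminus X$, whereas $S\setminus X$ is a fresh uniform $(s-k)$-subset of $\bit^n\setminus X$, so $\Pr[x'\in S\setminus X]\le (s-k)/(2^n-k)=O((s-k)/2^n)$. Combining the two steps via the triangle inequality gives the claimed bound $O\!\big(q\sqrt{(s-k)/2^n}+(s-k)/2^n\big)$, and the ``in particular'' clause follows whenever $s/2^n$ is negligible. The main obstacle — though a mild one compared with \Cref{thm:rss:pd}, since here the input is classical and only a single new element is sought — is making the independence argument airtight: one must verify that switching to $\oracle_X$ genuinely decouples the adversary's intermediate states from $S\setminus X$, so that the averaged projection bound $(s-k)/(2^n-k)$ applies uniformly across all $q$ query steps, and that the final measurement judging $x'\in S$ still refers to the true set $S$ rather than to the oracle used during the run.
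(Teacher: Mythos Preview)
Your proposal is correct and follows essentially the same route as the paper: replace $\oracle_S$ by $\oracle_X$ via the pure-state O2H Lemma (using that the $\oracle_X$-intermediate states are independent of $S\setminus X$, so the averaged projection weight is at most $(s-k)/(2^n-k)$), then bound the idealized experiment by the same quantity because $x'$ is independent of the fresh $S\setminus X$. Your write-up is in fact a bit more careful than the paper's in spelling out the re-sampling order and the independence that makes the Jensen step go through.
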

\begin{proof}
Suppose that $S \subseteq \bit^n$ is a random subset of size $|S|=s$. We can model the quantum oracle algorithm $\mathcal{A}^{\oracle_S}$ on input $(x_1,\dots,x_k)$ as a sequence of oracle queries and unitary computations followed by a measurement. Thus, the final output state just before the measurement can be written as
$$
\ket{\Psi_q^S} = U_q \oracle_S U_{q-1} \dots U_1 \oracle_S U_0\ket{\psi_0}\ket{x_1,\dots,x_k} \,, 
$$
where $U_0,U_1,\dots,U_q$ are unitaries (possibly acting on additional workspace registers, which we omit above), and where $\ket{\psi_0}$ is some fixed initial state which is independent of $S$. 

Now let $X = \{x_1,\dots,x_k\} \subset S$. We now consider the state 
$$
\ket{\Psi_q^X} = U_q \oracle_X U_{q-1} \dots U_1 \oracle_X U_0\ket{\psi_0}\ket{x_1,\dots,x_k}.
$$
We now claim that the states $\ket{\Psi_q^S}$ and $\ket{\Psi_q^X}$ are sufficiently close.
From the definition of $\oracle_X$ and $\oracle_S$, we have that $\oracle_X(x)\neq \oracle_S(x)$ iff $x \in S\backslash X \subset \{0,1\}^{n}$. By the O2H Lemma (\Cref{lem:O2H}), we get
\begin{align*}
\E_{\substack{S \subseteq \bit^n, \,|S|=s\\
x_1,\dots,x_k \leftarrow S\\
(x_1,\dots,x_k) \in \mathrm{dist}(S,k)}}\left\| \ket{\Psi_q^S} - \ket{\Psi_q^X} \right\| 
&\leq 2q\E_{\substack{S \subseteq \bit^n, \,|S|=s\\
x_1,\dots,x_k \leftarrow S\\
(x_1,\dots,x_k) \in \mathrm{dist}(S,k)}}\sqrt{
\frac{1}{q} \sum_{i=0}^{q-1}\big\|\Pi_{S\backslash X} \ket{\Psi_{i}^X}\big\|^2}\\
&\leq 2q\sqrt{
\frac{1}{q}\sum_{i=0}^{q-1} \,\E_{\substack{S \subseteq \bit^n, \,|S|=s\\
x_1,\dots,x_k \leftarrow S\\
(x_1,\dots,x_k) \in \mathrm{dist}(S,k)}}\big\|\Pi_{S\backslash X} \ket{\Psi_{i}^X}\big\|^2} & \text{(Jensen's inequality)}\\
&= O\left(q \cdot \sqrt{\frac{s-k}{2^n}}\right).
\end{align*}
Therefore, the probability (over the choice of $S$ and $X$) that $\mathcal{A}^{\oracle_X}(x_1,\dots,x_k)$ succeeds is 
 at most the probability that $\mathcal{A}^{\oracle_S}(x_1,\dots,x_k)$ succeeds---up to an additive loss of
$O(q \cdot \sqrt{\frac{s-k}{2^n}})$. Thus,
\begin{align*}
&\Pr_{\substack{x_1,\dots,x_k \leftarrow S\\
(x_1,\dots,x_k) \in \mathrm{dist}(S,k)
}}\Pr_{\substack{S \subseteq \bit^n\\
|S|=s}}\Big[
x' \in S \, \wedge \, x' \neq x_i, \, \forall i \in [k] \,\, : \,\, x'\leftarrow \mathcal{A}^{\oracle_S}(x_1,\dots,x_k) 
\Big]\\
&\leq \Pr_{\substack{x_1,\dots,x_k \leftarrow S\\
(x_1,\dots,x_k) \in \mathrm{dist}(S,k)
}}\Pr_{\substack{S \subseteq \bit^n\\
|S|=s}}\Big[
x' \in S \, \wedge \, x' \neq x_i, \, \forall i \in [k] \,\, : \,\, x'\leftarrow \mathcal{A}^{\oracle_X}(x_1,\dots,x_k) 
\Big] + O\left(q \cdot \sqrt{\frac{s-k}{2^n}}\right)\\
&\leq O\left(q \cdot \sqrt{\frac{s-k}{2^n}} + \frac{s-k}{2^n}\right).
\end{align*}
where the last equality follows from the fact that
$$
\Pr_{\substack{x_1,\dots,x_k \leftarrow S\\
(x_1,\dots,x_k) \in \mathrm{dist}(S,k)
}}\Pr_{\substack{S \subseteq \bit^n\\
|S|=s}}\Big[
x' \in S \, \wedge \, x' \neq x_i, \, \forall i \in [k] \,\, : \,\, x'\leftarrow \mathcal{A}^{\oracle_X}(x_1,\dots,x_k) 
\Big] \leq \frac{s-k}{2^n}.
$$
This proves the claim.
\end{proof}

\end{document}